\documentclass{aims}
\pdfoutput=1
\usepackage[utf8]{inputenc}
\usepackage[T1]{fontenc}
\usepackage[shortlabels]{enumitem}
\usepackage{chngcntr}
\usepackage{txfonts}

\setcounter{page}{1}
\usepackage{cite}
\hyphenpenalty=10000

\newcommand{\doiurl}[1]{\url{https://doi.org/#1}}

\usepackage{amsmath,amssymb,amsthm,bm}
\numberwithin{equation}{section}
\usepackage[nointegrals]{wasysym}
\renewcommand{\bigcirc}{\ocircle}

\newcommand{\To}{\Rightarrow}
\newcommand{\N}{\mathbb{N}}
\newcommand{\Z}{\mathbb{Z}}
\newcommand{\R}{\mathbb{R}}

\newcommand{\simto}{\xrightarrow{\raisebox{-0.7ex}[0ex][0ex]{$\sim$}}}

\newcommand{\norm}[1]{\lVert#1\rVert}

\newcommand{\Open}{\mathcal{O}}
\DeclareMathOperator{\res}{res}
\DeclareMathOperator{\Orient}{Or}
\newcommand{\VectField}{\mathfrak{X}}
\newcommand{\Lap}{\Delta}
\usepackage{accents}
\newcommand*{\dt}[1]{\accentset{\mbox{\large\bfseries .}}{#1}}

\newcommand{\cat}[1]{\mathsf{#1}}
\newcommand{\CAT}[1]{\mathsf{#1}}
\newcommand{\catTwo}[1]{\bm{#1}}
\newcommand{\Set}{\CAT{Set}}
\newcommand{\Cat}{\CAT{Cat}}
\newcommand{\Cart}{\CAT{Cart}}
\newcommand{\SMC}{\CAT{SMC}}
\newcommand{\catSet}[1]{\cat{#1}\text{-}\Set}
\newcommand{\Diag}{\operatorname*{\CAT{Diag}}_{\rightarrow}}
\newcommand{\DiagOp}{\operatorname*{\CAT{Diag}}_{\leftarrow}}
\newcommand{\Vect}{\CAT{Vect}}
\newcommand{\Man}{\CAT{Man}}

\newcommand{\op}{\mathrm{op}}
\DeclareMathOperator{\Ob}{Ob}
\DeclareMathOperator{\Hom}{Hom}
\DeclareMathOperator{\cod}{cod}
\DeclareMathOperator{\id}{id}
\DeclareMathOperator{\El}{El}
\DeclareMathOperator*{\colim}{colim}
\DeclareMathOperator{\Disc}{Disc}
\DeclareMathOperator{\Sub}{Sub}
\DeclareMathOperator{\PSh}{PSh}
\DeclareMathOperator{\Sh}{Sh}
\DeclareMathOperator{\DOpf}{DOpf}
\newcommand{\sheaf}[1]{\mathcal{#1}}

\usepackage{subcaption}
\usepackage[capitalize,noabbrev,nameinlink]{cleveref}

\usepackage{bookmark}
\bookmarksetup{numbered,open}

\newtheorem{theorem}{Theorem}[section]
\newtheorem{proposition}[theorem]{Proposition}
\newtheorem{lemma}[theorem]{Lemma}

\theoremstyle{definition}
\newtheorem{definition}[theorem]{Definition}
\newtheorem{example}[theorem]{Example}
\newtheorem{remark}[theorem]{Remark}

\captionsetup[figure]{font=normalsize,labelfont=bf,singlelinecheck=true}
\captionsetup[table]{font=normalsize,labelfont=bf,singlelinecheck=true}

\usepackage{tikz}
\usetikzlibrary{fit,positioning,shapes.geometric}
\usetikzlibrary{cd}
\usepackage{quiver}
\tikzset{ %
  touch src/.style={shorten <=-4pt},
  touch tgt/.style={shorten >=-4pt}}
\tikzset{ %
  place/.style={ellipse,thick,draw=black!75,minimum size=5mm,inner sep=0.25mm},
  big place/.style={ellipse,thick,draw=black!75,minimum size=5mm,inner sep=0.25mm},
  transition/.style={rectangle,thick,draw=black!75},
  sum place/.style={circle,thick,draw=black!75,minimum size=2.5mm,inner sep=0}}
\tikzset{ %
  boxUWD/.style={rectangle,draw,rounded corners}
}

\begin{document}

\title{A diagrammatic view of differential equations in physics}
\author{%
  Evan Patterson\affil{1,}\corrauth,
  Andrew Baas\affil{2},
  Timothy Hosgood\affil{1} and
  James Fairbanks\affil{3}
}
\shortauthors{the Author(s)}

\address{%
  \addr{\affilnum{1}}{Topos Institute, Berkeley, CA, USA}
  \addr{\affilnum{2}}{Georgia Tech Research Institute, Atlanta, GA, USA}
  \addr{\affilnum{3}}{University of Florida, Gainesville, FL, USA}}

\corraddr{Email: evan@epatters.org.
}

\begin{abstract}
Presenting systems of differential equations in the form of diagrams has become common in certain parts of physics, especially electromagnetism and computational physics. In this work, we aim to put such use of diagrams on a firm mathematical footing, while also systematizing a broadly applicable framework to reason formally about systems of equations and their solutions. Our main mathematical tools are category-theoretic diagrams, which are well known, and morphisms between diagrams, which have been less appreciated. As an application of the diagrammatic framework, we show how complex, multiphysical systems can be modularly constructed from basic physical principles. A wealth of examples, drawn from electromagnetism, transport phenomena, fluid mechanics, and other fields, is included.
\end{abstract}

\keywords{Tonti diagrams; category-theoretic diagrams; partial differential equations; exterior calculus; computational physics; multiphysics}

\maketitle

\section{Introduction} \label{sec:introduction}

Presenting physical quantities and equations between them in the form of diagrams has a long tradition in physics and engineering, especially in those parts, such as electromagnetism, where the exterior calculus is commonly used. To a first approximation, these diagrams are directed graphs where the nodes represent physical quantities, such as fields or densities, and the arrows represent the action of operators on the quantities. The operators are usually differential operators such as the exterior derivative.

The most famous example, evocatively called ``Maxwell's house'' by Bossavit \cite{bossavit1998d}, depicts Maxwell's equations for electric and magnetic fields as a three-dimensional diagram. Two variants of Maxwell's house are reproduced from the literature in \cref{fig:maxwell-house-traditional}. In the exterior calculus formulation (\cref{subfig:maxwell-house-traditional-exterior}), the layout of the diagram is meaningful: (untwisted) time-dependent differential forms, such the electric field 1-form $E$ and magnetic field 2-form $B$, are placed on the left side, whereas twisted differential forms, such as the charge density 0-form $\widetilde \rho$ and current density 1-form $\widetilde J$, are placed on the right side. Spatial (exterior) derivatives go up or down the page, time derivatives go into or out of the page, and the Hodge star operators transforming between $k$-forms and twisted $(3-k)$-forms go left or right. For comparison, the classical formulation of Maxwell's equations using vector caclulus is shown in \cref{subfig:maxwell-house-traditional-vector}, where the spatial layout of the diagram is less meaningful.

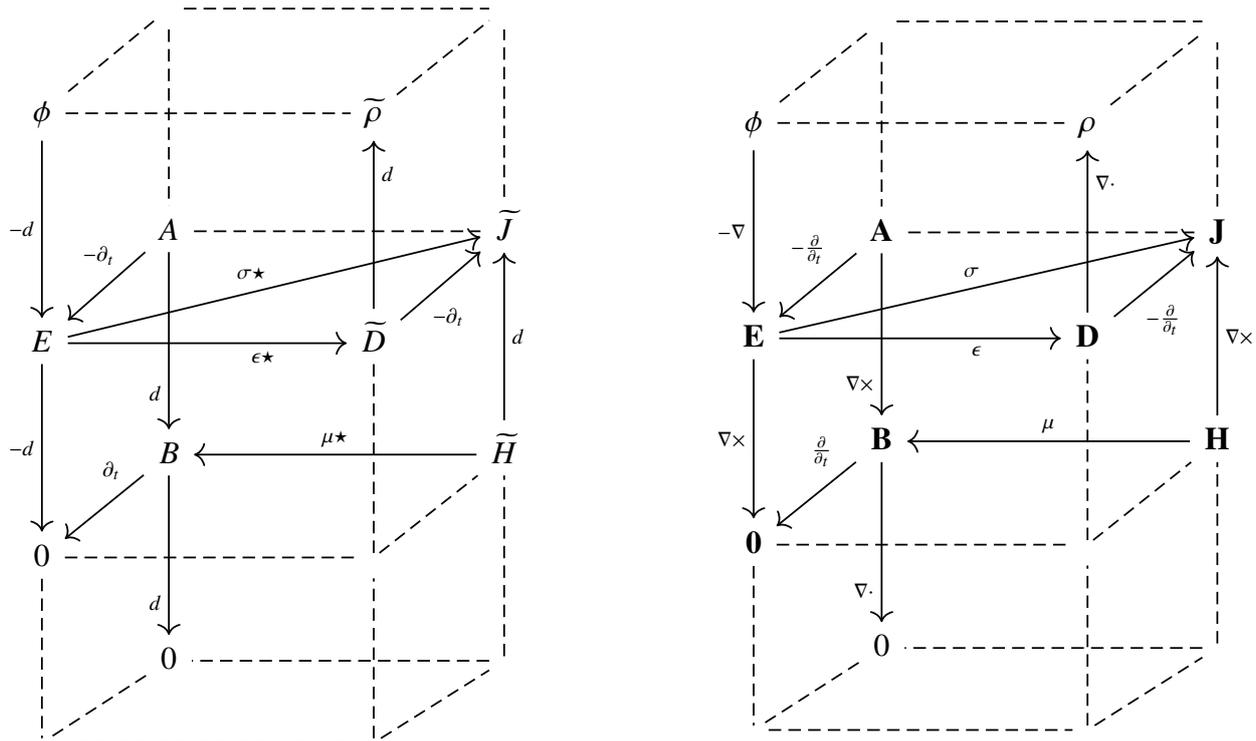
\begin{figure}[H]
    \centering
    \begin{subfigure}{0.45\textwidth}
    \[\begin{tikzcd}
    	& {} &&& {} \\
    	\phi &&& {\widetilde \rho} \\
    	& A &&& {\widetilde J} \\
    	E &&& {\widetilde D} \\
    	& B &&& {\widetilde H} \\
    	0 &&& {} \\
    	& 0 &&& {} \\
    	{} &&& {}
    	\arrow["{-\partial_t}"'{pos=0.3}, from=3-2, to=4-1]
    	\arrow["{\epsilon \star}"'{pos=0.7}, from=4-1, to=4-4]
    	\arrow[dashed, no head, from=3-2, to=3-5]
    	\arrow["{-\partial_t}"'{pos=0.3}, from=4-4, to=3-5]
    	\arrow["{-d}"', from=4-1, to=6-1]
    	\arrow["{-d}"', from=2-1, to=4-1]
    	\arrow[dashed, no head, from=1-2, to=2-1]
    	\arrow[dashed, no head, from=1-2, to=3-2]
    	\arrow["d"'{pos=0.8}, from=3-2, to=5-2]
    	\arrow["{\partial_t}"'{pos=0.2}, from=5-2, to=6-1]
    	\arrow["d"'{pos=0.8}, from=5-2, to=7-2]
    	\arrow["{\mu \star}"', from=5-5, to=5-2]
    	\arrow[dashed, no head, from=7-2, to=8-1]
    	\arrow["d"', from=5-5, to=3-5]
    	\arrow["d"'{pos=0.8}, from=4-4, to=2-4]
    	\arrow[dashed, no head, from=2-1, to=2-4]
    	\arrow[dashed, no head, from=1-2, to=1-5]
    	\arrow[dashed, no head, from=2-4, to=1-5]
    	\arrow[dashed, no head, from=3-5, to=1-5]
    	\arrow[dashed, no head, from=6-1, to=6-4]
    	\arrow[dashed, no head, from=6-4, to=5-5]
    	\arrow[dashed, no head, from=6-4, to=4-4]
    	\arrow[dashed, no head, from=6-1, to=8-1]
    	\arrow[dashed, no head, from=8-1, to=8-4]
    	\arrow[dashed, no head, from=8-4, to=6-4]
    	\arrow[dashed, no head, from=8-4, to=7-5]
    	\arrow[dashed, no head, from=7-5, to=5-5]
    	\arrow[dashed, no head, from=7-5, to=7-2]
    	\arrow["{\sigma \star}", from=4-1, to=3-5]
    \end{tikzcd}\]
    \caption{Maxwell's house in exterior calculus}
    \label{subfig:maxwell-house-traditional-exterior}
    \end{subfigure}
    \hfill
    \begin{subfigure}{0.45\textwidth}
    \[\begin{tikzcd}
    	& {} &&& {} \\
    	\phi &&& \rho \\
    	& {\mathbf{A}} &&& {\mathbf{J}} \\
    	{\mathbf{E}} &&& {\mathbf{D}} \\
    	& {\mathbf{B}} &&& {\mathbf{H}} \\
    	{\mathbf{0}} &&& {} \\
    	& 0 &&& {} \\
    	{} &&& {}
    	\arrow["{-\frac{\partial}{\partial_t}}"'{pos=0.3}, from=3-2, to=4-1]
    	\arrow["\epsilon"'{pos=0.7}, from=4-1, to=4-4]
    	\arrow[dashed, no head, from=3-2, to=3-5]
    	\arrow["{-\frac{\partial}{\partial_t}}"'{pos=0.3}, from=4-4, to=3-5]
    	\arrow["{\nabla \times}"', from=4-1, to=6-1]
    	\arrow["{-\nabla}"', from=2-1, to=4-1]
    	\arrow[dashed, no head, from=1-2, to=2-1]
    	\arrow[dashed, no head, from=1-2, to=3-2]
    	\arrow["{\nabla \times}"'{pos=0.8}, from=3-2, to=5-2]
    	\arrow["{\frac{\partial}{\partial_t}}"'{pos=0.2}, from=5-2, to=6-1]
    	\arrow["{\nabla \cdot}"'{pos=0.8}, from=5-2, to=7-2]
    	\arrow["\mu"', from=5-5, to=5-2]
    	\arrow[dashed, no head, from=7-2, to=8-1]
    	\arrow["{\nabla \times}"', from=5-5, to=3-5]
    	\arrow["{\nabla \cdot}"'{pos=0.8}, from=4-4, to=2-4]
    	\arrow[dashed, no head, from=2-1, to=2-4]
    	\arrow[dashed, no head, from=1-2, to=1-5]
    	\arrow[dashed, no head, from=2-4, to=1-5]
    	\arrow[dashed, no head, from=3-5, to=1-5]
    	\arrow[dashed, no head, from=6-1, to=6-4]
    	\arrow[dashed, no head, from=6-4, to=5-5]
    	\arrow[dashed, no head, from=6-4, to=4-4]
    	\arrow[dashed, no head, from=6-1, to=8-1]
    	\arrow[dashed, no head, from=8-1, to=8-4]
    	\arrow[dashed, no head, from=8-4, to=6-4]
    	\arrow[dashed, no head, from=8-4, to=7-5]
    	\arrow[dashed, no head, from=7-5, to=5-5]
    	\arrow[dashed, no head, from=7-5, to=7-2]
    	\arrow["\sigma", from=4-1, to=3-5]
    \end{tikzcd}\]
    \caption{Maxwell's house in vector calculus}
    \label{subfig:maxwell-house-traditional-vector}
    \end{subfigure}
    \caption{Traditional drawing of ``Maxwell's house,'' depicting Maxwell's equations for electromagnetism in matter. The formulation in exterior calculus (a) is adapted from Bossavit \cite[Figure 6]{bossavit1998d} and the formulation in vector calculus (b) from Cortes Garcia et al. \cite[Figure~1]{garcia2018}. Further variations on Maxwell's house in the literature include \cite[Diagram ELE3]{tonti2013}, \cite[Table III]{deschamps1981}, and \cite[Figure 8]{lacomba1991}.}
    \label{fig:maxwell-house-traditional}
\end{figure}

Similar diagrams presenting systems of equations in electromagnetics and other branches of physics have been proposed by many authors, including Tonti, Deschamps, and Bossavit \cite{tonti1972,tonti2013,deschamps1981,bossavit1988,bossavit1998d}. Enzo Tonti, in particular, has been an early and consistent advocate for diagrams as a systematic method to describe and classify physical theories. Such diagrams are therefore sometimes called ``Tonti diagrams'' \cite{bossavit1988}, \cite[\S 7E]{gross2004}.

Although it was observed early on that Tonti's diagrams resemble commutative diagrams in algebra~\cite{lacomba1991}, this analogy has never been made precise, and indeed the diagrams, as used in practice, seem to involve a certain degree of imprecision. Let us consider in more detail how \cref{fig:maxwell-house-traditional} encodes Maxwell's equations. An arrow $x \xrightarrow{f} y$ asserts the equation $y = f(x)$. For instance, the arrow $A \xrightarrow{d} B$ asserts that the magnetic field $B$, a 2-form, and the magnetic potential $A$, a 1-form, are related by the equation $B = d A$, where $d$ is the exterior derivative. Moreover, a node with multiple incoming arrows is implicitly the \emph{sum} of those values, so that the pair of arrows $\phi \xrightarrow{-d} E \xleftarrow{-\partial_t} A$ encodes the equation $E = -d\phi - \partial_t A$ that relates a time-dependent electric field to derivatives of the electric and magnetic potentials. Unfortunately, these conventions are not always used consistently, even within a single diagram. In \cref{fig:maxwell-house-traditional}, the current density $\widetilde J$, a twisted 2-form, has three incoming arrows and so, by the second convention, should satisfy the equation $\widetilde J = \sigma \star E - \partial_t \widetilde D + d \widetilde H$, but this is plainly wrong. Rather, the diagram's intended interpretation is that the current density obeys the constitutive equation $\widetilde J = \sigma \star E$, where $\sigma$ is the conductivity of the material, and also the Maxwell equation $\widetilde J = -\partial_t \widetilde D + d \widetilde H$. In Tonti's version of Maxwell's house, inconsistency is avoided but only through a more ad hoc and informal use of diagrams \cite[Diagram ELE3]{tonti2013}.

If one views diagrams merely as a visual aide for comprehending physical theories, treated formally as systems of equations, then any ambiguities in interpretation may seem a minor problem, but if one wishes to understand diagrams as a formal, systematic tool for presenting physical theories, as Tonti and others do \cite{tonti2013,alotto2010}, it becomes essential that the diagrams have a precise and unambiguous meaning. A first aim of this paper is to put diagrams of differential equations in physics on a rigorous footing, realizing them as well-defined algebraic and combinatorial objects.

Besides the intrinsic value of increased clarity, making diagrams into proper mathematical objects enables new ways of reasoning about them, some of which have no obvious analogue for systems of equations represented as symbolic expressions. These new capabilities all stem from having a meaningful notion of \emph{morphism} between diagrams. Morphisms of diagrams will be used to attach boundary conditions to systems, thus formalizing boundary value problems within the diagrammatic approach. More generally, diagram morphisms formalize relationships between different physical theories, ranging from simple inclusions and isomorphisms to more complex relations such as those between time-dependent and time-independent (steady-state) theories.

Naturally, given the central role played by diagrams in category theory, we will formalize diagrams presenting physical theories using category-theoretic ideas. A diagram representing a system of algebraic or differential equations will indeed be nothing other than a diagram in a suitable category $\cat{C}$ of spaces and operators (\cref{sec:diagrams,sec:diff-eqs}). As we will explain, a solution to the system of equations presented by such a diagram is then a lift of the diagram to a category of generalized elements of $\cat{C}$. Thus, solving equations amounts to solving a lifting problem of diagrams. More generally, we will see that solving a boundary value problem---a system of equations subject to boundary conditions---is the same as solving an extension-lifting problem of diagrams (\cref{sec:diagram-morphisms}). Through the influence of Steenrod \cite{steenrod1972}, extension and lifting problems involving geometrical spaces have become a mainstay in algebraic topology \cite{arkowitz2011,riehl2014}. Analogous problems involving categories instead are much less studied (with an exception being Spivak's framing of queries on relational databases as extension-lifting problems \cite{spivak2014}). We study how morphism of diagrams lift through discrete opfibrations, a concept which we also review (\cref{sec:lifting-properties}).

Stylized though it is, the formalism of diagrams in a category suffices to express several fundamental equations of mathematical physics, such as the diffusion equation, the wave equation, and the Maxwell-Faraday equations. However, many important systems, including the full Maxwell equations and the Navier-Stokes equations of fluid mechanics, require additional algebraic structure. The diagrammatic formalism can be extended to accommodate cartesian products (\cref{sec:cartesian-diagrams}) and tensor products (\cref{sec:monoidal-diagrams}). Cartesian products are essential in presenting Maxwell's equations in a manner that avoids the ambiguities of the traditional Maxwell's house (\cref{fig:maxwell-house-traditional}). Tensor products are useful for incorporating multilinear operators on differential forms, including the Lie derivative, interior product, and wedge product. Such operators are needed to formulate the equations of fluid mechanics, where an informal approach to diagrams in physics becomes more obviously untenable (cf. \cite[Diagrams FLU6 and FLU10]{tonti2013}).

Advances in computational physics have brought increasing attention to \emph{multiphysics} models: complex, often analytically intractable models that couple multiple physical phenomena within the same system \cite{keyes2013}. Conjugate heat transfer, for example, describes heat transfer from a solid body into a fluid flowing around the body, combining the physics of heat conduction and fluid mechanics. Diagram morphisms enable physical theories to be composed from basic building blocks, yielding a framework for specifying multiphysics systems that is at once precise and visually intuitive (\cref{sec:composing-diagrams}).

Throughout this paper, we emphasize the syntactical nature of diagrams, inasmuch as different (nonisomorphic) diagrams may still present physical theories that would usually be considered equivalent. This is why we are careful to say that diagrams \emph{present} physical theories, rather than that they \emph{are} physical theories, analogous to how a group or a category may be given different presentations by generators and relations. Although we avoid the confounding question of ``what exactly \emph{is} a physical theory?'', we study in detail the better-posed question of when two diagrams have interchangeable lifts, i.e., when the corresponding systems of equations have interchangeable solutions (\cref{sec:weak-equivalences}). The notion of diagram equivalence thus obtained is weaker than isomorphism of diagrams and is therefore called \emph{weak equivalence}. We present two sufficient conditions for a morphism of diagrams to be a weak equivalence: the first is based on the classical concept of an \emph{initial} functor, which we review; the second, more general condition, uses the notion of a \emph{relatively initial} functor, which seems to be original with this work.

{\bf Background.} We assume that the reader is fluent in the basic concepts of category theory, namely categories, functors, natural transformations, and limits and colimits, as described in introductory texts such as Leinster's \cite{leinster2014} or Riehl's \cite{riehl2016}. More advanced category-theoretic concepts will be introduced in a self-contained manner as they are needed. Some familiarity with differential geometry, particularly differential forms and exterior calculus, will be helpful in understanding the examples. Baez and Muniain elegantly exposit the exterior calculus in the context of electromagnetism \cite{baez1994}, whereas Abraham, Marsden, and Ratiu give a comprehensive treatment of exterior calculus including many applications to continuum mechanics \cite{abraham1988}.

{\bf Related work.} We have already indicated the origins of this work in physics, particularly electromagnetism and computational physics \cite{tonti1972,tonti2013,deschamps1981,bossavit1998d}. More recently, Lahtinen, Kotiuga, and Stenvall have advocated for greater use of category-theoretic methods in discretizing and simulating continuum physics \cite{lahtinen2018}.

In this paper, a central mathematical role is played by categories of diagrams. Despite originating with the founding document of category theory by Eilenberg and Mac Lane \cite[\S 23]{eilenberg1945}, diagram categories do not appear in the standard texts and have remained somewhat obscure in the literature. Early references are Anders Kock's PhD thesis \cite{kock1967} and the works (written in French) of Guitart \cite{guitart1973,guitart1974,guitart1977}. The study of diagram categories has recently been resumed by Peschke and Tholen \cite{peschke2020} and Perrone and Tholen \cite{perrone2021}.

Robinson's work on multi-model systems \cite{robinson2017} is superficially similar to ours inasmuch as systems of differential and difference equations are described using category-theoretic diagrams. However, rather than directly specifying equations via a diagram of spaces and operators as in Tonti's diagrams and this work, Robinson uses set-valued diagrams indexed by posets to encode the dependency relations between the variables in a set of equations, resulting in a significantly different formalism.

{\bf Conventions.} Given composable morphisms $x \xrightarrow{f} y \xrightarrow{g} z$ in a category, the composite is written in diagrammatic order as $f \cdot g$ or $fg$ or in conventional order as $g \circ f$. The word ``graph'' is used in the category theorist's sense to mean a directed graph, possibly with multiple edges and loops.

\section{Systems of equations as diagrams} \label{sec:diagrams}

We begin by explaining how systems of equations, as well as their solutions, can be described using category-theoretic diagrams. To make the ideas as transparent as possible, we work with linear difference equations, which are discretizations of linear differential equations that require nothing more than linear algebra to formulate. We pass from such algebraic equations to differential equations in the next section.

Recall that the term ``diagram'' has a precise meaning in category theory.

\begin{definition}[Diagram]
A \emph{diagram} in a category $\cat{C}$ is a functor $D: \cat{J} \to \cat{C}$, where $\cat{J}$, the \emph{shape} or \emph{indexing category} of the diagram, is a small category.
\end{definition}

Two common classes of diagrams are free diagrams and commutative diagrams. A diagram $D: \cat{J} \to \cat{C}$ is \emph{free} if the indexing category $\cat{J}$ is a free category, meaning that there exists a graph $G$ such that $\cat{J}$ is the path category generated by $G$. Note that this does not imply anything about the category $\cat{C}$. In our examples the diagrams will always happen to be free, but the theory does not rely on any assumption of freeness. A free diagram \emph{commutes} if, for any two paths in the generating graph with the same source and target, their images under the functor $D$ are equal morphisms in $\cat{C}$.

The fundamental interpretation to keep in mind is the following. From the perspective of equations, a commutative diagram asserts equations that hold \emph{universally}, whereas a diagram that does not commute presents equations that \emph{do not} generically hold but for which one can seek \emph{specific} solutions.

\begin{example}[Discrete heat equation] \label{ex:discrete-heat-equation}
The classical heat equation has a discrete analogue on graphs. Let $G$ be a \emph{symmetric weighted graph}: a diagram of sets and functions
\begin{equation*}
    \begin{tikzcd}
    V & E \arrow[l, shift right, "s"'] \arrow[l, shift left, "t"]
     \arrow[loop above, "i"] \arrow[r, "\mu"] & \R_{>0}
    \end{tikzcd} 
\end{equation*}
where $s,t$ are the source and target maps, $i$ is the edge involution, and $\mu$ is the edge weighting, satisfying $i^2 = \id_E$, $i s = t$, $i t = s$, and $i \mu = \mu$. Assume that $G$ has no isolated vertices and is \emph{locally finite}, meaning that every vertex has finitely many incident edges. For example, $G$ could be the $d$-dimensional integer lattice $\Z^d$, viewed as a symmetric weighted graph with all weights equal to 1.

In discrete time, a time-dependent real-valued function on the vertices of the graph is a vector $u \in \R^{\N \times V}=\Set(\N\times V,\R)$. The \emph{discrete time derivative} is the difference operator $\partial_n\colon\R^{\N\times V}\to\R^{\N\times V}$ given by
\begin{equation*}
    (\partial_n u)(n,x) := u(n+1, x) - u(n, x).
\end{equation*}
The \emph{discrete Laplace operator} $\Lap\colon\R^{\N\times V}\to\R^{\N\times V}$ is defined by
\begin{equation*}
    (\Lap u)(n,x) := \frac{1}{\mu(x)} \sum_{e \in s^{-1}(x)} u(n, t(e))\, \mu(e) - u(n, x)
    = \frac{1}{\mu(x)} \sum_{e \in s^{-1}(x)} (u(n,t(e)) - u(n,x))\, \mu(e),
\end{equation*}
where $\mu(x) := \sum_{e \in s^{-1}(x)} \mu(e)$ is the \emph{weight} of vertex $x \in V$ \cite[\S 1.4]{grigoryan2018}. (Since the graph $G$ is symmetric, the roles of $s$ and $t$ can be switched in these formulas.)

With this setup, the \emph{discrete heat equation} on the graph $G$ is
\begin{equation} \label{eq:discrete-heat}
    \partial_n u = \Lap u, \qquad u \in \R^{\N \times V}.
\end{equation}
When $G = \Z^d$, this equation is a discrete version of the heat equation in $\R^d$ \cite[\S 1.3]{lawler2010}. We can present the discrete heat equation via the diagram
\begin{equation} \label{eq:discrete-heat-diagram}
    \begin{tikzcd}
    \R^{\N \times V} \arrow[r, shift left, "\partial_n"] \arrow[r, shift right, "\Lap"']
    & \R^{\N \times V}
    \end{tikzcd}
\end{equation}
in the category $\Vect_\R$ of real vector spaces and linear maps.
This is a free diagram $D: \cat{J} \to \Vect_\R$ whose shape $\cat{J} := \{\bullet \rightrightarrows \bullet\}$ is the free category with two parallel arrows. The diagram does not commute, as the equation is not a tautology.
\end{example}

In the category of sets and functions, a solution to a system of equations presented by a diagram $D: \cat{J} \to \Set$ is a choice of elements $(x_j \in Dj)_{j\in\cat{J}}$ that is compatible with all morphisms in the diagram:
\begin{equation} \label{eq:diagram-lift-sets}
    (D f)(x_j) = x_k \qquad
    \text{ for all } j \xrightarrow{f} k \text{ in } \cat{J}.
\end{equation}
In general, the objects of a category do not have elements, so we speak instead of generalized elements when considering solutions for a diagram in an arbitrary category.

Recall that, for a category $\cat{C}$ and an object $S\in\cat{C}$, a \emph{generalized element of shape $S$ in $\cat{C}$} is simply a morphism $x: S \to X$ in $\cat{C}$. When $\cat{C}$ is a concrete category, it is usually possible to choose $S$ so that generalized elements of shape $S$ correspond to elements of the underlying sets. Alternatively, when $\cat{C}$ is a closed category, it is natural to take $S$ as the unit object of the closed structure. Exemplifying both situations, generalized elements $x: 1 \to X$ in $\Set$ correspond to elements $x \in X$, and generalized elements $v: \R \to V$ in $\Vect_\R$ correspond to vectors $v \in V$.

For any fixed category $\cat{C}$ and object $S \in \cat{C}$, generalized elements of shape $S$ are objects of the coslice category $S/\cat{C}$. The morphisms from $x: S \to X$ to $y: S \to Y$ in this category are morphisms $f: X \to Y$ in $\cat{C}$ forming a commutative triangle
\begin{equation*}
    \begin{tikzcd}
    	& S \\
    	X && Y
    	\arrow["x"', from=1-2, to=2-1]
    	\arrow["y", from=1-2, to=2-3]
    	\arrow["f", from=2-1, to=2-3]
    \end{tikzcd}.
\end{equation*}
The coslice category $S/\cat{C}$ will be suggestively denoted $\El_S (\cat{C})$ or, when the dependence on $S$ is clear from context, simply $\El(\cat{C})$.

Generalizing from diagrams in $\Set$, a solution (or family of solutions) to a system of equations presented by a diagram $D$ in $\cat{C}$ is a lift of the diagram $D$ to a diagram in $\El(\cat{C})$. The following definition makes this idea precise.

\begin{definition}[Lifting problem] \label{def:lifting-problem}
A \emph{lift} of a diagram $D: \cat{J} \to \cat{C}$ through a functor $\pi: \cat{E} \to \cat{C}$ is a diagram $\overline D: \cat{J} \to \cat{E}$ such that $\pi \circ \overline D = D$. Given a diagram $D$ in $\cat{C}$ and a functor $\pi: \cat{E} \to \cat{C}$, the \emph{lifting problem} is to find a lift of $D$ through $\pi$.
\begin{equation*}
    \begin{tikzcd}
    	& {\mathsf{E}} \\
    	{\mathsf{J}} & {\mathsf{C}}
    	\arrow["\pi", from=1-2, to=2-2]
    	\arrow["D"', from=2-1, to=2-2]
    	\arrow["{\overline D}", dashed, from=2-1, to=1-2]
    \end{tikzcd}
\end{equation*}
\end{definition}

\noindent In the present situation, $\cat{E} := \El_S(\cat{C})$ is a category of generalized elements of $\cat{C}$ and $\pi := \cod: \El_S(\cat{C}) \to \cat{C}$ is the forgetful functor sending a generalized element $x: S \to X$ to its codomain $X$.

To summarize, a diagram in $\cat{C}$ defines a system of equations involving the objects (``spaces'') and morphisms (``operators'') of $\cat{C}$. Solving the equations is the same as solving the lifting problem with respect to the functor $\pi = \cod: \El(\cat{C}) \to \cat{C}$.

\begin{example}[Solving the discrete heat equation]
A lift of the diagram (\ref{eq:discrete-heat-diagram})
\begin{equation*}
    \begin{tikzcd}
    \R^{\N \times V} \arrow[r, shift left, "\partial_n"] \arrow[r, shift right, "\Lap"']
    & \R^{\N \times V}
    \end{tikzcd}
\end{equation*}
in $\Vect_\R$ specifying the discrete heat equation to a diagram in $\El_\R(\Vect_\R)$ consists of two vectors $u, \dt u \in \R^{\N \times V}$ such that
\begin{equation*}
    \partial_n u = \dt u \qquad\text{and}\qquad \Lap u = \dt u.
\end{equation*}
Thus, we see that lifts of the diagram correspond to single vectors $u \in \R^{\N \times V}$ satisfying the discrete heat equation (\ref{eq:discrete-heat}). In this case, the solutions are easily characterized: once the initial data $u(0,-) \in \R^V$ is given, the solution is determined recursively by
\begin{equation*}
    u(n+1,x) = u(n,x) + \Lap u(n,x)
    = \frac{1}{\mu(x)} \sum_{e \in s^{-1}(x)} u(n, t(e))\, \mu(e).
\end{equation*}
Initial value problems are treated formally in \cref{sec:diagram-morphisms}.
\end{example}

We introduce a convention for drawing diagrams that makes the equations presented more transparent and more closely resemble the diagrams in the physics literature. Namely, nodes in the diagrams will be annotated with variables as well as the spaces they inhabit. For example, the discrete heat equation (\ref{eq:discrete-heat-diagram}) is redrawn as
\begin{equation} \label{eq:discrete-heat-diagram-el}
    \begin{tikzcd}
    u: \R^{\N \times V} \arrow[r, shift left, "\partial_n"] \arrow[r, shift right, "\Lap"']
    & \dt u: \R^{\N \times V}
    \end{tikzcd}.
\end{equation}
When presenting a physical theory diagrammatically, we will be deliberately ambiguous about whether the diagram is in $\cat{C}$ or $\El(\cat{C})$, just as a classical equation like ``$f(x) = y$'' is, without further context, ambiguous about whether $x$ and $y$ are indeterminates waiting to be filled or are specific values satisfying the equation. In the latter kind of diagram, the type-theoretic expression $x: X$ is interpreted as a generalized element $x: U \to X$ and an arrow $(x: X) \xrightarrow{f} (y:Y)$ as a morphism $X \xrightarrow{f} Y$ in $\cat{C}$ such that $x \cdot f = y$. For example, we can think of the drawing (\ref{eq:discrete-heat-diagram-el}) as defining a diagram $D: \cat{J} \to \Vect_\R$ with indexing category $\cat{J} = \{u \rightrightarrows \dt u\}$, having two objects named ``$u$'' and ``$\dt u$,'' and with assignments $D(u) = D(\dt u) = \R^{\N \times V}$; alternatively, the drawing (\ref{eq:discrete-heat-diagram-el}) can be seen as a diagram in $\El(\Vect_\R)$, involving two specific vectors $u$ and $\dt u$. We emphasize that this visual pun will not create in any ambiguity in the theoretical results, where it will always be clear to which categories the diagrams belong.

{\bf Connection with limits.} The limit of a diagram $D: \cat{J} \to \cat{C}$, when it exists, is a ``universal solution'' to the given equations. For many choices of category $\cat{C}$, including $\Set$ and $\Vect_\R$, the limit comprises \emph{all} solutions to the equations. One might expect that lifts and limits are related, and that is indeed the case, as we now explain.

Recall that a \emph{cone} over the diagram $D$ with \emph{apex} $S \in \cat{C}$ is a natural transformation $\lambda: \Delta_S \To D$, where $\Delta_S: \cat{J} \to \cat{C}$ is the constant diagram at $S$. Equivalently, a cone over $D$ is a natural transformation into $D$ whose domain diagram factors through the terminal category $\cat{1}$.
\begin{equation*}
    \begin{tikzcd}
    	& {\mathsf{J}} \\
    	{\mathsf{1}} && {\mathsf{C}}
    	\arrow[""{name=0, anchor=center, inner sep=0}, "{!}"', from=1-2, to=2-1]
    	\arrow["S"', from=2-1, to=2-3]
    	\arrow[""{name=1, anchor=center, inner sep=0}, "D", from=1-2, to=2-3]
    	\arrow["\lambda", shift right=2, shorten <=6pt, shorten >=6pt, Rightarrow, from=0, to=1]
    \end{tikzcd}
\end{equation*}
A \emph{limit} of the diagram $D$ is a cone over $D$ with the universal property of being terminal in the category of cones over $D$.

The following is immediate from the definitions.
\begin{proposition}[Lifts and cones] \label{prop:lifts-and-cones}
A lift of a diagram $D: \cat{J} \to \cat{C}$ through $\cod: \El_S(\cat{C}) \to \cat{C}$ is precisely a cone over $D$ with apex $S$.
\end{proposition}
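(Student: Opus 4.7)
The plan is to unpack both sides of the claimed identification and observe that they carry identical data; no substantive calculation is required, because the proposition is in effect a restatement of the definition of the coslice category in diagram-theoretic form.

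First I would make explicit what a lift $\overline D : \cat{J} \to \El_S(\cat{C})$ of $D$ through $\cod$ consists of. On objects, $\overline D(j)$ is an object of $\El_S(\cat{C}) = S/\cat{C}$, i.e., a morphism $\lambda_j : S \to X_j$ with codomain some $X_j \in \cat{C}$; the constraint $\cod \circ \overline D = D$ forces $X_j = Dj$, so the datum at $j$ reduces to a single morphism $\lambda_j : S \to Dj$. On morphisms, for each $f : j \to k$ in $\cat{J}$, $\overline D(f)$ is a commuting triangle in $\cat{C}$ with sides $\lambda_j$, $\lambda_k$ and some bottom edge $g : Dj \to Dk$ satisfying $g \circ \lambda_j = \lambda_k$; the same constraint forces $g = Df$, leaving only the equation $Df \circ \lambda_j = \lambda_k$. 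Functoriality of $\overline D$ is then automatic: identities and composites in $\El_S(\cat{C})$ are computed as in $\cat{C}$, so $\overline D(\id_j) = \id_{Dj}$ and $\overline D(gf) = D(g) \circ D(f)$ follow from the functoriality of $D$.

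Next I would unpack a cone $\lambda : \Delta_S \To D$: it consists of components $\lambda_j : S \to Dj$ for each $j \in \cat{J}$, together with a naturality square for each $f : j \to k$, which, since $\Delta_S(f) = \id_S$, degenerates to exactly $Df \circ \lambda_j = \lambda_k$. The data and equations are therefore identical to those extracted from a lift, and the assignment $\overline D \longleftrightarrow (\lambda_j)_{j \in \cat{J}}$ is the desired bijection.

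There is no real obstacle here; this is a bookkeeping exercise in coslice categories. The only point deserving care is the observation that a morphism in $\El_S(\cat{C})$ carries no data beyond its underlying morphism in $\cat{C}$, so that imposing $\cod \circ \overline D = D$ on morphisms really does pin $\overline D(f)$ down to $Df$ and leaves only the triangle condition $Df \circ \lambda_j = \lambda_k$ to verify.
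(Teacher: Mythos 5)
Your unpacking is correct and is exactly the argument the paper has in mind: the paper simply declares the proposition ``immediate from the definitions,'' and your bookkeeping---that the constraint $\cod \circ \overline D = D$ pins the codomains to $Dj$ and the underlying morphisms to $Df$, leaving only the triangle conditions $Df \circ \lambda_j = \lambda_k$, which are precisely the naturality squares of a cone---is the intended verification spelled out.
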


This perspective is useful to keep in mind, even if in subsequent sections we consider generalizations of the lifting problem that do not fit so neatly with cones and limits.

\section{Differential equations as diagrams} \label{sec:diff-eqs}

Physical theories from continuum mechanics, electromagnetics, and other parts of classical physics are expressed in terms of physical quantities defined on geometric spaces. Thus, to formulate the theories diagrammatically, we need a category accommodating two distinct kinds of entities: geometric spaces, which we take to be smooth manifolds, and quantities on those spaces, which will be differential forms and vector fields. In this section, we construct a category fulfilling this purpose; it is neither the most general nor the most elegant setting conceivable, but it suffices for the examples we will consider while requiring a minimum of nonstandard concepts.

A smooth manifold $M$, usually with boundary $\partial M$, serves as the spatial domain. In non-static situations, the time domain is the half-open interval $[0,\infty)$, and so the full space-time domain is $M \times [0,\infty)$. Unless $M$ is a manifold without boundary, the product $M \times [0,\infty)$ is not a manifold with boundary but rather a manifold with corners \cite{melrose1996}. We therefore work in the category of smooth manifolds with corners, denoted $\Man$, and take the word ``manifold'' without qualification to mean a smooth manifold, possibly with boundary or corners. Although manifolds with corners are perhaps not as well known as they could be, using them poses no serious difficulties. For example, any manifold with corners $M$ embeds as a submanifold of an open manifold $\widetilde{M}$ of the same dimension, and any smooth function on $M$ extends to a smooth function on an open neighborhood of $M$ in $\widetilde{M}$ \cite{michor2020}. More generally, any differential form on $M$ extends to a differential form on an open neighborhood \cite{gurer2019}.

Physical quantities are taken to be sections of sheaves on the spatial domain $M$ or the space-time domain $M \times [0,\infty)$. Loosely speaking, a sheaf on a topological space is a coherent family of quantities defined on open subsets that can be glued together from smaller subsets whenever they agree on the overlaps. In this paper, sheaves will be used more as a convenient language than as a technical tool, and it will suffice to recall the following terminology \cite{wedhorn2016}.

Let $X$ be a topological space and let $\Open(X)$ be the poset of open subsets of $X$, ordered by inclusion. A \emph{presheaf} on $X$ is a functor $\sheaf{F}: \Open(X)^\op \to \Set$. Explicitly, a presheaf on $X$ consists of, for each open subset $U \subseteq X$, a set $\sheaf{F}(U)$, whose elements $s \in \sheaf{F}(U)$ are called \emph{sections over $U$}, and a functorial assignment of, for each inclusion $V \subseteq U$, a function $\sheaf{F}(U) \to \sheaf{F}(V)$, called the \emph{restriction map} and denoted $\res_V^U$. The action of the restriction map on a section $s \in \sheaf{F}(U)$ is abbreviated as $s|_V := \res_V^U(s)$. A \emph{global section} is a section over the open subset $X$ itself. Finally, a \emph{sheaf} on $X$ is a presheaf $\sheaf{F}$ on $X$ satisfying the \emph{sheaf condition}: if $(s_i \in \sheaf{F}(U_i))_{i \in I}$, is a family of sections that agree on all pairwise intersections, i.e., such that
\begin{equation*}
    s_i|_{U_i \cap U_j} = s_j|_{U_i \cap U_j} \quad
    \text{ for all } i, j \in I,
\end{equation*}
then there exists a unique section $s \in \sheaf{F}(U)$ over the union $U := \bigcup_{i \in I} U_i$ such that $s|_{U_i} = s_i$ for all $i \in I$. The inclusion functor $\Sh(X) \hookrightarrow \PSh(X)$ has a left adjoint $(-)^\sharp: \PSh(X) \to \Sh(X)$, called \emph{sheafification}, that freely associates a sheaf to any presheaf.

Sheaves can also be defined in categories besides $\Set$. For any category $\cat{S}$, an \emph{$\cat{S}$-valued presheaf} on a topological space $X$ is a functor $\sheaf{F}: \Open(X)^\op \to \cat{S}$. An \emph{$\cat{S}$-valued sheaf} is such a presheaf that additionally satisfies a sheaf condition abstracting the one above: for any sub-poset $P$ of $\Open(X)$ that is closed under intersections, the diagram $\sheaf{F}|_P: P^\op \to S$ has as limit the cone with apex $\sheaf{F}(U)$ and legs $\res_V^U$, $V \in P$, where $U := \bigcup_{V \in P} V$. The $\cat{S}$-valued presheaves and sheaves on $X$ each form categories, denoted $\PSh(X, \cat{S})$ and $\Sh(X, \cat{S})$, respectively. The morphisms in both categories are natural transformations. The most important case for us is that of $\Vect_\R$-valued sheaves, or \emph{sheaves of real vector spaces}. The categories of $\Vect_\R$-valued presheaves and sheaves on a space $X$ are abbreviated as $\PSh_\R(X)$ and $\Sh_\R(X)$.
Note that, for an arbitrary category $\cat{S}$, the sheafification functor $(-)^\sharp:\PSh(X,\cat{S})\to\Sh(X,\cat{S})$ might not exist, but it does for the category $\cat{S} = \Vect_\R$. The technical condition that ensures this is the \emph{IPC property} \cite[Definition~3.1.10]{kashiwara2006}.

As the term ``section'' suggests, a principal source of sheaves of vector spaces in differential geometry is sheaves of sections of smooth vector bundles, such as the tangent or cotangent bundles. For the precise relation between sheaves and vector bundles, see \cite[Proposition 8.45]{wedhorn2016}. The following sheaves of vector spaces, in which $M$ is a smooth manifold of dimension $m$, are essential for our purposes:
\begin{itemize}[noitemsep]
\item $\VectField_M \in \Sh_\R(M)$: smooth vector fields on $M$
\item $\VectField_{t,M} \in \Sh_\R(M \times [0,\infty))$: smooth \emph{time-dependent} vector fields on $M$
\item $\Omega^k_M \in \Sh_\R(M)$, for $0 \leq k \leq m$: differential $k$-forms on $M$
\item $\Omega^k_{t,M} \in \Sh_\R(M \times [0,\infty))$, for $0 \leq k \leq m$: \emph{time-dependent} differential $k$-forms on $M$ (i.e., the $k$-forms on $M \times [0,\infty)$ not involving the 1-form $dt$)
\item $\widetilde \Omega^k_M \in \Sh_\R(M)$ and $\widetilde \Omega^k_{t,M} \in \Sh_\R(M \times [0,\infty))$, for $0 \leq k \leq m$: \emph{twisted} (time-independent and time-dependent) differential $k$-forms on $M$.
\end{itemize}
Note that $\Omega^0_M = \mathcal{C}^\infty_M$, the sheaf of smooth functions on $M$. When the manifold $M$ is clear from context, we will drop the subscript in the notation.

All of these objects are standard in differential geometry, except perhaps the twisted differential forms, which are better known in physics \cite{burke1985,bossavit1998d}. Unlike an ordinary form, a twisted form changes sign under an orientation-reversing change of coordinates; formally, an \emph{(untwisted) $k$-form} is a section of the $k$th-order exterior bundle $\Lambda^k T^* M$, whereas a \emph{twisted $k$-form} \cite[\S I.7]{bott1982} is a section of the tensor product bundle
\begin{equation*}
    \widetilde \Lambda^k T^*M := (\Lambda^k T^* M) \otimes \Orient(M). 
\end{equation*}
Here $\Orient(M)$ is the \emph{orientation bundle} of $M$, the real line bundle whose transition functions are given by the sign of the Jacobian determinant of the coordinate transition functions \cite{bott1982,michor2008}. Particularly important are the top-dimensional twisted forms, called \emph{densities}, which can be integrated over an arbitrary manifold, oriented or not. Physical quantities such as a mass density or charge density are indeed modeled by densities (cf. \cref{subfig:maxwell-house-traditional-exterior}).

Another class of sheaves, the constant sheaves, are useful in working with generalized elements of sheaves. For any object $S \in \cat{S}$, the \emph{constant presheaf} on $X$ is the presheaf $S_X: \Open(X)^\op \to \cat{S}$ with $S_X(U) := S$ for all open sets $U \subseteq X$ and all restriction maps equal to the identity on $S$. In general, the constant presheaf is not a sheaf; the \emph{constant sheaf} on $X$ is the sheafification $S_X^\sharp$, which exists in particular when $\cat{S}=\Vect_\R$.

\begin{proposition}[Constant sheaves and generalized elements] \label{prop:elements-of-sheaves}
    Let $\cat{S}$ be a category for which the sheafification functor $(-)^\sharp: \PSh(X,\cat{S}) \to \Sh(X,\cat{S})$ exists. For any object $S \in \cat{S}$ and any $\cat{S}$-valued sheaf $\sheaf{F}$ on a space $X$, the sheaf morphisms $S_X^\sharp \to \sheaf{F}$ are in natural bijection with generalized elements of $\sheaf{F}(X)$ of shape $S$.
    
    In particular, when $\sheaf{F}$ is a sheaf of real vector spaces, the sheaf morphisms $\R_X^\sharp \to \sheaf{F}$ are in natural bijection with global sections of $\sheaf{F}$.
\end{proposition}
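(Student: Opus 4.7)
The plan is to factor the desired bijection through presheaves, exploiting the sheafification/inclusion adjunction. Writing $\iota: \Sh(X,\cat{S}) \hookrightarrow \PSh(X,\cat{S})$ for the inclusion, the hypothesis that $(-)^\sharp$ exists means precisely that $(-)^\sharp \dashv \iota$. Hence for any sheaf $\sheaf{F}$,
\begin{equation*}
    \Hom_{\Sh(X,\cat{S})}(S_X^\sharp, \sheaf{F}) \;\cong\; \Hom_{\PSh(X,\cat{S})}(S_X, \iota\sheaf{F}),
\end{equation*}
naturally in $\sheaf{F}$. So it suffices to establish the result at the presheaf level, namely that presheaf maps $S_X \to \sheaf{F}$ are in natural bijection with morphisms $S \to \sheaf{F}(X)$ in $\cat{S}$.

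For this, I would unfold definitions. A presheaf morphism $\eta: S_X \to \sheaf{F}$ is a family of morphisms $\eta_U: S \to \sheaf{F}(U)$ indexed by open sets $U \subseteq X$, required to make the naturality square for every inclusion $V \subseteq U$ commute. Because every restriction map of $S_X$ is $\id_S$, this naturality condition collapses to
\begin{equation*}
    \eta_V \;=\; \res_V^U \circ \eta_U \qquad (V \subseteq U).
\end{equation*}
In one direction, send $\eta \mapsto \eta_X : S \to \sheaf{F}(X)$. In the other, given $f: S \to \sheaf{F}(X)$, define $\eta^f$ by $\eta^f_U := \res_U^X \circ f$ for each open $U$, and note that functoriality of $\sheaf{F}$ (i.e., $\res_V^U \circ \res_U^X = \res_V^X$) makes $\eta^f$ a presheaf morphism. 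These assignments are mutually inverse: starting from $\eta$, the identity $\eta_U = \res_U^X \circ \eta_X$ recovers $\eta$ from $\eta_X$, and starting from $f$, evaluating $\eta^f$ at $U = X$ returns $f$ since $\res_X^X = \id$.

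Naturality in $\sheaf{F}$ is immediate from the construction: a sheaf morphism $\varphi: \sheaf{F} \to \sheaf{G}$ acts on global sections by $\varphi_X$ and postcomposes with each $\eta_U$ via $\varphi_U$, and the naturality square of $\varphi$ against the inclusion $U \subseteq X$ shows both descriptions agree. Composing the two bijections gives the result, and the final sentence is just specialization to $\cat{S} = \Vect_\R$ with $S = \R$, since morphisms $\R \to V$ of real vector spaces correspond bijectively to vectors of $V$. I do not anticipate a real obstacle: everything reduces to the fact that the constant presheaf is the \emph{left} Kan extension of $S$ along the unique functor $\Open(X)^\op \to \cat{1}$, so $\Hom(S_X, -) \cong \Hom(S, -(X))$ by general adjunction nonsense, and sheafification then converts this into the statement about sheaves.
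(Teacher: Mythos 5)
Your proof is correct and follows essentially the same route as the paper's: first reduce to presheaf morphisms $S_X \to \sheaf{F}$ via the adjunction $(-)^\sharp \dashv \iota$, then observe that naturality against the inclusions $U \subseteq X$ forces every component to be $\res_U^X \circ \eta_X$, so the morphism is determined by its component at $X$. You spell out the inverse construction and the naturality in $\sheaf{F}$ a bit more explicitly than the paper does, but the argument is the same.
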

\begin{proof}
    The sheafification functor is defined to be left adjoint to the inclusion functor $\Sh(X,\cat{S}) \hookrightarrow \PSh(X,\cat{S})$. Thus, sheaf morphisms $S_X^\sharp \to \sheaf{F}$ are in natural bijection with presheaf morphisms $S_X \to \sheaf{F}$. The latter are natural transformations $\alpha: S_X \To \sheaf{F}$, where naturality implies that, for every open set $U \subseteq X$, the diagram
    \begin{equation*}
    \begin{tikzcd}
    	& S \\
    	{\sheaf{F}(X)} && {\sheaf{F}(U)}
    	\arrow["{\alpha_X}"', from=1-2, to=2-1]
    	\arrow["{\mathrm{res}_U^X}", from=2-1, to=2-3]
    	\arrow["{\alpha_U}", from=1-2, to=2-3]
    \end{tikzcd}
    \end{equation*}
    commutes. It follows that the natural transformation $\alpha$ is completely determined by its component $\alpha_X$, which is a generalized element of $\sheaf{F}(X)$ of shape $S$.
\end{proof}

We use this observation to formulate partial differential equations as lifting problems of diagrams.

\begin{example}[Maxwell-Faraday equations] \label{ex:maxwell-faraday}
Let $M$ be a smooth manifold. Phrased in exterior calculus, the \emph{Maxwell-Faraday equations} on the spatial domain $M$ are
\begin{align*}
    d E &= - \partial_t B \\
    d B &= 0,
\end{align*}
where $E \in \Omega^1_t(M)$ is the electric field as a time-dependent 1-form and $B \in \Omega^2_t(M)$ is the magnetic field as a time-dependent 2-form. This system of partial differential equations is encoded by the diagram
\begin{equation*}
\begin{tikzcd}
	{E: \Omega^1_t} & {\dt B: \Omega^2_t} \\
	& {B: \Omega^2_t} & {\Omega^3_t} & 0
	\arrow["{-d}", from=1-1, to=1-2]
	\arrow["d", from=2-2, to=2-3]
	\arrow["{\partial_t}"', from=2-2, to=1-2]
	\arrow[from=2-4, to=2-3]
\end{tikzcd}
\end{equation*}
in $\Sh_\R(M \times [0,\infty))$. In drawing the diagram, we follow the conventions described in \cref{sec:diagrams} and we also abbreviate the sheaf $\Omega_{t,M}^k$ on $M \times [0,\infty)$ as $\Omega_t^k$. The object labeled $0$ is the zero sheaf on $M \times [0,\infty)$, which is the zero object in the category of sheaves of vector spaces.

By \cref{prop:elements-of-sheaves}, a lift of the diagram to a diagram of generalized elements of shape $\R_{M \times [0,\infty)}^\sharp$ is a solution to the Maxwell-Faraday equations on $M$ for all time $t \geq 0$.
\end{example}

The Maxwell-Faraday equations comprise the non-metric half of Maxwell's equations, which make sense on any smooth manifold and are preserved by diffeomorphisms \cite{baez1994}. Most equations of mathematical physics, including the full Maxwell equations, require a metric and so must be defined on a Riemannian or semi-Riemannian manifold. The categorical constructions above carry over directly when the category of smooth manifolds is replaced by the category of Riemannian manifolds (with corners) and isometries, and in the following examples it should be clear from context whether we are working with smooth manifolds or Riemannian manifolds.

In the metric setting, we use the Hodge star operator in a form that is more common certain parts of physics than in mathematics \cite{burke1985,bossavit1998d,barham2021}. Let $M$ be a Riemmanian manifold of dimension $m$. For each $0 \leq k \leq m$, the Hodge star is an isomorphism of vector bundles $\Lambda^k T^* M \simto \widetilde \Lambda^{m-k} T^* M$ \cite[Definition~7.13]{ramanan2005}. Thus, the Hodge star induces an isomorphism of sheaves of sections
\begin{equation*}
    \star := \star_k: \Omega^k_M \simto \widetilde \Omega^{m-k}_M
\end{equation*}
that sends $k$-forms to twisted $(m-k)$-forms. There are several reasons to prefer this version of the Hodge star. Unlike the usual Hodge star that sends $k$-forms to (untwisted) $(m-k)$-forms, this one does not require the manifold to be oriented and so avoids invoking superfluous structure. More importantly, the distinction between straight and twisted forms is physically meaningful, which becomes especially apparent in computational physics. The Hodge star involving twisted forms permits a direct translation to the discrete exterior calculus \cite{hirani2003,desbrun2005}, where straight and twisted forms manifest as ``primal'' and ``dual'' forms on a discretized manifold.

Using the metric structure, we now present two of the most famous equations of mathematical physics---the diffusion and wave equations---as diagrams.

\begin{example}[Diffusion] \label{ex:diffusion}
Consider a substance diffusing on a three-dimensional Riemannian manifold $M$. Important physical quantities in this system are the concentration of the substance, $C \in \Omega_t^0(M)$, and the negative of the diffusion flux, $\phi \in \widetilde\Omega_t^2(M)$. According to \emph{Fick's first law} \cite[\S 1.2]{crank1975}, the negative diffusion flux is proportional to the derivative of the concentration:
\begin{equation*}
    \phi = k \star d C,
\end{equation*}
where the proportionality constant $k \in C^\infty(M)$, called the \emph{diffusivity}, is a property of the material. Applying a conservation of mass principle \cite[\S 1.3]{crank1975}, one obtains the \emph{diffusion equation}
\begin{equation*}
    \partial_t C = \star^{-1} d \phi = \star^{-1} d (k \star) d C,
\end{equation*}
which can be neatly expressed by the diagram
\begin{equation} \label{eq:diffusion-diagram}
    \begin{tikzcd}
    	{C: \Omega_t^0} & {\dt C: \Omega_t^0} & {d\phi: \widetilde \Omega_t^3} \\
    	{dC: \Omega_t^1} & {} & {\phi: \widetilde \Omega_t^2}
    	\arrow["d"', from=1-1, to=2-1]
    	\arrow["{\partial_t}", from=1-1, to=1-2]
    	\arrow["k\star"', from=2-1, to=2-3]
    	\arrow["d"', from=2-3, to=1-3]
    	\arrow["{\star^{-1}}"', from=1-3, to=1-2]
    \end{tikzcd}
\end{equation}
in $\Sh_\R(M \times [0,\infty))$.

When the diffusivity $k$ is constant throughout $M$, the diffusion equation reduces to 
\begin{equation*}
    \partial_t C = k \delta d C = k \Delta C,
\end{equation*}
where $\delta := \star^{-1} \circ d \circ \star$ is the \emph{codifferential} and $\Delta := \delta \circ d$ is the \emph{Laplace-Beltrami operator}. In this special case, the diffusion equation is also called the \emph{heat equation} (cf.\ \cref{ex:discrete-heat-equation}). It can be expressed as a simpler diagram:
\begin{equation} \label{eq:heat-diagram}
    \begin{tikzcd}
    	{C: \Omega_t^0} & {\dt C: \Omega_t^0}
    	\arrow["{k \Delta}"', shift right=1, from=1-1, to=1-2]
    	\arrow["{\partial_t}", shift left=1, from=1-1, to=1-2]
    \end{tikzcd}.
\end{equation}
The notion of a morphism of diagrams, introduced in the next section, will allow us to make precise the relationship between the two diagrams (\ref{eq:diffusion-diagram}) and (\ref{eq:heat-diagram}). They are not isomorphic but, provided that $k$ is constant, they are equivalent in a weaker sense which we will make precise.
\end{example}

\begin{example}[Waves]
The \emph{wave equation} on an $m$-dimensional Riemannian manifold $M$ is
\begin{equation*}
    \partial_t^2 u = c^2 \Delta u,
\end{equation*}
where $u \in \Omega^0_t(M)$ is the \emph{displacement}, $c^2$ is a positive constant (the \emph{squared wave speed}), and $\Delta := \delta\circ d$ is the Laplace-Beltrami operator \cite{vasy2008}. Introducing another variable, the \emph{velocity} $v \in \Omega^0_t(M)$, the wave equation reduces to a first-order system in time:
\begin{align*}
    \partial_t u &= v \\
    \partial_t v &= c^2 \Delta u.
\end{align*}
Expanding the Laplacian into its constituent operators, we can present the wave equation by the diagram
\begin{equation*}
    \begin{tikzcd}
    	{u: \Omega_t^0} & {v: \Omega_t^0} & {\dt v: \Omega_t^0} & {\widetilde \Omega_t^m} \\
    	{\Omega_t^1} &&& {\widetilde \Omega_t^{m-1}}
    	\arrow["{\partial_t}", from=1-1, to=1-2]
    	\arrow["d"', from=1-1, to=2-1]
    	\arrow["{\partial_t}", from=1-2, to=1-3]
    	\arrow["{c^2 \star}", from=2-1, to=2-4]
    	\arrow["d"', from=2-4, to=1-4]
    	\arrow["{\star^{-1}}"', from=1-4, to=1-3]
    \end{tikzcd}
\end{equation*}
in $\Sh_\R(M \times [0,\infty))$.
\end{example}

Initial and boundary value problems, formulated in the next section, involve not just quantities defined globally on a space-time manifold but also quantities restricted to the manifold's spatial or temporal boundary. In order to express such problems, the underlying geometric objects---smooth manifolds with corners---and the spaces of quantities on them---sheaves of vector spaces---must be assembled into a single category accommodating both kinds of entities. For this, indexed categories and the Grothendieck construction are useful devices.

Given a category $\cat{J}$, a \emph{$\cat{J}$-indexed category} is a (pseudo)functor $\cat{X}: \cat{J}^\op \to \Cat$, thought of as a family of categories $\cat{X}(j)$ contravariantly indexed by $j \in \cat{J}$. The Grothendieck construction assembles the categories $\cat{X}(j)$, $j \in \cat{J}$, into a single category $\int \cat{X}$ with objects $\Ob(\int \cat{X}) := \sum_{j \in \cat{J}} \Ob \cat{X}(j)$. The morphisms in this category can be defined covariantly or contravariantly, depending on the form of the construction \cite[\S 10.5]{peschke2020}, \cite{spivak2019}. We take the covariant Grothendieck construction of $\cat{X}$, viewed as a covariant functor from $\cat{J}^\op$ to $\Cat$. The resulting category $\int \cat{X}$ has, as objects, pairs of objects $(j,x)$, where $j \in \cat{J}$ and $x \in \cat{X}(j)$, and, as morphisms $(j,x) \to (k,y)$, pairs of morphisms $k \xrightarrow{f} j$ in $\cat{J}$ and $\cat{X}(f)(x) \xrightarrow{\phi} y$ in $\cat{X}(k)$. Moreover, the category $\int \cat{X}$ is equipped with a canonical projection $\int X \to \cat{J}^\op$.

To apply this construction to our setting, let $M$ be a manifold and let $\Sub(M)$ be the poset of embedded, not necessarily open, submanifolds of $M$. Given a suitable category $\cat{S}$, such as $\Set$ or $\Vect_\R$, an indexed category $\Sh_{M,\cat{S}}: \Sub(M)^\op \to \Cat$ is defined on submanifolds $U \subseteq M$ by $\Sh_{M,\cat{S}}(U) := \Sh(U,\cat{S})$ and on inclusions $V \hookrightarrow U$ by restriction of sheaves,
\begin{equation*}
    \Sh_{M,\cat{S}}(V \hookrightarrow U): \sheaf{F} \in \Sh(U,\cat{S}) \mapsto \sheaf{F}|_V \in \Sh(V,\cat{S}).
\end{equation*}
Restriction of a sheaf is a particular case of the inverse image construction \cite[Definition 3.46]{wedhorn2016}. In fact, a greatly enlarged indexed category $\Sh_\cat{S}: \Man^\op \to \Cat$ sends each manifold $M$ to the category of sheaves $\Sh(M,\cat{S})$ and each smooth map $f: M \to N$ to the inverse image functor $f^{-1}: \Sh(N,\cat{S}) \to \Sh(M,\cat{S})$. The morphisms in the resulting category $\int \Sh_{\cat{S}}$ have been called \emph{sheaf cohomomorphisms} \cite[\S I.4]{bredon1997}. For our purposes, it is slightly more convenient to work with the $\Sub(M)$-indexed category $\Sh_{M,\cat{S}}$ than the $\Man$-indexed category $\Sh_{\cat{S}}$.

A good setting in which to formulate boundary value problems is the Grothendieck construction of the indexed category $\Sh_{M,\R} := \Sh_{M,\Vect_\R}$. This category is described explicitly as follows.

\begin{definition}[Category of sheaves on submanifolds] \label{def:category-sheaves-submanifolds}
    For any manifold $M$, the category $\int \Sh_{M,\R}$ has
    \begin{itemize}[nosep]
        \item as objects, an embedded submanifold $U$ of $M$, not necessarily open, together with a sheaf $\sheaf{F} \in \Sh_\R(U)$ of real vector spaces on $U$; and
        \item as morphisms $(U, \sheaf{F}) \to (V, \sheaf{G})$, an inclusion $V \hookrightarrow U$ together with a morphism of sheaves $\sheaf{F}|_V \to \sheaf{G}$.
    \end{itemize}
    Composition and identities are defined as usual in the Grothendieck construction.
\end{definition}

Two classes of morphisms in $\int \Sh_{M,\R}$ will be used routinely. First are the morphisms of form $(\id_U, \phi): (U, \sheaf{F}) \to (U, \sheaf{G})$, where the submanifold $U \subseteq M$ is fixed and $\phi$ is a morphism of sheaves; these are abbreviated as $\phi: \sheaf{F} \to \sheaf{G}$. Very often the map $\phi$ is a linear differential operator, which are characterized within the class of the sheaf morphisms by the \emph{Peetre theorem} \cite{peetre1960,kolar2013}. Second, for any submanifolds $V \subseteq U$ of $M$, pulling back differential $k$-forms along the inclusion defines a morphism $(V \hookrightarrow U, \Omega_U^k\big|_V \to \Omega_V^k): (U, \Omega_U^k) \to (V, \Omega_V^k)$ in $\int \Sh_{M,\R}$, which may be denoted $\res_V^U$ or $\res_V$. Time-dependent and twisted differential forms can be restricted similarly.

As before, generalized elements in the category $\int \Sh_{M,\R}$ correspond to sections of sheaves:

\begin{proposition} \label{prop:elements-of-restricted-sheaves}
    For any manifold $M$, submanifold $U \subseteq M$, and sheaf $\sheaf{F} \in \Sh_\R(U)$, the morphisms $(M,\R_M^\sharp) \to (U,\sheaf{F})$ in $\int \Sh_{M,\R}$ are in natural bijection with sections of $\sheaf{F}$ on $U$.
\end{proposition}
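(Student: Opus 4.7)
The plan is to unwind the definition of the Grothendieck construction and reduce the statement to \cref{prop:elements-of-sheaves}. By \cref{def:category-sheaves-submanifolds}, a morphism $(M,\R_M^\sharp) \to (U,\sheaf{F})$ in $\int \Sh_{M,\R}$ consists of the inclusion $U \hookrightarrow M$ (which is determined by the given submanifold) together with a morphism of sheaves $\R_M^\sharp\big|_U \to \sheaf{F}$ on $U$. So the content of the statement is the identification of such sheaf morphisms with sections $\sheaf{F}(U)$.

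The key intermediate step is to identify the restriction $\R_M^\sharp\big|_U$ with the constant sheaf $\R_U^\sharp$ on $U$. Since restriction along the inclusion is the inverse image functor and the inverse image is a left adjoint, it commutes with colimits and in particular with sheafification; it therefore suffices to observe that the restriction of the constant \emph{presheaf} $\R_M$ to $U$ is, tautologically, the constant presheaf $\R_U$, because both assign $\R$ to every open set with identity restriction maps. Sheafifying yields the natural isomorphism $\R_M^\sharp\big|_U \cong \R_U^\sharp$.

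Having made this identification, I would invoke \cref{prop:elements-of-sheaves} applied to the sheaf $\sheaf{F}$ on $U$ with shape object $S = \R$: the sheaf morphisms $\R_U^\sharp \to \sheaf{F}$ are in natural bijection with generalized elements of $\sheaf{F}(U)$ of shape $\R$, i.e., with sections of $\sheaf{F}$ over $U$. Composing the two bijections gives the desired correspondence, and naturality in $\sheaf{F}$ is inherited from the naturality statements of each step.

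The only non-routine point is the compatibility of sheafification with restriction, but this is standard once one recognizes restriction as an inverse image functor. Everything else is bookkeeping about the Grothendieck construction and an appeal to the previous proposition.
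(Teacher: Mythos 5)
Your proposal is correct and follows essentially the same route as the paper: identify the restriction $\R_M^\sharp\big|_U$ with the constant sheaf $\R_U^\sharp$ and then apply \cref{prop:elements-of-sheaves}. The only difference is that the paper cites this identification as a known fact about inverse images of constant sheaves (with a reference), whereas you sketch a proof of it via compatibility of sheafification with the inverse image functor; both are fine.
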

\begin{proof}
    Inverse images send constant sheaves to constant sheaves \cite[Example 3.56]{wedhorn2016}. In particular, the restriction to $U$ of the constant sheaf $\R_M^\sharp$ is the constant sheaf $\R_U^\sharp$. The statement now follows from \cref{prop:elements-of-sheaves}.
\end{proof}

Consequently, differential equations can be cast as lifting problems for diagrams in this category. Lifting a diagram $D: \cat{J} \to \int \Sh_{M,\R}$ to a diagram of generalized elements of shape $(M, \R_M^\sharp)$ amounts to choosing, for each $j \in \cat{J}$, a section
\begin{align*}
    s_j \in \sheaf{F}_j(U_j) &\qquad\text{where}\qquad
    (U_j, \sheaf{F}_j) := Dj
\intertext{such that, for every morphism $f: j \to k$ in $\cat{J}$, we have}
    \phi_{U_k}(s_j|_{U_k}) = s_k &\qquad\text{where}\qquad
    \big(U_k \hookrightarrow U_j, \sheaf{F}_j\big|_{U_k} \xrightarrow{\phi} \sheaf{F}_k\big) := Df.
\end{align*}
This system of equations should be compared with the simpler one for set-theoretic diagrams in \cref{eq:diagram-lift-sets}.

\section{Morphisms of diagrams and boundary value problems} \label{sec:diagram-morphisms}

Much of the value of formalizing diagrams in physics becomes apparent only after introducing a notion of \emph{morphism} between diagrams. It has long been known (even if not widely appreciated) that diagrams in a given category themselves form a category. In fact, they form a category in two different ways, reflecting the duality between limits and colimits.

\begin{definition}[Categories of diagrams] \label{def:diagram-categories}
For any category $\cat{C}$, the category $\DiagOp(\cat{C})$ has as objects, the diagrams $(\cat{J},D)$ in $\cat{C}$ (namely $D:\cat{J}\to\cat{C}$), and as morphisms $(\cat{J},D)\to(\cat{J}',D')$, a functor $R: \cat{J}' \to \cat{J}$ in the backward direction together with a natural transformation $\rho: D \circ R \To D'$.
\begin{equation*}
    \begin{tikzcd}
    	{\mathsf{J}} && {\mathsf{J}'} \\
    	& {\mathsf{C}}
    	\arrow["R"', from=1-3, to=1-1]
    	\arrow[""{name=0, anchor=center, inner sep=0}, "D"', from=1-1, to=2-2]
    	\arrow[""{name=1, anchor=center, inner sep=0}, "{D'}", from=1-3, to=2-2]
    	\arrow["\rho", shorten <=6pt, shorten >=6pt, Rightarrow, from=0, to=1]
    \end{tikzcd}
\end{equation*}
Composition of morphisms $(\cat{J}, D) \xrightarrow{(R,\rho)} (\cat{J}',D') \xrightarrow{(S,\tau)} (\cat{J}'', D'')$ in $\DiagOp(\cat{C})$ is defined by the pasting diagram
\begin{equation*}
\begin{tikzcd}
	{\mathsf{J}} && {\mathsf{J}'} && {\mathsf{J}''} \\
	&& {\mathsf{C}}
	\arrow["R"', from=1-3, to=1-1]
	\arrow[""{name=0, anchor=center, inner sep=0}, "D"', from=1-1, to=2-3]
	\arrow[""{name=1, anchor=center, inner sep=0}, "{D'}"{description}, from=1-3, to=2-3]
	\arrow["S"', from=1-5, to=1-3]
	\arrow[""{name=2, anchor=center, inner sep=0}, "{D''}", from=1-5, to=2-3]
	\arrow["\rho", shorten <=6pt, shorten >=6pt, Rightarrow, from=0, to=1]
	\arrow["\tau", shorten <=6pt, shorten >=6pt, Rightarrow, from=1, to=2]
\end{tikzcd}
\end{equation*}
and the identity morphism on $(\cat{J}, D)$ is $(\id_{\cat{J}}, \id_D)$.

Another category of diagrams, denoted $\Diag(\cat{C})$, is defined similarly, except that the morphisms from $D: \cat{J} \to \cat{C}$ to $D': \cat{J}' \to \cat{C}$ consist of a functor $R: \cat{J} \to \cat{J}'$ in the forward direction together with a natural transformation $\rho: D \To D' \circ R$.
\begin{equation*}
    \begin{tikzcd}
    	{\mathsf{J}} && {\mathsf{J}'} \\
    	& {\mathsf{C}}
    	\arrow["R", from=1-1, to=1-3]
    	\arrow[""{name=0, anchor=center, inner sep=0}, "D"', from=1-1, to=2-2]
    	\arrow[""{name=1, anchor=center, inner sep=0}, "{D'}", from=1-3, to=2-2]
    	\arrow["\rho", shorten <=6pt, shorten >=6pt, Rightarrow, from=0, to=1]
    \end{tikzcd}
\end{equation*}
There are forgetful functors $\DiagOp(\cat{C}) \to \Cat^\op$ and $\Diag(\cat{C}) \to \Cat$, namely the domain functors, that discard all but the indexing category of a diagram.
\end{definition}

The two categories of diagrams are motivated by their fundamental connection with limits and colimits.
\begin{proposition}[Functorality of limits] \label{prop:functorality-limits}
If $\cat{C}$ is a complete category, endowed with a choice of limit cone for every diagram, then the operation of taking limits extends to a functor
\begin{equation*}
    \lim = \lim_{\leftarrow}: \DiagOp(\cat{C}) \to \cat{C}.
\end{equation*}
Dually, if $\cat{C}$ is a cocomplete category, endowed with a choice of colimits, then the operation of taking colimits extends to a functor
\begin{equation*}
    \colim = \lim_{\rightarrow}: \Diag(\cat{C}) \to \cat{C}.
\end{equation*}
\end{proposition}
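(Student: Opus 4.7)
The plan is to define $\lim$ on morphisms via the universal property of limits, then verify functoriality using uniqueness of the induced factorization. I treat only the $\DiagOp$ case, since the $\Diag$ case is dual in the strict sense of reversing all $1$- and $2$-cells.

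\textbf{Action on objects and morphisms.} On objects, set $\lim(\cat{J},D) := \lim D$, the apex of the chosen limit cone $\lambda^D : \Delta_{\lim D} \To D$. Given a morphism $(R,\rho) : (\cat{J},D) \to (\cat{J}',D')$ in $\DiagOp(\cat{C})$, whiskering $\lambda^D$ on the right by $R : \cat{J}' \to \cat{J}$ yields a natural transformation $\lambda^D R : \Delta_{\lim D} \circ R \To D \circ R$. Because $\Delta_{\lim D} \circ R = \Delta_{\lim D}$ as functors $\cat{J}' \to \cat{C}$, this is a cone over $D \circ R$ with apex $\lim D$. Pasting with $\rho : D \circ R \To D'$ gives a cone
\[
    \rho \cdot (\lambda^D R) : \Delta_{\lim D} \To D'
\]
over $D'$ with apex $\lim D$. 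The universal property of the limit cone $\lambda^{D'}$ then produces a unique morphism $\lim D \to \lim D'$ factoring this cone through $\lambda^{D'}$; we define $\lim(R,\rho)$ to be this morphism.

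\textbf{Identities and composition.} For the identity morphism $(\id_\cat{J}, \id_D)$, the construction yields the cone $\id_D \cdot (\lambda^D \id_\cat{J}) = \lambda^D$ itself, whose unique factorization through $\lambda^D$ is $\id_{\lim D}$. For composition, consider $(\cat{J},D) \xrightarrow{(R,\rho)} (\cat{J}',D') \xrightarrow{(S,\tau)} (\cat{J}'',D'')$, whose composite in $\DiagOp(\cat{C})$ is $(R \circ S,\; \tau \cdot (\rho S))$. The plan is to show both the composite $\lim(S,\tau) \circ \lim(R,\rho)$ and the directly-induced map $\lim(R \circ S, \tau \cdot (\rho S))$ induce the \emph{same} cone from $\lim D$ to $D''$ through $\lambda^{D''}$; uniqueness of the factorization then forces equality. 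Explicitly, postcomposing $\lim(R,\rho)$ with $\lim(S,\tau)$ and then with $\lambda^{D''}$, and chasing the defining commutation squares at each component, produces the cone $\tau \cdot (\rho S) \cdot (\lambda^D (R \circ S))$, which is exactly the cone defining $\lim(R \circ S, \tau \cdot (\rho S))$. This is a short naturality-and-whiskering calculation, best organized as a single pasting diagram.

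\textbf{Dual case.} For $\colim$, a morphism $(R,\rho) : (\cat{J},D) \to (\cat{J}',D')$ in $\Diag(\cat{C})$ has $R : \cat{J} \to \cat{J}'$ and $\rho : D \To D' \circ R$. Given the colimit cocone $\mu^{D'} : D' \To \Delta_{\colim D'}$, whiskering with $R$ and precomposing $\rho$ gives a cocone $(\mu^{D'} R) \cdot \rho : D \To \Delta_{\colim D'}$ under $D$ with coapex $\colim D'$, which factors uniquely through $\mu^D$ to yield $\colim(R,\rho) : \colim D \to \colim D'$. Functoriality follows by exactly the dual argument.

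\textbf{Main obstacle.} There is no deep obstacle: the proof is a bookkeeping exercise in $2$-categorical pasting plus one invocation of uniqueness. The only point that requires attention is keeping the variance straight --- the $\op$ in $\DiagOp$ and the backward functor $R : \cat{J}' \to \cat{J}$ are precisely what makes whiskering $\lambda^D$ by $R$ yield a cone over $D \circ R$, which then composes with $\rho$ in the correct direction to give a cone over $D'$. Getting this variance wrong is the only real pitfall.
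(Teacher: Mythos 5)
Your construction is exactly the paper's: whisker the chosen limit cone by $R$, paste with $\rho$ to obtain a cone over $D'$ with apex $\lim D$, and factor through the limit of $D'$ via the universal property, with functoriality following from uniqueness of the factorization. The paper only sketches the identity and composition checks that you spell out, but the approach is the same.
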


\noindent This statement appears already in Eilenberg and Mac Lane's original document about category theory \cite[\S 23]{eilenberg1945}, albeit restricted to ``inverse limits'' and ``direct limits'' since the terminology surrounding limits and colimits had not yet reached its modern form. We sketch the proof in the case of limits because it is useful in its own right.

\begin{proof}
A morphism $(R,\rho): (\cat{J}, D) \to (\cat{J}', D')$ in $\DiagOp(\cat{C})$ sends cones over $D$ to cones over $D'$, preserving the apex: if $\lambda$ is a cone over $D$ with apex $S \in \cat{C}$, then a cone $\lambda'$ over $D'$ with apex $S$ is defined by the pasting diagram:
\begin{equation} \label{eq:diagram-hom-cones}
    \begin{tikzcd}
    	& {\mathsf{J}} && {\mathsf{J}'} \\
    	{\mathsf{1}} && {\mathsf{C}}
    	\arrow[""{name=0, anchor=center, inner sep=0}, "D"{description}, from=1-2, to=2-3]
    	\arrow["R"', from=1-4, to=1-2]
    	\arrow[""{name=1, anchor=center, inner sep=0}, from=1-2, to=2-1]
    	\arrow["S"', from=2-1, to=2-3]
    	\arrow[""{name=2, anchor=center, inner sep=0}, "{D'}", from=1-4, to=2-3]
    	\arrow["\lambda"', shorten <=6pt, shorten >=6pt, Rightarrow, from=1, to=0]
    	\arrow["\rho", shorten <=6pt, shorten >=6pt, Rightarrow, from=0, to=2]
    \end{tikzcd}
    =:
    \begin{tikzcd}
    	& {\mathsf{J}'} \\
    	{\mathsf{1}} && {\mathsf{C}}
    	\arrow[""{name=0, anchor=center, inner sep=0}, from=1-2, to=2-1]
    	\arrow["S"', from=2-1, to=2-3]
    	\arrow[""{name=1, anchor=center, inner sep=0}, "{D'}", from=1-2, to=2-3]
    	\arrow["{\lambda'}"', shorten <=6pt, shorten >=6pt, Rightarrow, from=0, to=1]
    \end{tikzcd}.
\end{equation}
In particular, the morphism $(R,\rho)$ acts on the limit cone $\lambda$ over $D$, having apex $\lim D$, to give a cone $\lambda'$ over $D'$ with apex $\lim D$. The universal property of the limit of $D'$ then yields a canonical morphism $\lim D \to \lim D'$ in $\cat{C}$. This construction defines a functor $\lim: \DiagOp(\cat{C}) \to \cat{C}$, whose functorality follows from the universal property.
\end{proof}

In view of the connection between lifts and cones (\cref{prop:lifts-and-cones}), the preceding \cref{prop:functorality-limits} shows that, of the two categories of diagrams, the category $\DiagOp(\cat{C})$ should be preferred when interpreting diagrams as systems of equations. Specifically, given an object $S \in \cat{C}$, \cref{eq:diagram-hom-cones} in the proof shows that a morphism $D \to D'$ in $\DiagOp(\cat{C})$ pushes forward any lift of $D$ through $\pi: \El_S(\cat{C}) \to \cat{C}$ to a lift of $D'$ through $\pi$. In other words, morphisms in $\DiagOp(\cat{C})$ carry solutions of one system of equations to solutions of another. Morphisms in $\Diag(\cat{C})$ generally do not have this property. Thus, when we speak of the ``category of diagrams'' or a ``morphism of diagrams'' without qualification, we refer to the category $\DiagOp(\cat{C})$.

Certain classes of diagram morphisms are worth singling out. A morphism $(R,\rho)$ in $\Diag(\cat{C})$ or $\DiagOp(\cat{C})$ is called \emph{strong} if the natural transformation $\rho$ is a natural isomorphism and \emph{strict} if it is an identity. Thus, the strict morphisms in $\Diag(\cat{C})$ comprise the slice category $\Cat/\cat{C}$ and the strict morphisms in $\DiagOp(\cat{C})$ comprise the opposite category $(\Cat / \cat{C})^\op$.

The next several examples illustrate how morphisms of diagrams can be used to formalize relations between physical theories presented diagrammatically.

\begin{example}[Static Maxwell-Faraday equations] \label{ex:static-maxwell-faraday}
In the static (time-independent) case, the Maxwell-Faraday equations on a three-dimensional manifold $M$ take the simple form
\begin{equation} \label{eq:maxwell-faraday-static}
    \begin{tikzcd}
    	{E: \Omega^1} & {\Omega^2} & 0 \\
    	& {B: \Omega^2} & {\Omega^3} & 0
    	\arrow["{-d}", from=1-1, to=1-2]
    	\arrow["d", from=2-2, to=2-3]
    	\arrow[from=2-4, to=2-3]
    	\arrow[from=1-3, to=1-2]
    \end{tikzcd}
\end{equation}
as a diagram in $\Sh_\R(M)$. The decoupling of the electric and magnetic fields in the static case is visually apparent in the two components of the diagram. Formally, the diagram is a product in the category of diagrams \cite[\S 2.2]{peschke2020}. (It is not a coproduct, as one might initially expect, simply because of the contravariance in the forgetful functor $\DiagOp(\cat{C}) \to \Cat^\op$.)

If an electric potential $\phi$ and a magnetic potential $A$ exist, then the static Maxwell-Faraday equations can be rewritten as the diagram
\begin{equation} \label{eq:maxwell-faraday-static-potentials}
    \begin{tikzcd}
    	{\phi: \Omega^0} & {E: \Omega^1} & {\Omega^2} \\
    	& {A: \Omega^1} & {B: \Omega^2} & {\Omega^3}
    	\arrow["{-d}", from=1-1, to=1-2]
    	\arrow["{-d}", from=1-2, to=1-3]
    	\arrow["d", from=2-2, to=2-3]
    	\arrow["d", from=2-3, to=2-4]
    \end{tikzcd}
\end{equation}
in $\Sh_\R(M)$. The equations $dE = 0$ and $dB = 0$ need not be explicitly stated since they are implied by the property $d^2 = 0$ of the exterior derivative.

A morphism from the diagram \eqref{eq:maxwell-faraday-static-potentials} including potentials to the diagram \eqref{eq:maxwell-faraday-static} excluding them is specified by the following picture.
\begin{equation*}
    \begin{tikzcd}
    	{\phi: \Omega^0} & {E: \Omega^1} & {\Omega^2} && {A: \Omega^1} & {B: \Omega^2} & {\Omega^3} \\
    	\\
    	\\
    	& {E: \Omega^1} & {\Omega^2} & 0 && {B: \Omega^2} & {\Omega^3} & 0
    	\arrow["{-d}", from=1-1, to=1-2]
    	\arrow["{-d}", from=1-2, to=1-3]
    	\arrow["d", from=1-5, to=1-6]
    	\arrow["d", from=1-6, to=1-7]
    	\arrow["{-d}", from=4-2, to=4-3]
    	\arrow[from=4-4, to=4-3]
    	\arrow[dashed, from=1-1, to=4-4]
    	\arrow["d", from=4-6, to=4-7]
    	\arrow[from=4-8, to=4-7]
    	\arrow["{1_{\Omega^1}}"'{pos=0.7}, dashed, from=1-2, to=4-2]
    	\arrow["{1_{\Omega^2}}"{pos=0.3}, dashed, from=1-3, to=4-3]
    	\arrow["{1_{\Omega^2}}"'{pos=0.7}, dashed, from=1-6, to=4-6]
    	\arrow["{1_{\Omega^3}}"{pos=0.3}, dashed, from=1-7, to=4-7]
    	\arrow[dashed, from=1-5, to=4-8]
    \end{tikzcd}
\end{equation*}
In drawing the diagram morphism $(R,\rho): (\cat{J}, D) \to (\cat{J}', D')$, the components $\rho_{j'}$, for $j' \in \cat{J}'$, of the natural transformation $\rho: D \circ R \to D'$ are shown as dashed arrows, while the functor $R: \cat{J}' \to \cat{J}$ between indexing categories is not explicitly shown. The object map of $R$ is determined by the dashed arrows of the transformation components. When the indexing category $\cat{J}$ is thin, as in this example, the morphism map of $R$ is determined by the object map. For example, the indexing morphism over $0 \to \Omega^2$ maps to the composite morphism over $\Omega^0 \xrightarrow{-d} \Omega^1 \xrightarrow{-d} \Omega^2$ and the indexing morphism over $0 \to \Omega^3$ maps to the composite morphism over $\Omega^1 \xrightarrow{d} \Omega^2 \xrightarrow{d} \Omega^3$. Crucially, the corresponding naturality squares commute due to the fact that $d^2 = 0$.

It is not a coincidence that the diagram morphism goes from the system with potentials to the system without them, rather than the other way around. The presence of electric and magnetic potentials ensure that $d E = 0$ and $d B = 0$, but those equations do not imply that potentials exist. In mathematical jargon, the diagram \eqref{eq:maxwell-faraday-static} presents a pair of closed forms and the diagram \eqref{eq:maxwell-faraday-static-potentials} presents a pair of exact forms. Exact forms are closed, but whether or not all closed forms are exact depends on the manifold $M$ \cite{baez1994}.
Loosely speaking, the diagram morphism goes in the direction of increasing generality.
\end{example}

\begin{example}[Steady states in diffusion]
Physical intuition suggests that a diffusing substance should approach a steady-state solution in the long-run limit. Steady states of the diffusion equation can be extracted by a morphism of diagrams. To express this rigorously, the mathematical setup in \cref{ex:diffusion} must be altered to ensure that the needed limits exist and are well-behaved.

Let $M$ be a three-dimensional Riemannian manifold and let $\Omega_\infty^k = \Omega_{\infty,M}^k$ be the sheaf of time-dependent $k$-forms of $M$ that have well-defined limits as $t \to \infty$ and whose time derivatives are uniformly continuous in $t$. Note that $\Omega_\infty^k$ is a sheaf on $M$, not on $M \times [0,\infty)$ as usual. Replacing each object $\Omega_t^k$ in diagram \eqref{eq:diffusion-diagram} with $\Omega_\infty^k$ yields a diagram in $\Sh_\R(M)$ for diffusion under the assumption of steady long-run behavior.

On the other hand, the equation for an equilibrium concentration of the substance is given by the diagram
\begin{equation*}
    \begin{tikzcd}
    	& 0 \\
    	{C: \Omega^0} & {\Omega^0} & {d\phi: \widetilde \Omega^3} \\
    	{dC: \Omega^1} & {} & {\phi: \widetilde \Omega^2}
    	\arrow["d"', from=2-1, to=3-1]
    	\arrow["k\star"', from=3-1, to=3-3]
    	\arrow["d"', from=3-3, to=2-3]
    	\arrow["{\star^{-1}}"', from=2-3, to=2-2]
    	\arrow[from=1-2, to=2-2]
    \end{tikzcd}
\end{equation*}
in $\Sh_\R(M)$ and does not involve time at all. When the diffusivity $k \in C^\infty(M)$ is constant throughout $M$, this equation is equivalent to Laplace's equation, $\Lap C = 0$.

A morphism $(R,\rho)$ from the first diagram to the second can be depicted as
\begin{equation*}
    \begin{tikzcd}
    	&& {C: \Omega^0} & 0 \\
    	{C: \Omega^0_\infty} && {dC: \Omega^1} & {\Omega^0} \\
    	{dC: \Omega^1_\infty} & {\dt C: \Omega^0_\infty} &&& {d\phi: \widetilde\Omega^3} \\
    	&& {d\phi: \widetilde\Omega^3_\infty} && {\phi: \widetilde\Omega^2} \\
    	&& {\phi: \widetilde\Omega^2_\infty}
    	\arrow["d"{pos=0.7}, from=1-3, to=2-3]
    	\arrow["d"', from=4-5, to=3-5]
    	\arrow["{\star^{-1}}"', from=3-5, to=2-4]
    	\arrow["k\star", from=2-3, to=4-5]
    	\arrow["{\star^{-1}}", from=4-3, to=3-2]
    	\arrow["{\rho^0}", dashed, from=2-1, to=1-3]
    	\arrow["d"', from=2-1, to=3-1]
    	\arrow["k\star"', from=3-1, to=5-3]
    	\arrow["d"', from=5-3, to=4-3]
    	\arrow["{\rho^1}"{pos=0.7}, dashed, from=3-1, to=2-3]
    	\arrow["{\widetilde\rho^3}"{pos=0.4}, dashed, from=4-3, to=3-5]
    	\arrow["{\widetilde\rho^2}"'{pos=0.3}, dashed, from=5-3, to=4-5]
    	\arrow["{\rho^0}"'{pos=0.3}, dashed, from=3-2, to=2-4]
    	\arrow["{\partial_t}"{pos=0.3}, from=2-1, to=3-2]
    	\arrow[from=1-4, to=2-4]
    	\arrow[dashed, from=2-1, to=1-4]
    \end{tikzcd}
\end{equation*}
where, as before, the functor $R$ between indexing categories is implicit. The components of the natural transformation $\rho$ are the limit operators $\rho^k := \lim_{t \to \infty}: \Omega_\infty^k \to \Omega^k$ or, in one case, the zero map. The naturality squares corresponding to the spatial operators commute because limits in time commute with both derivatives and Hodge stars in space. More subtly, the naturality square corresponding to the time derivative commutes because if $C \in \Omega^0_\infty$, then $\lim_{t \to \infty} \partial_t C = 0$, by the uniform continuity assumption and Barb{\u{a}}lat's lemma \cite{farkas2016}.
\end{example}

\begin{remark}[Collage of a diagram morphism]
The style of drawing morphisms of diagrams used above can be formalized by a procedure that reduces a diagram morphism to a single diagram encompassing both domain and codomain and the mapping itself. We call the resulting diagram the \emph{collage} of the diagram morphism in analogy with the collage of a profunctor. Let $(R,\rho): (\cat{J}, D) \to (\cat{J}', D')$ be a morphism in $\DiagOp(\cat{C})$. Define its collage $(\cat{K}, F)$ as follows. Letting $\cat{2} := \{0 \xrightarrow{i} 1\}$ be the interval category, take the indexing category $\cat{K}$ to be the pushout
\begin{equation*}
    \begin{tikzcd}
    	{\sf{J}'} & {\sf{J}' \times \sf{2}} \\
    	{\sf{J}} & {\sf{K}}
    	\arrow["R"', from=1-1, to=2-1]
    	\arrow["{(\mathrm{id},0)}", from=1-1, to=1-2]
    	\arrow[from=2-1, to=2-2]
    	\arrow[from=1-2, to=2-2]
    	\arrow["\lrcorner"{anchor=center, pos=0.125, rotate=180}, draw=none, from=2-2, to=1-1]
    \end{tikzcd}
\end{equation*}
in $\Cat$. The naturality equations are already encoded by the commutative diagrams
\begin{equation*}
    \begin{tikzcd}
    	Rj & {(j,0)} && {(j,1)} \\
    	Rk & {(k,0)} && {(k,1)}
    	\arrow[Rightarrow, no head, from=1-1, to=1-2]
    	\arrow["Rf"', from=1-1, to=2-1]
    	\arrow["{(\mathrm{id}_j, i)}", from=1-2, to=1-4]
    	\arrow["{(f,\mathrm{id}_0)}", from=1-2, to=2-2]
    	\arrow[Rightarrow, no head, from=2-1, to=2-2]
    	\arrow["{(f,\mathrm{id}_1)}", from=1-4, to=2-4]
    	\arrow["{(\mathrm{id}_k, i)}", from=2-2, to=2-4]
    \end{tikzcd}
\end{equation*}
in $\cat{K}$, for each morphism $f: j \to k$ in $\cat{J}'$. Next, recall that the natural transformation $\rho: D \circ R \to D'$ is equivalent to a functor $\rho: \cat{J}' \times \cat{2} \to \cat{C}$ such that $\rho(-,0) = D \circ R$ and $\rho(-,1) = D'$. This is the classic interpretation of natural transformations as ``categorical homotopies.'' Thus, the collage diagram $F: \cat{K} \to \cat{C}$ can be defined by the universal property of the pushout:
\begin{equation*}
    \begin{tikzcd}
    	& {\sf{J}'} \\
    	{\sf{J}} & {\sf{K}} & {\sf{J}' \times \sf{2}} \\
    	& {\sf{C}}
    	\arrow["R"', from=1-2, to=2-1]
    	\arrow["{(\mathrm{id},0)}", from=1-2, to=2-3]
    	\arrow[from=2-1, to=2-2]
    	\arrow[from=2-3, to=2-2]
    	\arrow["F"', dashed, from=2-2, to=3-2]
    	\arrow["D"', from=2-1, to=3-2]
    	\arrow["\rho", from=2-3, to=3-2]
    \end{tikzcd}.
\end{equation*}
\end{remark}

The next example is simpler than the preceding ones but brings out an important feature of diagram morphisms.

\begin{example}[Variants of the heat equation] \label{ex:heat-equation-morphisms}
We saw in \cref{ex:diffusion} that the heat equation, or the diffusion equation with constant diffusivity, can be presented by at least two different diagrams, namely \eqref{eq:diffusion-diagram} and \eqref{eq:heat-diagram}. As expected, there is a strict morphism of diagrams in one direction
\begin{equation*}
    \left(
    \begin{tikzcd}
    	{C: \Omega_t^0} & {\dt C: \Omega_t^0} & {d\phi: \widetilde \Omega_t^3} \\
    	{dC: \Omega_t^1} & {} & {\phi: \widetilde \Omega_t^2}
    	\arrow["d"', from=1-1, to=2-1]
    	\arrow["{\partial_t}", from=1-1, to=1-2]
    	\arrow["k\star"', from=2-1, to=2-3]
    	\arrow["d"', from=2-3, to=1-3]
    	\arrow["{\star^{-1}}"', from=1-3, to=1-2]
    \end{tikzcd}
    \right)
    \quad\longrightarrow\quad
    \left(
    \begin{tikzcd}
    	{C: \Omega_t^0} & {\dt C: \Omega_t^0}
    	\arrow["{k \Delta}"', shift right=1, from=1-1, to=1-2]
    	\arrow["{\partial_t}", shift left=1, from=1-1, to=1-2]
    \end{tikzcd}
    \right)
\end{equation*}
whose backward map on indexing categories sends the morphism over $k \Lap$ to the composite morphism over
\begin{equation*}
    \Omega_t^0 \xrightarrow{d} \Omega_t^1 \xrightarrow{k \star} \widetilde \Omega_t^2
      \xrightarrow{d} \widetilde \Omega_t^3 \xrightarrow{\star^{-1}} \Omega_t^0.
\end{equation*}
Lifts of the first diagram can therefore be pushed forward to lifts of the second.

A puzzle now presents itself. The two presentations of the heat equation are evidently equivalent inasmuch as any solution of one gives a solution of the other. However, the diagram morphism above is not an isomorphism. Nor is it a 2-categorical equivalence when the category of diagrams assumes its proper status as a 2-category (see \cite{perrone2021} and \cref{rem:2-categories}). So how is this notion of equivalence accomodated by the diagrammatic formalism? The problem is resolved by recognizing that a special class of diagram morphisms, encompassing the one above, are ``weak equivalences'' in that they are not necessarily isomorphisms but still establish a one-to-one correspondence between lifts. Weak equivalences of diagrams are studied in \cref{sec:weak-equivalences}.
\end{example}

\begin{remark}[Notions of theoretical equivalence]
The question of when two physical theories should be considered equivalent is a subtle one that has long been studied by philosophers of science, sometimes even using category-theoretic methods \cite{halvorson2017}. The discussion above takes for granted that two systems of equations should be considered equivalent when they have interchangeable solutions, but the stronger, more syntactical notion of equivalence supplied by the category of diagrams is also meaningful. As a physical theory, the diffusion equation is derived by combining a phenomenological principle (Fick's first law) with a conservation principle, as recalled in \cref{ex:diffusion}. Textbooks on diffusion, such as Crank's \cite{crank1975}, regard Fick's first law as significant enough to be stated on its own, before introducing the diffusion equation. From this perspective, the diagram morphism in \cref{ex:heat-equation-morphisms} discards information about the derivation of the heat equation and may rightfully be considered a non-equivalence.
\end{remark}

{\bf Boundary value problems.}
Another use of diagram morphisms is in formulating boundary value problems. To solve a boundary value problem, one must solve a system of differential equations while simultaneously satisfying a set of side constraints, called the \emph{boundary conditions}. For our purposes, boundary conditions encompass both the spatial and the temporal boundary of the space-time domain, and so include \emph{initial conditions} as a special case. We have seen that solving a system of equations amounts to finding a lift of the diagram presenting the equations (\cref{def:lifting-problem}). Meanwhile, satisfying the boundary conditions requires extending the boundary data to the whole domain. These two problems are combined in the notion of an extension-lifting problem of diagrams.

\begin{definition}[Extension-lifting problem] \label{def:extension-lifting-problem}
An \emph{extension} of a diagram $D_0: \cat{J}_0 \to \cat{C}$ along a functor $R: \cat{J}_0 \to \cat{J}$ is a morphism of diagrams of the form $(R,\rho): (\cat{J}, D) \to (\cat{J}_0, D_0)$, i.e., a diagram $D: \cat{J} \to \cat{C}$ together with a natural transformation $\rho: D \circ R \To D_0$.
\begin{equation*}
    \begin{tikzcd}[row sep=large, column sep=large]
    	{\mathsf{J}_0} & {\mathsf{C}} \\
    	{\mathsf{J}}
    	\arrow["R"', from=1-1, to=2-1]
    	\arrow[""{name=0, anchor=center, inner sep=0}, "{D_0}", from=1-1, to=1-2]
    	\arrow["D"', dashed, from=2-1, to=1-2]
    	\arrow["\rho", shorten <=4pt, shorten >=4pt, Rightarrow, dashed, from=2-1, to=0]
    \end{tikzcd}
\end{equation*}
Now let $(R,\rho): (\cat{J}, D) \to (\cat{J}_0, D_0)$ be a morphism of diagrams in a category $\cat{C}$ and let $\overline D_0$ be a lift of the diagram $D_0$ through a functor $\pi: \cat{E} \to \cat{C}$. The \emph{extension-lifting problem} associated with this data
\begin{equation*}
    \begin{tikzcd}
    	{\mathsf{J}_0} & {\mathsf{E}} \\
    	{\mathsf{J}} & {\mathsf{C}}
    	\arrow["R"', from=1-1, to=2-1]
    	\arrow["{\overline D_0}", from=1-1, to=1-2]
    	\arrow["D"', from=2-1, to=2-2]
    	\arrow["\pi", from=1-2, to=2-2]
    	\arrow[""{name=0, anchor=center, inner sep=0}, "{D_0}", from=1-1, to=2-2]
    	\arrow["\rho", shorten >=2pt, Rightarrow, from=2-1, to=0]
    \end{tikzcd}
\end{equation*}
is to find an extension $(R, \overline\rho): (\cat{J}, \overline D) \to (\cat{J}_0, \overline D_0)$ of $\overline D_0$ along $R$ whose domain $\overline D$ is also a lift of $D$ through $\pi$, such that the 2-cells are compatible:
\begin{equation*}
    \begin{tikzcd}[row sep=large, column sep=large]
    	{\mathsf{J}_0} & {\mathsf{E}} \\
    	{\mathsf{J}} & {\mathsf{C}}
    	\arrow["R"', from=1-1, to=2-1]
    	\arrow[""{name=0, anchor=center, inner sep=0}, "{\overline D_0}", from=1-1, to=1-2]
    	\arrow["{\overline D}"', dashed, from=2-1, to=1-2]
    	\arrow["\pi", from=1-2, to=2-2]
    	\arrow["D"', from=2-1, to=2-2]
    	\arrow["\overline\rho", shorten <=4pt, shorten >=4pt, Rightarrow, dashed, from=2-1, to=0]
    \end{tikzcd}
    \qquad=\qquad
    \begin{tikzcd}[row sep=large, column sep=large]
    	{\mathsf{J}_0} \\
    	{\mathsf{J}} & {\mathsf{C}}
    	\arrow["R"', from=1-1, to=2-1]
    	\arrow["D"', from=2-1, to=2-2]
    	\arrow[""{name=0, anchor=center, inner sep=0}, "{D_0}", from=1-1, to=2-2]
    	\arrow["\rho", shorten >=2pt, Rightarrow, from=2-1, to=0]
    \end{tikzcd}.
\end{equation*}
\end{definition}

When presenting physical theories, the diagram $D$ represents the whole system, the diagram $D_0$ represents the boundary of the system, and the morphism $D \to D_0$ projects the system onto its boundary. If this seems backward, note that, geometrically, the boundary is included in the space-time domain, but for physical quantities defined on the domain, the inclusion of spaces induces a projection of quantities contravariantly. As before, the lifting is through the projection functor $\cat{E} := \El(\cat{C}) \to \cat{C}$ associated with a category of generalized elements. Finally, a lift $\overline D_0$ of $D_0$ is a choice of boundary data. The extension-lifting problem can then be interpreted as a boundary value problem.

In algebraic topology, it is standard for extension and extension-lifting problems to be \emph{strict}, meaning that all 2-cells involved must be identities \cite{arkowitz2011,riehl2014}. On the other hand, in category theory, a right Kan extension is a ``lax'' extension, in the sense of \cref{def:extension-lifting-problem}, that satisfies a universal property among all such extensions \cite[Chapter 6]{riehl2016}. As the examples below show, nontrivial 2-cells are essential for boundary value problems, so we allow extension-lifting problems to be non-strict.

\begin{example}[Discrete Dirichlet problem] \label{ex:discrete-dirichlet-problem}
Like the classical heat equation (\cref{ex:discrete-heat-equation}), the classical Dirichlet problem has a discrete analogue on graphs. Let $G = (V,E)$ be a locally finite symmetric weighted graph with no isolated vertices and let $\Lap: \R^{\Omega} \to \R^{\Omega}$ be the discrete Laplacian on $G$. For any subset of vertices $\Omega \subseteq V$, define the \emph{one-step closure} of $\Omega$ to be
\begin{equation*}
    \overline{\Omega} := \{y \in V: x \xrightarrow{e} y \text{ for some } x \in \Omega, e \in E\}
\end{equation*}
and define the \emph{vertex boundary} of $\Omega$ to be $\partial \Omega := \overline{\Omega} \setminus \Omega$. Given boundary data $g \in \R^{\partial \Omega}$, the \emph{discrete Dirichlet problem} is to find $u \in \R^{\overline{\Omega}}$ such that
\begin{equation*}
    \begin{cases}
        \Lap u(x) = 0 & \text{for all $x \in \Omega$} \\
        u(x) = g(x) & \text{for all $x \in \partial\Omega$}.
    \end{cases}
\end{equation*}
It can be shown that, if the graph $G$ is connected and $\Omega$ is finite with non-empty complement $\Omega^c$, then a solution of the problem exists and is unique \cite[Theorem 1.38]{grigoryan2018}.

As an extension-lifting problem, the discrete Dirichlet problem is defined by the morphism of diagrams
\begin{equation*}
    \begin{tikzcd}
    	0 & {\R^{\Omega}} & {u: \mathbb{R}^{\overline{\Omega}}} && {u_b: \R^{\partial \Omega}}
    	\arrow["{\Delta|_{\Omega}}"', from=1-3, to=1-2]
    	\arrow[from=1-1, to=1-2]
    	\arrow["{\mathrm{res}_{\partial \Omega}}", dashed, from=1-3, to=1-5]
    \end{tikzcd},
\end{equation*}
in $\Vect_\R$, whose domain is a cospan and whose codomain is a singleton diagram, together with the lift of the codomain diagram through $\El_\R(\Vect_\R) \to \Vect_\R$ that assigns the boundary data $g \in \R^{\partial\Omega}$ to $u_b$. In the domain diagram, $\Lap|_{\Omega}$ denotes the restriction of the Laplacian to vertices in $\Omega$, which due to boundary effects depends on vertices in $\overline{\Omega}$. In the diagram morphism, the sole component of the 2-cell is the restriction map $\res_{\partial\Omega} = \res_{\partial\Omega}^{\overline{\Omega}}$. Solutions of the extension-lifting problem coincide with solutions of the discrete Dirichlet problem.
\end{example}

For smooth boundary value problems, we work in the category of sheaves on submanifolds of the spatial or space-time domain (\cref{def:category-sheaves-submanifolds}).

\begin{example}[Diffusion with Dirichlet condition]
Continuing \cref{ex:diffusion}, the diffusion equation with \emph{Dirichlet boundary conditions} can be formulated as the extension-lifting problem for the morphism of diagrams
\begin{equation*}
    \begin{tikzcd}
    	{dC: \Omega_t^1(M)} & {C: \Omega_t^0(M)} && {C_0: \Omega^0(M)} \\
    	& {\dt C: \Omega_t^0(M)} && {C_b: \Omega_t^0(\partial M)} \\
    	{\phi: \widetilde\Omega_t^2(M)} & {d\phi: \widetilde\Omega_t^3(M)}
    	\arrow["{\partial_t}", from=1-2, to=2-2]
    	\arrow["d"', from=1-2, to=1-1]
    	\arrow["{k \star}"', from=1-1, to=3-1]
    	\arrow["d"', from=3-1, to=3-2]
    	\arrow["{\star^{-1}}"', from=3-2, to=2-2]
    	\arrow["{\mathrm{res}_{t=0}}", dashed, from=1-2, to=1-4]
    	\arrow["{\mathrm{res}_{\partial M}}"', dashed, from=1-2, to=2-4]
    \end{tikzcd}
\end{equation*}
in the category $\int \Sh_{M \times [0,\infty), \R}$, together with initial and boundary data corresponding to $C_0: \Omega^0(M)$ and $C_b: \Omega_t^0(\partial M)$, which would be written in classical notation as
\begin{equation*}
    \begin{cases}
        C(x,0) = C_0(x) & \text{for all $x \in M$} \\
        C(x,t) = C_b(x,t) & \text{for all $x \in \partial M$ and $t \geq 0$}.
    \end{cases}
\end{equation*}
We abuse notation slightly in abbreviating the objects $(M, \Omega^k_M)$ and $(M \times [0,\infty), \Omega^k_{t,M})$, for example, as $\Omega^k(M)$ and $\Omega^k_t(M)$. The two components $\res_{t=0}$ and $\res_{\partial M}$ are the restriction maps corresponding to the inclusions $M \times \{0\} \hookrightarrow M \times [0,\infty)$ and $\partial M \times [0,\infty) \hookrightarrow M \times [0,\infty)$.

Subtleties lurk behind this seemingly innocent formulation. The boundary data $C_0$ and $C_b$ are not independent of one another since they both specify the concentration at the intersection of the spatial and temporal boundary. Thus, the extension-lifting has a solution only if the compatibility condition $C_0|_{\partial M} = C_b|_{t=0}$ is satisfied. This constraint is incorporated by the extension-lifting problem for the morphism of diagrams
\begin{equation*}
    \begin{tikzcd}
    	{dC: \Omega_t^1(M)} & {C: \Omega_t^0(M)} && {C_0: \Omega^0(M)} & {} & {\Omega^0(\partial M)} \\
    	& {\dt C: \Omega_t^0(M)} && {C_b: \Omega_t^0(\partial M)} \\
    	{\phi: \widetilde\Omega_t^2(M)} & {d\phi: \widetilde\Omega_t^3(M)}
    	\arrow["{\partial_t}", from=1-2, to=2-2]
    	\arrow["d"', from=1-2, to=1-1]
    	\arrow["{k \star}"', from=1-1, to=3-1]
    	\arrow["d"', from=3-1, to=3-2]
    	\arrow["{\star^{-1}}"', from=3-2, to=2-2]
    	\arrow["{\mathrm{res}_{t=0}}"'{pos=0.7}, dashed, from=1-2, to=1-4]
    	\arrow["{\mathrm{res}_{\partial M}}"', dashed, from=1-2, to=2-4]
    	\arrow["{\mathrm{res}_{\partial M, t=0}}", curve={height=-18pt}, dashed, from=1-2, to=1-6]
    	\arrow["{\mathrm{res}_{t=0}}"', from=2-4, to=1-6]
    	\arrow["{\mathrm{res}_{\partial M}}"'{pos=0.3}, from=1-4, to=1-6]
    \end{tikzcd}
\end{equation*}
where a lift of the codomain diagram must satisfy the compatibility condition. The naturality conditions for the diagram morphism are satisfied because the inclusions
\begin{equation*}
    \begin{tikzcd}
      {\partial M \times \{0\}} & {M \times \{0\}} \\
      {\partial M \times [0,\infty)} & {M \times [0,\infty)}
      \arrow[hook, from=1-1, to=1-2]
      \arrow[hook, from=1-1, to=2-1]
      \arrow[hook, from=1-2, to=2-2]
      \arrow[hook, from=2-1, to=2-2]
    \end{tikzcd}
\end{equation*}
induce a commuting diagram of restrictions:
\begin{equation*}
    \begin{tikzcd}
    	{\Omega^k(\partial M)} & {\Omega^k(M)} \\
    	{\Omega_t^k(\partial M)} & {\Omega_t^k(M)}
    	\arrow["{\mathrm{res}_{t=0}}"', from=2-2, to=1-2]
    	\arrow["{\mathrm{res}_{\partial M}}"', from=1-2, to=1-1]
    	\arrow["{\mathrm{res}_{t=0}}", from=2-1, to=1-1]
    	\arrow["{\mathrm{res}_{\partial M}}", from=2-2, to=2-1]
    	\arrow["{\mathrm{res}_{\partial M, t=0}}"{description}, from=2-2, to=1-1]
    \end{tikzcd}.
\end{equation*}

In fact, the condition above is not enough to ensure the existence of a smooth solution near the corner of the space-time domain. For further discussion, see \cite{chen2011} and references therein. We merely note that the diagrammatic formalism can be useful in making assumptions about the boundary data, whatever they may be, clear and explicit.
\end{example}

\begin{example}[Diffusion with Neumann condition]
As a variation on the previous example, the diffusion equation with \emph{Neumann boundary conditions} is the extension-lifting problem for the morphism of diagrams
\begin{equation*}
    \begin{tikzcd}
    	{dC: \Omega_t^1(M)} & {C: \Omega_t^0(M)} && {C_0: \Omega^0(M)} \\
    	& {\dt C: \Omega_t^0(M)} \\
    	{\phi: \widetilde\Omega_t^2(M)} & {d\phi: \widetilde\Omega_t^3(M)} && {\phi_b: \widetilde \Omega_t^2(\partial M)}
    	\arrow["{\partial_t}", from=1-2, to=2-2]
    	\arrow["d"', from=1-2, to=1-1]
    	\arrow["{k \star}"', from=1-1, to=3-1]
    	\arrow["d", from=3-1, to=3-2]
    	\arrow["{\star^{-1}}"', from=3-2, to=2-2]
    	\arrow["{\mathrm{res}_{t=0}}", dashed, from=1-2, to=1-4]
    	\arrow["{\mathrm{res}_{\partial M}}"', curve={height=12pt}, dashed, from=3-1, to=3-4]
    \end{tikzcd}
\end{equation*}
in $\int \Sh_{M \times [0,\infty), \R}$. The negative diffusion flux $\phi$ is specified at the spatial boundary for all time. For example, this boundary condition describes a closed system with impermeable boundary when the flux through the boundary is fixed to be zero. Compatibility conditions between the initial data $C_0$ and the boundary data $\phi_b$ can also be imposed.
\end{example}

\section{Lifting morphisms of diagrams} \label{sec:lifting-properties}

In this section, we temporarily pause the application-driven development to study how morphisms of diagrams in a category $\cat{C}$ interact with lifts through a functor $\pi: \cat{E} \to \cat{C}$. The right level of generality for this study takes the functor $\pi$ to be a discrete opfibration, a concept we now review.

\begin{definition}[Discrete opfibration] \label{def:discrete-opfibration}
A functor $\pi: \cat{E} \to \cat{C}$ is a \emph{discrete opfibration} if, for every morphism $f: x \to y$ in $\cat{C}$ and every object $\overline x \in \cat{E}$ with $\pi(\overline x) = x$, there exists a unique morphism $\overline f: \overline x \to \overline y$ in $\cat{E}$ such that $\pi(\overline f) = f$.
\begin{equation*}
    \begin{tikzcd}[row sep=small]
    	{\overline x} & {} & \bullet \\
    	\\
    	x & {} & y
    	\arrow["f"', from=3-1, to=3-3]
    	\arrow["{\overline f}", dashed, from=1-1, to=1-3]
    	\arrow["\pi"', shorten <=3pt, shorten >=6pt, maps to, from=1-2, to=3-2]
    \end{tikzcd}
\end{equation*}
As the picture suggests, the codomain of the lift $\overline f$ is not given but is part of the existence and uniqueness statement.
\end{definition}

Our motivating example of a discrete opfibration is the codomain projection $\El \cat{C} \to \cat{C}$ associated with a category of generalized elements of $\cat{C}$.

The relevance of discrete opfibrations to extension-lifting problems of diagrams is strongly suggested by the fact that the former definition can be stated in terms of extension-lifting problems. Conflating objects in a category with diagrams of shape $\{\bullet\}$ and morphisms with diagrams of shape $\{0 \to 1\}$, a functor $\pi: \cat{E} \to \cat{C}$ is a discrete opfibration exactly when, for every object $\overline x \in \cat{E}$ and every morphism $f$ in $\cat{C}$ forming a commutative square
\begin{equation*}
    \begin{tikzcd}[row sep=large, column sep=large]
    	{\{\bullet\}} & {\mathsf{E}} \\
    	{\{0 \to 1\}} & {\mathsf{C}}
    	\arrow["0"', from=1-1, to=2-1]
    	\arrow["f"', from=2-1, to=2-2]
    	\arrow["{\overline x}", from=1-1, to=1-2]
    	\arrow["\pi", from=1-2, to=2-2]
    	\arrow["{\overline f}", dashed, from=2-1, to=1-2]
    \end{tikzcd}
\end{equation*}
there exists a unique morphism $\overline f$ in $\cat{E}$ (indicated by the dashed line) making both triangles commute. These are extension-lifing problems, a strict version of \cref{def:extension-lifting-problem}, involving diagrams of shape $\cat{J}_0 := \{\bullet\}$ and $\cat{J} := \{0 \to 1\}$.

Alternatively, discrete opfibrations over $\cat{C}$ can be seen as actions of the category $\cat{C}$ in the following way. An \emph{action} of a category $\cat{C}$, also called a \emph{$\cat{C}$-set} or more commonly a \emph{copresheaf on $\cat{C}$}, is a functor $X: \cat{C} \to \Set$. Informally, a $\cat{C}$-set $X$ is a family of sets $X(c)$, indexed by $c \in \cat{C}$, on which the morphisms of $\cat{C}$ act functorially. Given a $\cat{C}$-set $X$, a discrete opfibration is defined by the projection $\pi_X: \El(X) \to \cat{C}$, where $\El(X)$ is the category of elements of $X$, a special case of the Grothendieck construction (cf.\ \cref{sec:diff-eqs}). The objects of the category $\El(X)$ are pairs $(c, x)$, where $c \in \cat{C}$ and $x \in X(c)$, and morphisms $(c,x) \to (d,y)$ of $\El(X)$ are morphisms $f: c \to d$ of $\cat{C}$ such that $X(f)(x) = y$. Conversely, given a discrete opfibration $\pi: \cat{E} \to \cat{C}$, one can construct a $\cat{C}$-set $X$ defined on objects by $X(c) := \pi^{-1}(c)$. Moreover, the two constructions extend to an equivalence of categories
\begin{equation} \label{eq:dopf-cset-equivalence}
    \DOpf(\cat{C}) \simeq \catSet{C},
\end{equation}
where $\DOpf(\cat{C})$ is the full subcategory of $\Cat/\cat{C}$ spanned by discrete opfibrations and $\catSet{C}$ is the functor category $\Set^{\cat{C}}$ \cite[Proposition 3.9]{spivak2014}.

Under this correspondence, the discrete opfibrations $\cod: \El_S(\cat{C}) \to \cat{C}$ associated with generalized elements of some shape $S \in \cat{C}$ correspond to the \emph{representable} $\cat{C}$-sets, i.e., those of the form $\Hom_{\cat{C}}(S,-): \cat{C} \to \Set$. More general $\cat{C}$-sets can also be useful in defining lifting problems. For example, the recourse to sheafification and the constant sheaf in \cref{prop:elements-of-sheaves} can be avoided by simply taking, for a given manifold $M$, the $\Sh(M)$-set that sends each sheaf on $M$ to its set of global sections.

Lifting properties of diagrams can be stated more economically using the functorality of the diagram category construction. Namely, a functor $\DiagOp: \Cat \to \Cat$ sends each category $\cat{C}$ to the diagram category $\DiagOp(\cat{C})$, as in \cref{def:diagram-categories}, and each functor $F: \cat{C} \to \cat{D}$ to the functor $\DiagOp(F): \DiagOp(\cat{C}) \to \DiagOp(\cat{D})$ that acts on diagrams by post-composition,
\begin{equation*}
    \cat{J} \xrightarrow{D} \cat{C} \qquad\leadsto\qquad
    \cat{J} \xrightarrow{D} \cat{C} \xrightarrow{F} \cat{D},
\end{equation*}
and on diagram morphisms by post-whiskering, sending a morphism $(R,\rho)$ to $(R, F*\rho)$.
\begin{equation*}
    \begin{tikzcd}
    	{\mathsf{J}} && {\mathsf{J}'} \\
    	& {\mathsf{C}}
    	\arrow["R"', from=1-3, to=1-1]
    	\arrow[""{name=0, anchor=center, inner sep=0}, "D"', from=1-1, to=2-2]
    	\arrow[""{name=1, anchor=center, inner sep=0}, "{D'}", from=1-3, to=2-2]
    	\arrow["\rho", shorten <=6pt, shorten >=6pt, Rightarrow, from=0, to=1]
    \end{tikzcd}
    \qquad\leadsto\qquad
    \begin{tikzcd}
    	{\mathsf{J}} && {\mathsf{J}'} \\
    	& {\mathsf{C}} \\
    	& {\mathsf{D}}
    	\arrow["R"', from=1-3, to=1-1]
    	\arrow[""{name=0, anchor=center, inner sep=0}, "D"', from=1-1, to=2-2]
    	\arrow[""{name=1, anchor=center, inner sep=0}, "{D'}", from=1-3, to=2-2]
    	\arrow["F"', from=2-2, to=3-2]
    	\arrow["\rho", shorten <=6pt, shorten >=6pt, Rightarrow, from=0, to=1]
    \end{tikzcd}
\end{equation*}
A functor $\Diag: \Cat \to \Cat$ is defined in the same way. In fact, these two endofunctors on $\Cat$ extend to strict 2-functors, and even to pseudomonads on $\Cat$ \cite{kock1967,guitart1973,perrone2021}, although we will not use this fact here.

With these definitions, the previous \cref{def:lifting-problem,def:extension-lifting-problem} can be rephrased as:
\begin{itemize}
\item Given a diagram $D$ in a category $\cat{C}$ and a functor $\pi: \cat{E} \to \cat{C}$, the \emph{lifting problem} is to find a diagram $\overline D$ in $\cat{E}$ such that $\DiagOp(\pi)(\overline D) = D$.
\item Given a morphism of diagrams $(R, \rho): D \to D_0$ in $\cat{C}$ and a lift $\overline D_0$ of $D_0$ through $\pi$, the \emph{extension-lifting problem} is to find a morphism of diagrams $(R, \overline\rho): \overline D \to \overline D_0$ such that $\DiagOp(\pi)(R,\overline\rho) = (R,\rho)$.
\end{itemize}
We can also succinctly state a generalization of the result \eqref{eq:diagram-hom-cones} that diagram morphisms carry lifts of their domain to lifts of their codomain.

\begin{theorem}[Lifting morphisms of diagrams] \label{thm:diagram-opfibration}
If $\pi: \cat{E} \to \cat{C}$ is a discrete opfibration, then so too is the functor $\DiagOp(\pi): \DiagOp(\cat{E}) \to \DiagOp(\cat{C})$.
\end{theorem}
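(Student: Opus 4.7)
The plan is to unpack the definition and construct the lift componentwise using the discrete opfibration property of $\pi$. Suppose we are given a diagram $\overline D: \cat{J} \to \cat{E}$ in $\DiagOp(\cat{E})$ lifting $D: \cat{J} \to \cat{C}$, together with a morphism $(R,\rho): (\cat{J},D) \to (\cat{J}',D')$ in $\DiagOp(\cat{C})$, where $R: \cat{J}' \to \cat{J}$ and $\rho: D\circ R \To D'$. Our task is to produce, uniquely, a diagram $\overline{D'}: \cat{J}' \to \cat{E}$ and a natural transformation $\overline\rho: \overline D \circ R \To \overline{D'}$ such that $\pi \circ \overline{D'} = D'$ and $\pi * \overline\rho = \rho$.

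First I would define $\overline{D'}$ on objects. For each $j' \in \cat{J}'$, the object $\overline D(Rj') \in \cat{E}$ lies over $D(Rj')$, and $\rho_{j'}: D(Rj') \to D'(j')$ is a morphism in $\cat{C}$. The discrete opfibration property yields a unique morphism $\overline\rho_{j'}$ in $\cat{E}$ out of $\overline D(Rj')$ lifting $\rho_{j'}$; I set $\overline{D'}(j')$ to be its codomain, so that $\pi(\overline{D'}(j')) = D'(j')$ by construction.

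Next I would define $\overline{D'}$ on morphisms. For $f: j' \to k'$ in $\cat{J}'$, the composite $\overline\rho_{k'} \circ \overline D(Rf): \overline D(Rj') \to \overline{D'}(k')$ in $\cat{E}$ lies over $\rho_{k'} \circ D(Rf) = D'(f) \circ \rho_{j'}$, using naturality of $\rho$. Applying the opfibration property to $D'(f)$ at the object $\overline{D'}(j')$ produces a unique lift $\overline{D'}(f)$ of $D'(f)$ starting at $\overline{D'}(j')$; the composite $\overline{D'}(f) \circ \overline\rho_{j'}$ then lifts $D'(f) \circ \rho_{j'}$ out of $\overline D(Rj')$, and by uniqueness of such a lift must coincide with $\overline\rho_{k'} \circ \overline D(Rf)$. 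This forces the codomain of $\overline{D'}(f)$ to be $\overline{D'}(k')$ and simultaneously establishes the naturality square for $\overline\rho$. Functoriality of $\overline{D'}$ — preservation of identities and composition — follows by the same uniqueness-of-lifts argument: both $\overline{D'}(\id_{j'})$ and $\id_{\overline{D'}(j')}$ lift $\id_{D'(j')}$ out of $\overline{D'}(j')$, and both $\overline{D'}(g\circ f)$ and $\overline{D'}(g) \circ \overline{D'}(f)$ lift $D'(g)\circ D'(f)$ out of $\overline{D'}(j')$.

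Finally I would check uniqueness of the whole lift $(R,\overline\rho): \overline D \to \overline{D'}$. Any other candidate $(R, \overline\rho'): \overline D \to \overline{D''}$ with $\pi * \overline\rho' = \rho$ must have each $\overline\rho'_{j'}$ lifting $\rho_{j'}$ out of $\overline D(Rj')$, hence $\overline\rho'_{j'} = \overline\rho_{j'}$ and $\overline{D''}(j') = \overline{D'}(j')$; and each $\overline{D''}(f)$ then lifts $D'(f)$ out of the same object $\overline{D'}(j')$, forcing $\overline{D''}(f) = \overline{D'}(f)$. I do not anticipate a serious obstacle: every step reduces to a single application of the unique-lift property of $\pi$. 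The only bookkeeping subtlety is ensuring that the two distinct appeals to uniqueness — one to fix the codomain of $\overline{D'}(f)$ as $\overline{D'}(k')$, and one to establish functoriality — are set up so that the lifts being compared really do start at the same object, which they do because we systematically lift out of $\overline D(Rj')$ or $\overline{D'}(j')$ as appropriate.
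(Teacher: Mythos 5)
Your proposal is correct and follows essentially the same route as the paper's proof: define $\overline{D}'$ on objects as the codomains of the unique lifts of the components $\rho_{j'}$, then use the uniqueness of lifts of the common composite $D'(f)\circ\rho_{j'}=\rho_{k'}\circ D(Rf)$ out of $\overline{D}Rj'$ to pin down the codomain of $\overline{D}'(f)$, establish naturality of $\overline\rho$, and deduce functoriality and uniqueness. No gaps.
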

\begin{proof}
We must show that for any morphism of diagrams $(R, \rho): (\cat{J}, D) \to (\cat{J}', D')$ and any lift $\overline D$ of $D$ through $\pi$, there exists a unique morphism $(R, \overline\rho): (\cat{J}, \overline D) \to (\cat{J}', \overline D')$ such that $\DiagOp(\pi)(R, \overline\rho) = (R, \rho)$.
\begin{equation*}
    \begin{tikzcd}[row sep=large, column sep=large]
    	{\mathsf{J}} & {\mathsf{E}} \\
    	{\mathsf{J}'} & {\mathsf{C}}
    	\arrow["{\overline D}", from=1-1, to=1-2]
    	\arrow["\pi", from=1-2, to=2-2]
    	\arrow["{D'}"', from=2-1, to=2-2]
    	\arrow["R", from=2-1, to=1-1]
    	\arrow[""{name=0, anchor=center, inner sep=0}, "{\overline D'}"', dashed, from=2-1, to=1-2]
    	\arrow["{\overline \rho}", shorten >=2pt, Rightarrow, dashed, from=1-1, to=0]
    \end{tikzcd}
    \qquad=\qquad
    \begin{tikzcd}
    	{\mathsf{J}} \\
    	{\mathsf{J}'} & {\mathsf{C}}
    	\arrow["R", from=2-1, to=1-1]
    	\arrow[""{name=0, anchor=center, inner sep=0}, "D", from=1-1, to=2-2]
    	\arrow[""{name=1, anchor=center, inner sep=0}, "{D'}"', from=2-1, to=2-2]
    	\arrow["\rho"', shorten <=2pt, shorten >=2pt, Rightarrow, from=0, to=1]
    \end{tikzcd}
\end{equation*}
Since $\pi : \cat{E} \to \cat{C}$ is a discrete opfibration, for each $j \in \cat{J}'$, the object $\overline{D} R j \in \cat{E}$ along with the morphism 
\begin{equation*}
    \pi (\overline{D} R j) = D R j \xrightarrow{\rho_j} D' j
\end{equation*}
in $\cat{C}$ has a unique lift through $\pi$ to a morphism $\overline{D} R j \xrightarrow{\overline \rho_j} e_j$ in $\cat{E}$. Thus, we must define $\overline D': \cat{J}' \to \cat{E}$ on objects by $\overline D' j := e_j$.
Next, for every morphism $j \xrightarrow{f} k$ in $\cat{J}'$, the object $\overline D' j \in \cat{E}$ along with the morphism
\begin{equation*}
    \pi(\overline D' j) = D' j \xrightarrow{D' f} D' k
\end{equation*}
in $\cat{C}$ has a unique lift through $\pi$ to a morphism $\overline D' j \xrightarrow{\overline f} \bullet$ in $\cat{E}$. Now, the (possibly ill-formed) square
\begin{equation*}
    \begin{tikzcd}
    	{\overline{D} Rj} & {\overline{D} Rk} \\
    	{\overline{D}' j} & {\bullet \overset{?}{=} \overline{D}' k}
    	\arrow["{\overline\rho_j}"', from=1-1, to=2-1]
    	\arrow["{\overline f}", from=2-1, to=2-2]
    	\arrow["{\overline{D} Rf}", from=1-1, to=1-2]
    	\arrow["{\overline\rho_k}", shift left=3, from=1-2, to=2-2]
    \end{tikzcd}
\end{equation*}
in $\cat{E}$ projects under $\pi$ to the square
\begin{equation*}
    \begin{tikzcd}
    	DRj & DRk \\
    	{D'j} & {D'k}
    	\arrow["{\rho_k}"', from=1-1, to=2-1]
    	\arrow["DRf", from=1-1, to=1-2]
    	\arrow["{\rho_k}", from=1-2, to=2-2]
    	\arrow["{D'f}", from=2-1, to=2-2]
    \end{tikzcd}
\end{equation*}
in $\cat{C}$, which commutes by the naturality of $\rho$. From the uniqueness of lifts through $\pi$, it follows that the codomain of $\overline f$ is $\overline D' k$ and also that the square in $\cat{E}$ commutes. In particular, we can and must define $\overline D'$ on morphisms by $\overline D' f := \overline f$.

We have shown that the data of the diagram $\overline D': \cat{J}' \to \cat{E}$ and the transformation $\overline\rho: \overline D \circ R \to \overline D'$ exist and are unique. The argument already shows that $\overline\rho$ is natural, while the functorality of $\overline D'$ follows from the uniqueness of lifts through $\pi$.
\end{proof}

Given a $\cat{C}$-set $X$, lifts of diagrams and diagram morphisms through the discrete opfibration $\pi_X: \El(X) \to \cat{C}$ can also be expressed directly in terms of $X$. Let $1 = \{*\}$ be the terminal set and denote the functor $\cat{J} \xrightarrow{!} \cat{1} \xrightarrow{1} \Set$ by $1 = 1^{\cat{J}}$ also, since it is the terminal object in $\catSet{J}$. Lifts of a diagram $D: \cat{J} \to \cat{C}$ through $\pi_X$ correspond to cones $\lambda$ over $X \circ D: \cat{J} \to \Set$ with apex $1$, or equivalently to natural transformations of form
\begin{equation*}
    \begin{tikzcd}
    	& {\mathsf{J}} \\
    	{\mathsf{Set}} && {\mathsf{C}}
    	\arrow[""{name=0, anchor=center, inner sep=0}, "1"', from=1-2, to=2-1]
    	\arrow[""{name=1, anchor=center, inner sep=0}, "D", from=1-2, to=2-3]
    	\arrow["X", from=2-3, to=2-1]
    	\arrow["\lambda"', shorten <=6pt, shorten >=6pt, Rightarrow, from=0, to=1]
    \end{tikzcd}.
\end{equation*}
Moreover, the action on lifts by a diagram morphism $(R, \rho): (\cat{J}, D) \to (\cat{J}', D')$ in \cref{thm:diagram-opfibration} takes the form:
\begin{equation}
    \begin{tikzcd}
    	& {\mathsf{J}} && {\mathsf{J}'} \\
    	{\mathsf{Set}} && {\mathsf{C}}
    	\arrow[""{name=0, anchor=center, inner sep=0}, "D"{description}, from=1-2, to=2-3]
    	\arrow["X", from=2-3, to=2-1]
    	\arrow["R"', from=1-4, to=1-2]
    	\arrow[""{name=1, anchor=center, inner sep=0}, "{D'}", from=1-4, to=2-3]
    	\arrow[""{name=2, anchor=center, inner sep=0}, "1"', from=1-2, to=2-1]
    	\arrow["\rho", shorten <=6pt, shorten >=6pt, Rightarrow, from=0, to=1]
    	\arrow["\lambda"', shorten <=6pt, shorten >=6pt, Rightarrow, from=2, to=0]
    \end{tikzcd}
    \quad=:\quad
    \begin{tikzcd}
    	& {\mathsf{J}'} \\
    	{\mathsf{Set}} && {\mathsf{C}}
    	\arrow[""{name=0, anchor=center, inner sep=0}, "1"', from=1-2, to=2-1]
    	\arrow[""{name=1, anchor=center, inner sep=0}, "{D'}", from=1-2, to=2-3]
    	\arrow["X", from=2-3, to=2-1]
    	\arrow["{\lambda'}"', shorten <=6pt, shorten >=6pt, Rightarrow, from=0, to=1]
    \end{tikzcd}.
\end{equation}
This equation should be compared with \cref{eq:diagram-hom-cones}.

In summary, we have seen, from the perspectives of both discrete opfibrations and category actions, how lifts of diagrams in a category $\cat{C}$ are pushed forward by morphisms in $\DiagOp(\cat{C})$. Importantly, morphisms in $\Diag(\cat{C})$ generally do \emph{not} have this property, which is why we regard the category $\DiagOp(\cat{C})$ as primary in this work. In \cref{sec:weak-equivalences}, we explore special conditions under which morphisms in $\Diag(\cat{C})$ push forward lifts of diagrams.

\section{Diagrams involving cartesian products} \label{sec:cartesian-diagrams}

The physical theories considered so far, such as the diffusion and wave equations, have been fairly simple. That is no accident. Although the basic vocabulary of category theory---categories, functors, and natural transformations---goes surprisingly far, more structure is needed to express richer physical theories. In this section and the next, we extend the diagrammatic formalism from bare categories to categories with extra structure, beginning with cartesian products, i.e., products in the categorical sense.

The objects of interest will now be cartesian diagrams in a cartesian category. For us, a \emph{cartesian category} is a category with finite products. Be warned that some authors (notably Johnstone \cite{johnstone2002}) use ``cartesian category'' to mean a category having all finite limits. A \emph{cartesian functor} between cartesian categories $\cat{C}$ and $\cat{D}$ is a functor $F: \cat{C} \to \cat{D}$ that preserves finite products, in the sense that if $\prod_{i \in I} x_i$ is a finite product of objects $x_i \in \cat{C}$, $i \in I$, with projections $\pi_i: \prod_{i \in I} x_i \to x_i$, then the images
\begin{equation*}
    F(\pi_i): F\Big(\prod_{i \in I} x_i\Big) \to F(x_i), \qquad
    i \in I,
\end{equation*}
constitute a product of the objects $F(x_i) \in \cat{D}$, $i \in I$. Natural transformations between cartesian functors require no extra conditions, as they automatically preserve products. To see this, suppose $\alpha: F \To G$ is a natural transformation between cartesian functors $F, G: \cat{C} \to \cat{D}$. Then, in the case of a binary product of $x,y \in \cat{C}$, we have
\begin{equation*}
    \alpha_{x \times y}
    = (\alpha_{x \times y} \cdot G(\pi_x), \alpha_{x \times y} \cdot G(\pi_y))
    = (F(\pi_x) \cdot \alpha_x, F(\pi_y) \cdot \alpha_y)
    = \alpha_x \times \alpha_y,
\end{equation*}
where the middle equation follows from the naturality of the transformation $\alpha$ applied to the projections $\pi_x: x \times y \to x$ and $\pi_y: x \times y \to y$ in $\cat{C}$.

The resulting terminology for diagrams is not standard but should be self-explanatory.

\begin{definition}[Cartesian diagram] \label{def:cartesian-diagram}
A \emph{cartesian diagram} in a cartesian category $\cat{C}$ is a cartesian functor $D: \cat{J} \to \cat{C}$, where the indexing category $\cat{J}$ is a small cartesian category.
\end{definition}
\noindent Similarly, morphisms of cartesian diagrams are defined as in \cref{def:diagram-categories}, replacing ``category'' with ``cartesian category'' and ``functor'' with ``cartesian functor.''

\begin{remark}[Diagrams, 2-categorically] \label{rem:2-categories}
As this replacement procedure suggests, the present paper's most fundamental definitions involve only \emph{formal category theory} in that they do not depend on the specifics of categories, functors, and natural transformations, but only on the fact that these form a 2-category. For example, given a 2-category $\catTwo{K}$, the 2-category $\Diag(\cat{C})$ of diagrams in $\cat{C} \in \catTwo{K}$ can be defined as the lax slice 2-category $\catTwo{K} \nearrow \cat{C}$; similarly, the 2-category $\DiagOp(\cat{C})$ can be defined as the opposite of the colax slice 2-category $\catTwo{K} \swarrow \cat{C}$, using Johnstone's notation \cite{johnstone1993}. Taking $\catTwo{K} = \Cat$ recovers the classical notion of diagrams, while taking $\catTwo{K} = \Cart$, the 2-category of cartesian categories, cartesian functors, and natural transformations, yields the setting of this section. At the expense of introducing mild logical redundancy, we deemphasize the 2-categorical perspective in order to avoid unnecessary formalism. More importantly, the theory in \cref{sec:lifting-properties} \emph{does} rely on the specifics of the 2-category $\Cat$ and so does not extend to other 2-categories without further argument.
\end{remark}

Any category of generalized elements of a cartesian category is itself a cartesian category. Given a cartesian category $\cat{C}$, the product of generalized elements $x_i: S \to X_i$, $i \in I$, in $\El_S(\cat{C})$ is the pairing $(x_i)_{i \in I}: S \to \prod_{i \in I} X_i$ obtained by the universal property of the product in $\cat{C}$. The projections $\pi_i: (x_i)_{i \in I} \to x_i$ in $\El_S(\cat{C})$ are simply the projections $\pi_i: \prod_{i \in I} X_i \to X_i$ in $\cat{C}$. Moreover, the codomain projection $\pi := \cod: \El_S(\cat{C}) \to \cat{C}$ is clearly a cartesian functor. One can therefore consider lifting problems and extension-lifting problems of cartesian diagrams through the cartesian functor $\pi$. The foundational results concerning extension-lifting problems from \cref{sec:lifting-properties} will be verified for cartesian diagrams at the end of this section. For now, we take them for granted in presenting examples of differential equations as lifting problems of cartesian diagrams.

In categories involving vector spaces, (finite) products are typically direct sums of vector spaces, which are also set-theoretic products. In particular, this is the case in the prototypical such category, $\Vect_\R$. In the category $\Sh_\R(M)$ of sheaves of vector spaces on a manifold $M$, products are constructed by taking direct sums of vector spaces locally, for each open subspace of $M$ \cite[\S 3.6]{wedhorn2016}.

\begin{example}[Diffusion with source] \label{ex:diffusion-with-source}
Extending \cref{ex:diffusion}, suppose that a substance is diffusing on a three-dimensional Riemannian manifold $M$ that has an external source or sink, which adds or removes substance at an instantaneous rate $S \in \Omega_t^0(M)$. The system is described by the \emph{nonhomogeneous diffusion equation},
\begin{equation*}
    \partial_t C = \star^{-1} d \phi + S,
\end{equation*}
where, as before, $\phi = k \star dC$ is the negative diffusion flux. Alternatively, the system is presented by the cartesian diagram
\begin{equation} \label{eq:diffusion-with-source-diagram}
    \begin{tikzcd}
    	& {(\dt C_\mathrm{flux}, S): (\Omega_t^0)^2} & {S: \Omega_t^0} \\
    	{C: \Omega_t^0} & {\dt C: \Omega_t^0} & {\dt C_{\mathrm{flux}}: \Omega_t^0} & {d\phi: \widetilde\Omega_t^3} \\
    	{dC: \Omega_t^1} &&& {\phi: \widetilde\Omega_t^2}
    	\arrow["{\pi_2}", from=1-2, to=1-3]
    	\arrow["{\pi_1}"', from=1-2, to=2-3]
    	\arrow["{+}"', from=1-2, to=2-2]
    	\arrow["{\partial_t}", from=2-1, to=2-2]
    	\arrow["d"', from=2-1, to=3-1]
    	\arrow["{\star^{-1}}"', from=2-4, to=2-3]
    	\arrow["d"', from=3-4, to=2-4]
    	\arrow["{k \star}"', from=3-1, to=3-4]
    \end{tikzcd}
\end{equation}
in $\Sh_\R(M \times [0,\infty))$, where $\dt C_{\mathrm{flux}}$ denotes the contribution to the total time derivative of $C$ by the flux of substance due to diffusion. While more spacious than the corresponding equation, the diagram highlights the conceptual roles played by the different parts of the equation by decomposing it into constituents like $\dt C$ and $\dt C_{\mathrm{flux}}$.

Any solution of the (homogoneous) diffusion equation gives a solution of the nonhomogeneous diffusion equation for which the source $S$ is identically zero. Encoding this fact, there is a morphism  from the diagram \eqref{eq:diffusion-diagram}, now construed as a cartesian diagram, to the cartesian diagram \eqref{eq:diffusion-with-source-diagram} above, which is the identity except as shown below.
\begin{equation*}
    \begin{tikzcd}
    	0 && {\dt C_{\mathrm{flux}}: \Omega_t^0} && {S: \Omega_t^0} \\
    	{(\dt C,0): \Omega_t^0 \oplus 0} &&& {(\dt C_{\mathrm{flux}}, S): \Omega_t^0 \oplus \Omega_t^0} \\
    	{\dt C: \Omega_t^0} &&& {\dt C: \Omega_t^0}
    	\arrow["{+}", from=2-4, to=3-4]
    	\arrow["{\pi_1}"', from=2-4, to=1-3]
    	\arrow["{\pi_2}", from=2-4, to=1-5]
    	\arrow["{\pi_1}"', dotted, from=2-1, to=3-1]
    	\arrow["{\pi_2}", dotted, from=2-1, to=1-1]
    	\arrow["{\mathrm{id}}", dashed, from=3-1, to=3-4]
    	\arrow["{\pi_1 \cdot \iota_1}"'{pos=0.7}, dashed, from=2-1, to=2-4]
    	\arrow["{\mathrm{id}}"'{pos=0.3}, dashed, from=3-1, to=1-3]
    	\arrow[curve={height=-18pt}, dashed, from=1-1, to=1-5]
    \end{tikzcd}
\end{equation*}
In this picture, the product on the left side, indicated by dotted lines, is implicitly part of the domain diagram because the diagram is cartesian, even if it would not usually be drawn. The map $\iota_1: \Omega_t^0 \to \Omega_t^0 \oplus \Omega_t^0$ is the inclusion $x \mapsto (x,0)$ associated with the direct sum. The transformation involved in the diagram morphism is natural, due to equations such as
\begin{equation*}
    \begin{tikzcd}
    	{\Omega_t^0} & {\Omega_t^0 \oplus \Omega_t^0} \\
    	& {\Omega_t^0}
    	\arrow["{\iota_1}", from=1-1, to=1-2]
    	\arrow["{\pi_1 \text{ or } +}", from=1-2, to=2-2]
    	\arrow["{\mathrm{id}}"', from=1-1, to=2-2]
    \end{tikzcd}
    \qquad\text{and}\qquad
    \begin{tikzcd}
    	{\Omega_t^0} & {\Omega_t^0 \oplus \Omega_t^0} \\
    	& {\Omega_t^0}
    	\arrow["{\iota_1}", from=1-1, to=1-2]
    	\arrow["{\pi_2}", from=1-2, to=2-2]
    	\arrow["0"', from=1-1, to=2-2]
    \end{tikzcd}.
\end{equation*}
\end{example}

Building on the conventions in \cref{sec:diagrams} for drawing diagrams, type-theoretic notation is used to distinguish between those objects in the indexing category of a cartesian diagram that are products and those that merely map to products. Consider the two cartesian diagrams
\begin{equation*}
    x: X \xleftarrow{\pi_X} (x,y): X \times Y \xrightarrow{\pi_Y} y: Y
    \qquad\text{and}\qquad
    x: X \xleftarrow{\pi_X} z: X \times Y \xrightarrow{\pi_Y} y: Y.
\end{equation*}
Both diagrams map to the product cone $X \xleftarrow{\pi_X} X \times Y \xrightarrow{\pi_Y} Y$, but their indexing categories are respectively the free cartesian category on two objects and the free cartesian category containing a span, i.e.,
\begin{equation*}
    j_1 \xleftarrow{\pi_1} j_1 \times j_2 \xrightarrow{\pi_2} j_2
    \qquad\text{and}\qquad
    j_1 \leftarrow j_0 \rightarrow j_2.
\end{equation*}
Thus, the first diagram specifies the product in the indexing category whereas the second does not. When a product is intended, it is advisable to include it in the indexing category since that ensures, for example, that any lift of a cartesian diagram through a functor $\pi$ will have the intended products, irrespective of the properties of $\pi$ (but see also \cref{lem:reflect-products}). Failing to do so can also have subtle but significant consequences for \emph{morphisms} of the diagram, as the next example shows.

\begin{example}[Maxwell-Faraday equations: dynamic and static]
Any solution of the static Maxwell-Faraday equations (\cref{ex:static-maxwell-faraday}) can be construed as a solution of the dynamic Maxwell-Faraday equations (\cref{ex:maxwell-faraday}) that happens to be constant in time. One would expect this to define a morphism of diagrams from the static equations to the dynamic ones. This is indeed the case \emph{if the diagrams are taken to be cartesian}, with terminal objects specified in the indexing category.

Writing $\rho^k: \Omega^k \to \Omega_t^k$ to mean the operation that sends $k$-forms to time-dependent $k$-forms that are constant in time, there is a morphism of cartesian diagrams
\begin{equation*}
    \begin{tikzcd}
    	{E: \Omega^1} &&& {E: \Omega_t^1} \\
    	{\Omega^2} & {B: \Omega^2} && {\dt B: \Omega_t^2} & {B:\Omega_t^2} \\
    	0 & {\Omega^3} &&& {\Omega_t^3} \\
    	& 0 &&& 0
    	\arrow["{-d}"', from=1-1, to=2-1]
    	\arrow[from=3-1, to=2-1]
    	\arrow["d"', from=2-2, to=3-2]
    	\arrow[dotted, from=2-2, to=3-1]
    	\arrow[from=4-2, to=3-2]
    	\arrow["{-d}", from=1-4, to=2-4]
    	\arrow["{\partial_t}"', from=2-5, to=2-4]
    	\arrow["d", from=2-5, to=3-5]
    	\arrow[from=4-5, to=3-5]
    	\arrow["{\rho^1}", dashed, from=1-1, to=1-4]
    	\arrow["{\rho^2}", curve={height=-12pt}, dashed, from=2-1, to=2-4]
    	\arrow["{\rho^2}"', curve={height=12pt}, dashed, from=2-2, to=2-5]
    	\arrow["{\rho^3}"', dashed, from=3-2, to=3-5]
    	\arrow[dashed, from=4-2, to=4-5]
    \end{tikzcd}
\end{equation*}
in $\Sh_\R(M)$, where the morphism indexing the time derivative $\partial_t$ on the right side is sent to the composite indexing the map $\Omega^2 \xrightarrow{!} 0 \xrightarrow{0} \Omega^2$ on the left side. The indexing morphism into the terminal object, indicated by the dotted arrow, would not usually be drawn at all, but it is implicitly contained in the indexing category because this category is cartesian. The corresponding naturality equation says that the time derivative of a time-dependent form that is constant in time is zero. The other naturality equations hold because spatial derivatives commute with the operations $\rho^k$.

This example shows that cartesian diagrams should be preferred to bare diagrams when using product structure, even when doing so in seemingly minor ways.
\end{example}

The use of products to specify sums, as in \cref{ex:diffusion-with-source}, is common enough to merit its own abbreviated notation: for any integer $n \geq 0$, the cartesian diagram
\begin{equation*}
    \begin{tikzcd}[row sep=tiny,every arrow/.append style={touch src}]
    	{x_1: X} \\
    	\vdots & {\left(x_i\right)_{i=1}^n: X^n} & {x:X} \\
    	{x_n: X}
    	\arrow["{\pi_1}"', from=2-2, to=1-1]
    	\arrow["{\pi_n}", from=2-2, to=3-1]
    	\arrow["{+}", from=2-2, to=2-3]
    \end{tikzcd}
\end{equation*}
is abbreviated as
\begin{equation*}
    \begin{tikzcd}[row sep=tiny,every arrow/.append style={touch src}]
    	{x_1: X} \\
    	\vdots & \bigcirc & {x:X} \\
    	{x_n: X}
    	\arrow[from=2-2, to=1-1]
    	\arrow[from=2-2, to=3-1]
    	\arrow["{+}", from=2-2, to=2-3]
    \end{tikzcd},
\end{equation*}
where it is not necessary to label the projections since addition is commutative. For readers familiar with string diagrams, the notation is deliberately reminiscent of the depiction of commutative monoid objects in symmetric monoidal categories \cite[\S 6]{selinger2010}, but note the reversal of the arrows corresponding to the projections.

With this notation, Maxwell's equations in matter admit the following presentation.

\begin{example}[Maxwell's house] \label{ex:maxwell-house}
Let $M$ be a three-dimensional Riemannian manifold. Phrased in exterior calculus and with units such that the speed of light is 1, \emph{Maxwell's equations} on the spatial domain $M$ are
\begin{align*}
    d E &= - \partial_t B \\
    d B &= 0 \\
    \delta E &= \rho \\
    \delta B &= \partial_t E + J,
\end{align*}
where $E \in \Omega_t^1(M)$ is the electric field, $B \in \Omega_t^2(M)$ is the magnetic field, $\rho: \Omega_t^0(M)$ is the electric charge density, and $J \in \Omega_t^1(M)$ is the current density \cite{baez1994}. As before, the operator $\delta := \star^{-1} \circ d \circ \star$ is the codifferential. If we assume the existence of an electric potential $\phi \in \Omega_t^0(M)$ and a magnetic potential $A \in \Omega_t^1(M)$, we can add the two further equations
\begin{equation} \label{eq:maxwell-potentials}
    \begin{aligned}
        E &= -d\phi - \partial_t A \\
        B &= d A.
    \end{aligned}
\end{equation}
Since $d^2 = 0$, these latter two equations imply the first half of Maxwell's equations, the Maxwell-Faraday equations (\cref{ex:maxwell-faraday}).

When studying electromagnetism in a material medium, it is conventional to introduce the electric displacement $\widetilde D \in \Omega_t^2(M)$ and magnetic intensity $\widetilde H \in \Omega_t^1(M)$, related to the electric and magnetic fields by the constitutive relations
\begin{equation} \label{eq:maxwell-constitutive}
    \widetilde D := \epsilon \star E
    \qquad\text{and}\qquad
    \widetilde H := \frac{1}{\mu} \star B,
\end{equation}
where the positive constants $\epsilon, \mu \in C^\infty(M)$ are properties of the medium. The second half of Maxwell's equation are then replaced by
\begin{equation} \label{eq:maxwell-matter}
    \begin{aligned}
        d \widetilde{D} &= \widetilde \rho \\
        d \widetilde{H} &= \partial_t \widetilde{D} + \widetilde J,
    \end{aligned}
\end{equation}
where $\widetilde\rho \in \widetilde\Omega_t^3(M)$ and $\widetilde J \in \widetilde\Omega_t^2(M)$ are twisted proxies for the charge and current densities. Finally, \emph{Ohm's law}
\begin{equation} \label{eq:ohm}
    \widetilde J = \sigma \star E,
\end{equation}
where the conductivity $\sigma \in C^\infty(M)$ is another property of the medium, is an equation independent of Maxwell's that is often added.

A cartesian diagram for Maxwell's equations in matter, assuming Ohm's law and the existence of an electric and magnetic potential, is shown in \cref{fig:maxwell-house}. It faithfully encodes all of \cref{eq:maxwell-potentials,eq:maxwell-constitutive,eq:maxwell-matter,eq:ohm}. It fulfills the promise of the Introduction to give a rigorous interpretation to Maxwell's house, which was shown in its traditional, informal style in \cref{fig:maxwell-house-traditional}. The inconsistency pointed out in the Introduction is avoided through the use of products.

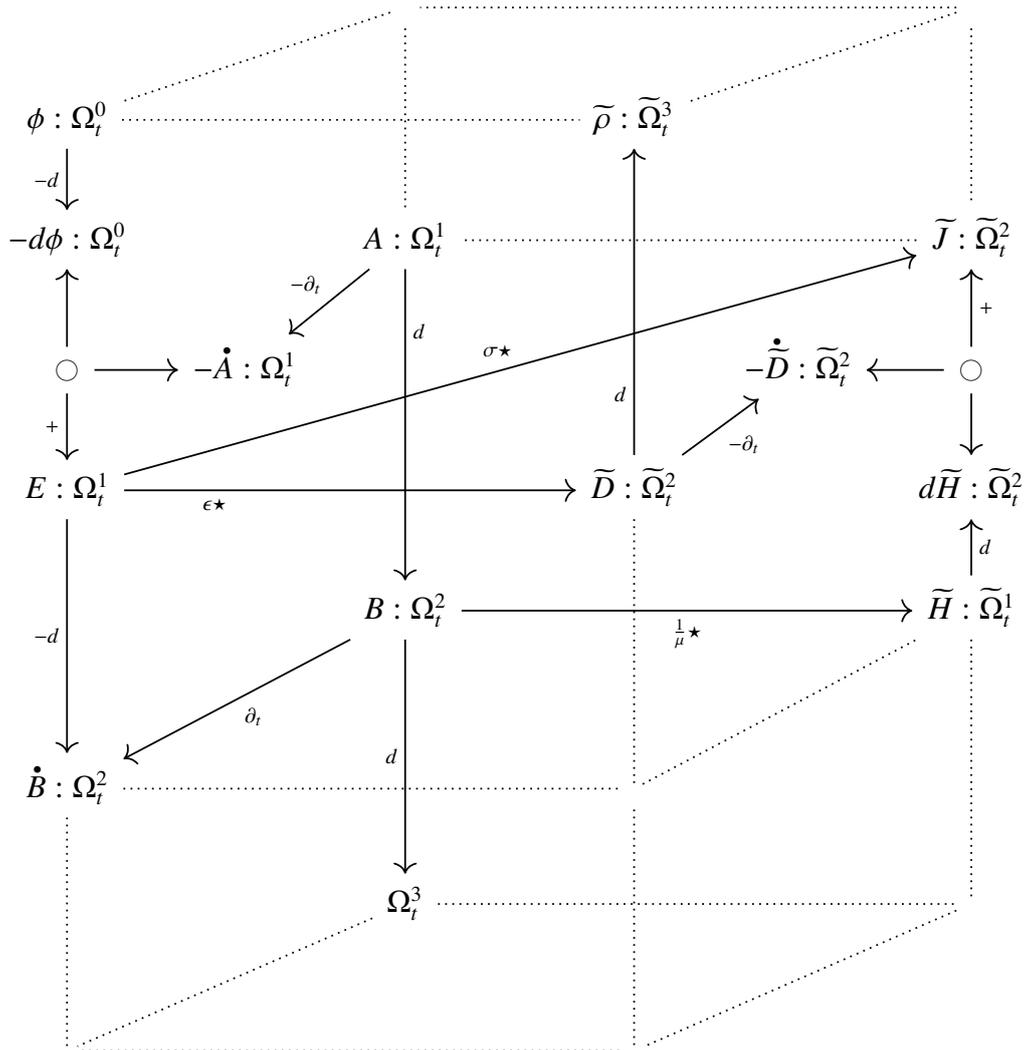
\begin{figure}[H]
    \centering
    \begin{equation*}
        \begin{tikzcd}[column sep=small]
        	&& {} &&&&& {} \\
        	{\phi: \Omega_t^0} &&&&& {\widetilde\rho: \widetilde\Omega_t^3} \\
        	{-d\phi: \Omega_t^0} && {A: \Omega_t^1} &&&&& {\widetilde{J}: \widetilde\Omega_t^2} \\
        	\bigcirc & {- \dt A: \Omega_t^1} &&&&& {-\dt{\widetilde{D}}: \widetilde\Omega_t^2} & \bigcirc \\
        	{E: \Omega_t^1} &&&&& {\widetilde{D}: \widetilde\Omega_t^2} && {d \widetilde{H}: \widetilde\Omega_t^2} \\
        	&& {B: \Omega_t^2} &&&&& {\widetilde{H}: \widetilde\Omega_t^1} \\
        	\\
        	{\dt B: \Omega_t^2} &&&&& {} \\
        	&& {\Omega_t^3} &&&&& {} \\
        	\\
        	{} &&&&& {}
        	\arrow["{+}"', from=4-1, to=5-1]
        	\arrow[from=4-1, to=3-1]
        	\arrow["{-d}"', from=2-1, to=3-1]
        	\arrow[from=4-1, to=4-2]
        	\arrow["{-\partial_t}"', from=3-3, to=4-2]
        	\arrow["{-d}"', from=5-1, to=8-1]
        	\arrow["d"{pos=0.2}, from=3-3, to=6-3]
        	\arrow["{\partial_t}", from=6-3, to=8-1]
        	\arrow["{\epsilon \star}"'{pos=0.2}, from=5-1, to=5-6]
        	\arrow["{\frac{1}{\mu} \star}"', from=6-3, to=6-8]
        	\arrow["d"', from=6-3, to=9-3]
        	\arrow["{-\partial_t}"', from=5-6, to=4-7]
        	\arrow["d"', from=6-8, to=5-8]
        	\arrow["d"{pos=0.2}, from=5-6, to=2-6]
        	\arrow[from=4-8, to=5-8]
        	\arrow[from=4-8, to=4-7]
        	\arrow["{+}"', from=4-8, to=3-8]
        	\arrow[dotted, no head, from=8-1, to=11-1]
        	\arrow[dotted, no head, from=8-1, to=8-6]
        	\arrow[dotted, no head, from=8-6, to=6-8]
        	\arrow[dotted, no head, from=2-1, to=2-6]
        	\arrow[dotted, no head, from=11-1, to=11-6]
        	\arrow[dotted, no head, from=8-6, to=11-6]
        	\arrow[dotted, no head, from=5-6, to=8-6]
        	\arrow[dotted, no head, from=6-8, to=9-8]
        	\arrow[dotted, no head, from=9-8, to=11-6]
        	\arrow[dotted, no head, from=9-3, to=9-8]
        	\arrow[dotted, no head, from=9-3, to=11-1]
        	\arrow[dotted, no head, from=1-3, to=2-1]
        	\arrow[dotted, no head, from=3-8, to=3-3]
        	\arrow[dotted, no head, from=1-8, to=1-3]
        	\arrow[dotted, no head, from=1-8, to=2-6]
        	\arrow[dotted, no head, from=1-8, to=3-8]
        	\arrow[dotted, no head, from=1-3, to=3-3]
        	\arrow["{\sigma \star}", from=5-1, to=3-8]
        \end{tikzcd}
    \end{equation*}
    \caption{Maxwell's house. Maxwell's equations in matter, formulated in exterior calculus on a three-dimensional Riemannian manifold $M$, are presented as a cartesian diagram in the category $\Sh_\R(M \times [0,\infty))$. The dotted lines exist only to bring out the three-dimensional shape of the diagram and have no formal meaning. The remainder of the picture constitutes a well-defined mathematical object, as explained in this and preceding sections. Compare with the traditional, informal version of Maxwell's house in \cref{fig:maxwell-house-traditional}.}
    \label{fig:maxwell-house}
\end{figure}

\end{example}

{\bf Lifting morphisms of cartesian diagrams.}
To conclude this section, we extend the main lifting result of \cref{sec:lifting-properties} to the cartesian setting. The reader willing to take this result for granted may skip the following without loss of continuity.

\begin{theorem}
Suppose that $\pi: \cat{E} \to \cat{C}$ is a cartesian discrete opfibration between cartesian categories. Then the functor $\DiagOp(\pi): \DiagOp(\cat{E}) \to \DiagOp(\cat{C})$ between categories of cartesian diagrams is also a discrete opfibration.
\end{theorem}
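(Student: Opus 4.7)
The plan is to reduce to the non-cartesian \cref{thm:diagram-opfibration} and then promote the conclusion. Given a morphism $(R,\rho)\colon (\cat{J},D) \to (\cat{J}',D')$ of cartesian diagrams in $\cat{C}$---so that $\cat{J}, \cat{J}'$ are small cartesian categories and $R, D, D'$ are cartesian functors---together with a cartesian lift $\overline D\colon\cat{J}\to\cat{E}$ of $D$ through $\pi$, I forget the cartesian structure and apply \cref{thm:diagram-opfibration} to the underlying discrete opfibration. This produces a unique pair consisting of a bare functor $\overline D'\colon\cat{J}'\to\cat{E}$ and a natural transformation $\overline\rho\colon\overline D\circ R\To\overline D'$ satisfying $\pi\circ\overline D' = D'$ and $\pi\ast\overline\rho = \rho$. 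The remaining task is to verify that this $\overline D'$ is automatically a cartesian functor; uniqueness at the cartesian level then descends from uniqueness at the bare level.

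The key input for the upgrade is the reflection lemma for cartesian discrete opfibrations (the forthcoming \cref{lem:reflect-products}): if $(p_i\colon e\to e_i)_{i\in I}$ is a finite cone in $\cat{E}$ whose image $(\pi(p_i))_{i\in I}$ is a product cone in $\cat{C}$, then the original cone is already a product cone in $\cat{E}$. I would prove this via the equivalence $\DOpf(\cat{C})\simeq\catSet{C}$ of \eqref{eq:dopf-cset-equivalence}. Cartesianness of $\pi$ corresponds to the associated copresheaf $X\colon\cat{C}\to\Set$ preserving finite products, so the canonical map $X(c_1\times\cdots\times c_n)\to X(c_1)\times\cdots\times X(c_n)$ induced by the projections is a bijection. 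Under this bijection, any cone $(e\to e_i)$ in $\El(X)$ lying over a chosen product cone $c_1\leftarrow c_1\times\cdots\times c_n\to c_n$ is uniquely determined by the elements $x_i\in X(c_i)$ representing the $e_i$, and this unique cone is exactly the product of the $e_i$ in $\El(X)$.

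Granted the reflection lemma, cartesianness of $\overline D'$ is then a direct check. For any product cone $(q_i\colon j\to j_i)_{i\in I}$ in $\cat{J}'$, the image cone $(\overline D'(q_i)\colon \overline D'(j)\to\overline D'(j_i))_{i\in I}$ in $\cat{E}$ satisfies $\pi\circ\overline D'(q_i) = D'(q_i)$, and $(D'(q_i))_{i\in I}$ is a product cone in $\cat{C}$ because $D'$ is cartesian. By the reflection lemma, the cone in $\cat{E}$ is a product cone, so $\overline D'$ preserves this product. Applying the same argument to the empty product yields preservation of terminal objects.

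The main obstacle is the reflection lemma itself, which is where the combined hypotheses on $\pi$---being cartesian and a discrete opfibration---do essential work; neither condition alone suffices. Once this lemma is in hand, the rest of the proof is straightforward bookkeeping against \cref{thm:diagram-opfibration} and the universal properties of products in the respective indexing and target categories.
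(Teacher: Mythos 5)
Your proposal is correct and follows essentially the same route as the paper: reduce to \cref{thm:diagram-opfibration} for the underlying diagrams, then upgrade the unique lift $\overline D'$ to a cartesian functor by showing that a cartesian discrete opfibration reflects finite products (the paper's \cref{lem:cartesian-diagram-lifts,lem:reflect-products}). The only divergence is in how that reflection lemma is proved --- the paper invokes the standard creation-of-limits criterion (a functor that preserves existing limits and reflects isomorphisms reflects those limits), whereas you argue through the equivalence $\DOpf(\cat{C})\simeq\catSet{C}$ and the bijectivity of $X(c_1\times\cdots\times c_n)\to\prod_i X(c_i)$; both arguments are valid, and yours is more explicit at the cost of routing through the copresheaf correspondence.
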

\begin{proof}
Suppose $(R,\rho): D \to D'$ is a morphism of cartesian diagrams and $\overline D$ is a cartesian diagram lifting $D$ through $\pi$. Then, in particular, $(R,\rho)$ is a morphism of (bare) diagrams and $\overline D$ is a diagram lifting $D$ through $\pi$. By \cref{thm:diagram-opfibration}, there exists a unique morphism of diagrams $(R, \overline\rho): \overline D \to \overline D'$ lifting $(R,\rho)$. Thus, to prove that $\DiagOp(\pi)$ is a discrete opfibration, we must show that the lift $\overline D'$ is necessarily a \emph{cartesian} diagram. That is a consequence of the following lemma.
\end{proof}

\begin{lemma} \label{lem:cartesian-diagram-lifts}
If $D: \cat{J} \to \cat{C}$ is a cartesian diagram and $\overline D: \cat{J} \to \cat{E}$ is a diagram lifting $D$ through a cartesian discrete opfibration $\pi: \cat{E} \to \cat{C}$, then $\overline D$ is also cartesian.
\end{lemma}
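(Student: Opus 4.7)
The plan is to check directly from the definitions that $\overline D$ preserves finite products, using the discrete opfibration property to compare two \emph{a priori} distinct objects of $\cat{E}$ that both sit over the same object of $\cat{C}$. Fix a finite product $j = \prod_{i \in I} j_i$ in $\cat{J}$ with projections $p_i: j \to j_i$, and let $\prod_{i \in I} \overline D(j_i)$ denote the product in $\cat{E}$ of the $\overline D(j_i)$, equipped with projections $q_i$. Because $D$ is cartesian, $D(j) = \prod_i D(j_i)$ with projections $D(p_i)$; because $\pi$ is cartesian, the same is true of $\pi(\prod_i \overline D(j_i))$ with projections $\pi(q_i)$. Thus both $\overline D(j)$ and $\prod_i \overline D(j_i)$ lie in the fiber of $\pi$ over $D(j) = \prod_i D(j_i)$, and both $\overline D(p_i)$ and $q_i$ are lifts of $D(p_i)$.

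By the universal property of the product $\prod_i \overline D(j_i)$ in $\cat{E}$, the family $\overline D(p_i): \overline D(j) \to \overline D(j_i)$ induces a unique morphism
\begin{equation*}
    h: \overline D(j) \to \prod_{i \in I} \overline D(j_i)
\end{equation*}
satisfying $q_i \circ h = \overline D(p_i)$ for every $i$. Applying $\pi$ and using $\pi(q_i) = D(p_i) = \pi(\overline D(p_i))$ yields $D(p_i) \circ \pi(h) = D(p_i)$ for all $i$, so the universal property of $D(j) = \prod_i D(j_i)$ in $\cat{C}$ forces $\pi(h) = \id_{D(j)}$.

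Now the discrete opfibration property closes the argument: starting from $\overline D(j)$, the identity $\id_{D(j)}$ admits $\id_{\overline D(j)}$ as a lift, and by \cref{def:discrete-opfibration} this lift is unique, codomain included. Since $h$ is also a lift of $\id_{D(j)}$ with domain $\overline D(j)$, we must have $\prod_i \overline D(j_i) = \overline D(j)$ and $h = \id_{\overline D(j)}$, whence $\overline D(p_i) = q_i \circ h = q_i$, so $\overline D(j)$ with its projections $\overline D(p_i)$ really is the product of the $\overline D(j_i)$ in $\cat{E}$. The empty-product case is handled identically: the unique map $\overline D(\top_{\cat{J}}) \to \top_{\cat{E}}$ projects under $\pi$ to $\id_{\top_{\cat{C}}}$, which lifts to the identity on $\overline D(\top_{\cat{J}})$, so the two terminal objects coincide. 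The only real obstacle is recognizing that the two candidate products can be compared through the fiber of $\pi$ and then identified via uniqueness of lifts; once this observation is made, no further calculation is required.
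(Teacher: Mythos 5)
There is a genuine gap at the step where you place $\prod_{i}\overline D(j_i)$ ``in the fiber of $\pi$ over $D(j)$'' and write $\pi(q_i)=D(p_i)$. The hypothesis that $\pi$ is cartesian only says that the cone $\pi(q_i)\colon \pi\bigl(\prod_i \overline D(j_i)\bigr)\to D(j_i)$ is \emph{a} product cone over the $D(j_i)$; it does not say that this cone coincides with the cone $D(p_i)\colon D(j)\to D(j_i)$, because products are determined only up to canonical isomorphism, not on the nose. Consequently $\pi(h)$ is the canonical comparison isomorphism from $D(j)$ to $\pi\bigl(\prod_i\overline D(j_i)\bigr)$ rather than the identity, and your concluding move---comparing $h$ with $\id_{\overline D(j)}$ as two lifts of $\id_{D(j)}$---does not apply as stated. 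The conclusion you draw, that $\prod_i\overline D(j_i)$ is literally equal to $\overline D(j)$, is also stronger than what is true or needed: the lemma only requires that $\overline D(j)$ with the projections $\overline D(p_i)$ \emph{be} a product cone, not that it coincide with a previously chosen product object. The same issue affects your treatment of the empty product.

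The repair is short and lands you essentially on the paper's argument. Since $\pi(h)$ is an isomorphism and discrete opfibrations reflect (indeed lift) isomorphisms---lift $\pi(h)^{-1}$ starting at the codomain of $h$ and use uniqueness of lifts of identities to see that the two lifts are mutually inverse---the comparison map $h$ is an isomorphism, so $\overline D(j)$ equipped with $\overline D(p_i)=q_i\circ h$ is a product of the $\overline D(j_i)$, which is all that is required. The paper packages exactly this observation as \cref{lem:reflect-products}: a product-preserving discrete opfibration reflects finite products because it preserves them and reflects isomorphisms; applying this to the cone $\{\overline D(p_i)\}$, whose image under $\pi$ is the product cone $\{D(p_i)\}$, finishes the proof.
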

\begin{proof}
For any product cone $\{\pi_i: j \to j_i\}_{i \in I}$ in $\cat{J}$, its image $\{D(\pi_i): Dj\to Dj_i\}_{i \in I}$ is a product cone in $\cat{C}$ and its image $\{\overline{D}(\pi_i): \overline{D}j \to \overline{D}j_i\}_{i \in I}\}$ is a cone in $\cat{E}$. Since $D = \overline D \circ \pi$, the latter cone will be a product cone if the functor $\pi$ reflects products, which is true by the final lemma below.
\end{proof}

\begin{lemma} \label{lem:reflect-products}
A discrete opfibration (or discrete fibration) between cartesian categories that preserves finite products also reflects finite products.
\end{lemma}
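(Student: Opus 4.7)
I will treat the discrete opfibration case; the fibration case is dual. Fix a cone $\{p_i: e \to e_i\}_{i \in I}$ in $\cat{E}$ whose image $\{\pi(p_i): \pi(e) \to \pi(e_i)\}_{i \in I}$ is a product cone in $\cat{C}$, and aim to show that $\{p_i\}$ already has the universal property in $\cat{E}$.

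The plan is to compare the given cone with a \emph{chosen} product cone in $\cat{E}$. Let $\widehat e := \prod_{i \in I} e_i$ in $\cat{E}$, with projections $\widehat p_i: \widehat e \to e_i$. Because $\pi$ preserves products, $\{\pi(\widehat p_i)\}$ is also a product cone in $\cat{C}$ over $\{\pi(e_i)\}$, so the two product cones in $\cat{C}$ are related by a unique isomorphism $\phi: \pi(e) \simto \pi(\widehat e)$ satisfying $\pi(\widehat p_i) \circ \phi = \pi(p_i)$. Using that $\pi$ is a discrete opfibration, lift $\phi$ starting at $e$ to obtain a unique morphism $\overline\phi: e \to e^*$ in $\cat{E}$ with $\pi(\overline\phi) = \phi$ and $\pi(e^*) = \pi(\widehat e)$. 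The crux of the argument is to identify $e^*$ with $\widehat e$ and show that $\overline\phi$ is an isomorphism compatible with the two cones.

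To identify $e^*$ with $\widehat e$, I would lift each $\pi(\widehat p_i)$ starting at $e^*$ to a unique morphism $p_i^*: e^* \to \overline{e_i}$. Pre-composing with $\overline\phi$ produces a morphism $e \to \overline{e_i}$ that lifts $\phi \cdot \pi(\widehat p_i) = \pi(p_i)$; but $p_i: e \to e_i$ already lifts $\pi(p_i)$ from $e$, and by uniqueness of lifts of a discrete opfibration we conclude $\overline{e_i} = e_i$ and $\overline\phi \cdot p_i^* = p_i$. Now $\{p_i^*: e^* \to e_i\}$ is a cone in $\cat{E}$, so the universal property of $\widehat e$ gives a unique factorization $v: e^* \to \widehat e$ with $\widehat p_i \circ v = p_i^*$. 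Pushing $v$ down, $\pi(v)$ satisfies $\pi(\widehat p_i) \circ \pi(v) = \pi(\widehat p_i)$ for each $i$, forcing $\pi(v) = \id_{\pi(\widehat e)}$; and since $\id_{e^*}$ is the unique lift of $\id_{\pi(e^*)}$ starting at $e^*$, one has $v = \id_{e^*}$, which forces $e^* = \widehat e$ and $p_i^* = \widehat p_i$. The same uniqueness-of-lifts trick, applied to $\phi^{-1}$ lifted from $\widehat e$, produces a two-sided inverse to $\overline\phi$, so $\overline\phi: e \simto \widehat e$ is an isomorphism in $\cat{E}$ with $\widehat p_i \circ \overline\phi = p_i$. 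Hence the original cone $\{p_i\}$ is isomorphic to the product cone $\{\widehat p_i\}$ in the usual 2-categorical sense and is itself a product cone.

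The only step that needs care is the identification $e^* = \widehat e$: this is exactly where both hypotheses get used together, since neither preservation of products nor the opfibration property alone would suffice. Everything else is a routine exercise in chasing uniqueness of lifts, and the discrete fibration version follows by passing to opposite categories.
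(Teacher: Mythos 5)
Your argument is correct, but it takes a much more hands-on route than the paper. The paper's proof is two lines: it invokes the general fact that a functor creates (hence reflects) limits of a given shape whenever its domain has those limits, the functor preserves them, and the functor reflects isomorphisms, and then observes that discrete (op)fibrations reflect isomorphisms. Your proof unwinds exactly this criterion in the special case of products: your $\overline\phi$ is, once you have identified $e^*$ with $\widehat e$, just the canonical comparison map $e \to \prod_{i} e_i$, and your final step reproves from scratch that a discrete opfibration reflects isomorphisms. What your version buys is self-containedness --- no appeal to an outside exercise --- at the cost of a detour: you could shorten it considerably by starting from the canonical factorization $u: e \to \widehat e$ with $\widehat p_i \circ u = p_i$ (which exists by the universal property of $\widehat e$ alone, with no lifting), noting that $\pi(u)$ is a comparison map between two product cones in $\cat{C}$ and hence an isomorphism, and then applying reflection of isomorphisms; the lifting of $\phi$ and the identification of $e^*$ with $\widehat e$ become unnecessary. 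One small caveat: the discrete fibration case does not literally ``follow by passing to opposite categories,'' since $\pi^{\mathrm{op}}$ is then an opfibration that preserves finite \emph{coproducts}, not products. The claim is still true, and the skeleton of the argument (comparison map plus reflection of isomorphisms) is insensitive to the variance, but your specific domain-anchored lifting steps would have to be re-anchored at codomains rather than obtained by formal duality; the paper's formulation sidesteps this entirely because both kinds of discrete fibration reflect isomorphisms.
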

\begin{proof}
In general, a functor $F: \cat{C} \to \cat{D}$ creates limits for a class of diagrams, and in particular reflects them, whenever (i) $\cat{C}$ has those limits and $F$ preserves them and (ii) $F$ reflects isomorphisms \cite[Exercise 3.3.iv]{riehl2016}. Since discrete (op)fibrations reflect isomorphisms, the result follows.
\end{proof}

\section{Diagrams involving tensor products} \label{sec:monoidal-diagrams}

Any physical system modeled by \emph{linear} partial differential equations can be formalized using cartesian diagrams in a category of vector spaces or sheaves of vector spaces, as in the preceding section. The classical physics of diffusion, waves, and electromagnetism belong to this setting, as does the quantum mechanics of continuous systems, described by Schr\"odinger's equation. But, of course, many other physical systems exhibit nonlinearity. 

Nonlinear physics is accomodated by the diagrammatic formalism in at least two different ways. First, cartesian diagrams can be adapted to nonlinear equations by simply abandoning the category of (sheaves of) vector spaces and linear maps in favor of the category of (sheaves of) sets and functions. One could also consider vector spaces and possibly nonlinear maps as a compromise. In either case, categorical products are given by set-theoretic products. This approach is perfectly viable and, for strongly nonlinear systems like the sine-Gordon equation \cite{rubinstein1970} (which involves the sine of an unknown function) may even be necessary. However, most equations of mathematical physics exhibit more structure than an arbitrary nonlinear system in that, while they may be nonlinear, they are at least \emph{multilinear} in the unknown functions and their derivatives. In such cases, it is natural to remain in a category with linear maps, using the familiar device of tensor products to reduce multilinearity to linearity.

In this section, we study diagrams involving tensor products. The tensor product of vector spaces is an example of a \emph{symmetric monoidal product} in a category: a product-like operation on a category that does not necessarily satisfy the universal property of a product or coproduct, yet is still associative, unital, and commutative up to coherent natural isomorphism. So, to be more precise, we study diagrams in symmetric monoidal categories. We assume a working knowledge of symmetric monoidal categories, as can be gleaned from the surveys by Baez and Stay \cite{baez2010}, Coecke and Parquette \cite{coecke2010}, and Selinger \cite{selinger2010}.

While it may appear that the passage from diagrams with cartesian products to diagrams with monoidal products is shorter than that from bare diagrams to those with cartesian products, the opposite is the case. The projections associated with a cartesian product, together with their universal property, permit cartesian diagrams to be presented in essentially the same style as ordinary diagrams. Although we have been agnostic about the syntax of cartesian diagrams in the logician's strict sense, a syntactical, graph-theoretic description can be given using finite product sketches \cite{wells1993}. Because a generic monoidal product does not have projections, just as a tensor product of vector spaces does not, monoidal diagrams require an entirely different visual and combinatorial mode of presentation. This section therefore marks a more significant departure than the previous one, and certain aspects of the theory will be less fully developed.

The objects of interest are now suitably structured diagrams in a symmetric monoidal category (SMC).

\begin{definition}[Monoidal diagram] \label{def:monoidal-diagram}
A \emph{symmetric monoidal diagram} in an SMC $\cat{C}$ is a symmetric monoidal functor $D: \cat{J} \to \cat{C}$, where the indexing SMC $\cat{J}$ is small.
\end{definition}

Morphisms of symmetric monoidal diagrams are defined similarly to \cref{def:diagram-categories}, replacing ``category'' with ``symmetric monoidal category'', ``functor'' with ``symmetric monoidal functor,'' and ``natural transformation'' with ``monoidal natural transformation.'' Formally, the 2-categories of symmetric monoidal diagrams are the diagram 2-categories relative to the 2-category $\SMC$, in the sense of \cref{rem:2-categories}.

Recall that a symmetric monoidal functor $(F,\Phi,\phi): (\cat{C}, \otimes_{\cat{C}}, I_{\cat{C}}) \to (\cat{D}, \otimes_{\cat{D}}, I_{\cat{D}})$ between SMCs consists of a functor $F: \cat{C} \to \cat{D}$ between the underlying categories, together with a natural family of structure morphisms
\begin{equation*}
    \Phi_{x,y}: Fx \otimes_{\cat{D}} Fy \to F(x \otimes_{\cat{C}} y),
    \qquad x, y \in \cat{C},
\end{equation*}
and $\phi: I_{\cat{D}} \to F(I_{\cat{C}})$, subject to certain axioms. In general, a symmetric monoidal functor is \emph{lax}. When the structure morphisms are all isomorphisms, it is \emph{strong}; when they are identities, it is \emph{strict}. In this paper, including in \cref{def:monoidal-diagram}, symmetric monoidal functors are taken to be strong unless otherwise stated. Moreover, we can and will assume that the shapes of symmetric monoidal diagrams are strict SMCs and that the maps between shapes are strict symmetric monoidal functors.

If free categories, which are the shapes of free diagrams, are generated by graphs, then what kind of structure generates free strict SMCs, which are the shapes of free symmetric monoidal diagrams? In an influential paper \cite{meseguer1990}, Meseguer and Montanari established a connection between Petri nets and free strict SMCs, which has been fully clarified only recently through work by Baez et al \cite{baez2021nets} and independently by Kock \cite{kock2020}. Free strict SMCs are generated by a Petri-net-like structure called \emph{$\Sigma$-nets} \cite{baez2021nets}, of which a special case are \emph{whole-grain Petri nets} \cite{kock2020}. We refer to the cited papers for the precise definitions, but idea is simple enough. A \emph{net} is a kind of directed bipartite graph, with vertices of one type---drawn as ellipses---being the generating objects of the SMC and the vertices of the other type---drawn as rectangles---being the generating morphisms. Input and output arcs between the two types of vertices specify the domain and codomain of the generating morphisms. For example, the free strict SMC generated by objects $w, x, y, z$ and morphisms $f: w \to x \otimes y$ and $g: x \otimes y \to z$ is presented by the net
\begin{equation*}
    \begin{tikzpicture}
      \node[place] (x) at (0,0.75) {$x$};
      \node[place] (y) at (0,-0.75) {$y$};
      \node[place] (w) at (-2.5,0) {$w$};
      \node[place] (z) at (2.5,0) {$z$};
      \node[transition] (f) at (-1,0) {$f$};
      \node[transition] (g) at (1,0) {$g$};
      \path[commutative diagrams/.cd, every arrow, every label]
        (w) edge (f)
        (f) edge (x)
        (f) edge (y)
        (x) edge (g)
        (y) edge (g)
        (g) edge (z);
    \end{tikzpicture}.
\end{equation*}
To save space and emphasize connections with graph-based diagrams, we adopt the nonstandard convention of drawing morphism generators that have exactly one input and one output as a labeled arrow. Thus, the pictures
\begin{equation*}
    \begin{tikzpicture}[baseline=(current bounding box.center)]
      \node[place] (x) at (-1.5,0) {$x$};
      \node[place] (y) at (1.5,0) {$y$};
      \node[transition] (f) at (0,0) {$f$};
      \path[commutative diagrams/.cd, every arrow, every label]
        (x) edge (f)
        (f) edge (y);
    \end{tikzpicture}
    \qquad\text{and}\qquad
    \begin{tikzpicture}[baseline=(current bounding box.center)]
      \node[place] (x) at (-1,0) {$x$};
      \node[place] (y) at (1,0) {$y$};
      \path[commutative diagrams/.cd, every arrow, every label]
        (x) edge node {$f$} (y);
    \end{tikzpicture}
\end{equation*}
both represent the generating morphism $f: x \to y$. In summary, using  nets, we can visually present free strict SMCs and thus also free symmetric monoidal diagrams.

The category of generalized elements of shape $S$ in an SMC $(\cat{C}, \otimes, I)$ becomes an SMC itself when $S$ has the additional structure of commutative comonoid object $(S, \delta, \epsilon)$ in $\cat{C}$. Here $\delta: S \to S \otimes S$ is the comultiplication and $\epsilon: S \to I$ is the counit. The monoidal product on $\El_S(\cat{C})$ is defined on objects $x: S \to X$ and $y: S \to Y$ by
\begin{equation*}
    \begin{tikzcd}
    	S && {X \otimes Y} \\
    	& {S \otimes S}
    	\arrow["{x \otimes_{\mathrm{El}(\mathsf{C})} y}", dashed, from=1-1, to=1-3]
    	\arrow["\delta"', from=1-1, to=2-2]
    	\arrow["{x \otimes_{\mathsf{C}} y}"', from=2-2, to=1-3]
    \end{tikzcd},
\end{equation*}
with monoidal unit $\epsilon: S \to I$, and on morphisms as in $\cat{C}$. Defining the associators, unitors, and braidings of $\El_S(\cat{C})$ by post-composition with the associators, unitors, and braidings of $\cat{C}$, it is straightforward to check that the former are well-defined and satisfy the SMC axioms. For example, the braidings are well-defined because the composites
\begin{equation*}
    S \xrightarrow{\delta} S \otimes S \xrightarrow{x \otimes y} X \otimes Y \xrightarrow{\sigma_{X,Y}} Y \otimes X
    \qquad\text{and}\qquad
    S \xrightarrow{\delta} S \otimes S \xrightarrow{y \otimes x} Y \otimes X
\end{equation*}
are equal by the naturality of the family $\sigma_{X,Y}$, $X,Y \in \cat{C}$, and by the commutativity of the comonoid object. Moreover, the codomain projection $\pi := \cod: \El_S(\cat{C}) \to \cat{C}$ is a \emph{strict} symmetric monoidal functor.

Two special cases of this construction are particularly relevant. First, when a category $\cat{C}$ has finite products, a \emph{cartesian monoidal category} $(\cat{C}, \times, 1)$ is defined by making a choice of products for each pair of objects and a choice of terminal object. For any object $S \in \cat{C}$, there is a unique commutative comonoid object $(S, \Delta_S, !_S)$ with carrier $S$, in which $\Delta_S$ is the diagonal map $(\id_S, \id_S): S \to S \times S$ and $!_S$ is the unique map $S\to 1$. The category of generalized elements $\El_S(\cat{C})$ then becomes a cartesian monoidal category itself. Thus, the setting of the previous \cref{sec:cartesian-diagrams} is a special case of this one, apart from the need to make a choice of products in order to obtain a monoidal structure. The cartesian case has enough special features to merit its own treatment, however.

The other case of interest is when the category of generalized elements has shape $S = I$ equal to the monoidal unit of the SMC $(\cat{C}, \otimes, I)$. The commutative comonoid object $(I, \delta, \epsilon)$ is given by unitor isomorphisms, with the comultiplication and counit being
\begin{equation*}
    \delta := \lambda_I^{-1} = \rho_I^{-1}: I \xrightarrow{\cong} I \otimes I
    \qquad\text{and}\qquad
    \epsilon := \id_I: I \xrightarrow{\cong} I.
\end{equation*}
In this case, the SMC is often a symmetric \emph{closed} monoidal category, as in the motivating example of $(\Vect_\R, \otimes, \R)$, the category of real vector spaces with its tensor product. In the category $\Sh_\R(M)$ of sheaves of vector spaces on a manifold $M$, the tensor product is defined by taking tensor products of vector spaces locally, for each open subspace of $M$, followed by sheafification.

A final complication is that tensor products and cartesian products/direct sums are often needed in conjunction. The structure of a category with two symmetric monoidal products, where one distributes over the other up to coherent isomorphism, is captured by the notion of a \emph{symmetric rig category} or \emph{symmetric bimonoidal category} \cite[Chapter 2]{johnson2021}. In the following examples, we make free use of both tensor products and cartesian products. This is justified by the theory developed in the final parts of the present section and the previous one, although we omit the details of the bimonoidal structure.

\begin{example}[Incompressible Euler equations] \label{ex:incompressible-euler}
The basic physical quantities for fluid flow on a Riemannian manifold $M$ are the
\begin{itemize}[noitemsep]
    \item velocity field for the fluid, $\bm{u} \in \VectField_t(M)$,
    \item 1-form proxy for the velocity field, $u := \bm{u}^\flat \in \Omega_t^1(M)$,
    \item pressure, $p \in \Omega_t^0(M)$, and
    \item mass density, $\rho \in \Omega_t^0(M)$.
\end{itemize}
Here $\VectField_t(M)$ denotes the space of time-dependent smooth vector fields on $M$, and the \emph{musical isomorphism} $(-)^\flat: \VectField_t(M) \rightleftarrows \Omega_t^1(M): (-)^\sharp$ uses the Riemannian metric to pass between vector fields and 1-forms (covector fields).

For a homogeneous, incompressible fluid, the density is a constant in space and time, $\rho \equiv \rho_0$, which we may as well normalize to 1. The Euler equations are then
\begin{align*}
    \partial_t u + \nabla_{\bm u} u + dp &= 0 \\
    \delta u &= 0,
\end{align*}
where $\nabla$ is the covariant derivative on the Riemannian manifold. Using the formula $\mathcal{L}_{\bm u} u = \nabla_{\bm u} u + \frac{1}{2} d \norm{\bm{u}}^2$ that relates the Lie and covariant derivatives of $u$ with respect to $\bm u$ itself \cite{dupre1978}, we can rewrite the first equation as
\begin{equation*}
    \partial_t u + \mathcal{L}_{\bm u} u - 
        \frac{1}{2} d \norm{\bm{u}}^2 + dp = 0.
\end{equation*}
This version of the Euler equation is useful when translating to the discrete exterior calculus, which defines a discrete Lie derivative but not a discrete covariant derivative \cite{hirani2003}. Boyland compares the Lie, covariant, and material derivatives in the context of fluid mechanics \cite[\S 4.5]{boyland2001}. For a textbook derivation of the Euler equations on a Riemannian manifold, see \cite[\S 8.2, p.\ 588]{abraham1988}.

\cref{subfig:euler} shows the incompressible Euler equations as a symmetric monoidal diagram in $\Sh_\R(M \times [0,\infty))$.
\end{example}

\begin{figure}[H]
    \centering
    \begin{subfigure}{\textwidth}
    \[\begin{tikzpicture}
      \node[place] (u) at (0,0) {$u: \Omega_t^1$};
      \node[place] (uvec) at (2,0) {$\bm{u}: \VectField_t$};
      \node[place] (divu) at (-2.5,0) {$\delta u: \Omega_t^0$};
      \node[place] (uzero) at (-4,0) {$0$};
      \node[place] (udot) at (-3,-1.5) {$\dt u: \Omega_t^1$};

      \node[transition] (covd) at (1,-1.25) {$\nabla$};
      \node[place] (ucovd) at (1,-2.25) {$\nabla_{\bm u} u: \Omega_t^1$};

      \node[place] (p) at (4,0) {$p: \Omega_t^0$};
      \node[place] (dp) at (4,-1.5) {$dp: \Omega_t^1$};

      \node[sum place] (sum) at (1,-3.25) {};
      \node[place] (sumrhs) at (1,-4.25) {$\Omega_t^1$};
      \node[place] (sumzero) at (1,-5.25) {$0$};

      \path[commutative diagrams/.cd, every arrow, every label]
        (u) edge[swap] node {$\delta$} (divu)
        (uzero) edge (divu)
        (u) edge[bend left] node {$\sharp$} (uvec)
        (u) edge[swap] node {$\partial_t$} (udot)
        (uvec) edge[swap,bend left] node {$\flat$} (u)

        (u) edge[bend right,in=110] ([xshift=-1mm]covd.north east)
        (uvec) edge[bend left,in=250] ([xshift=1mm]covd.north west)
        (covd) edge (ucovd)

        (p) edge node {$d$} (dp)

        (sum) edge (ucovd)
        (sum) edge[bend left] (udot)
        (sum) edge[bend right] (dp)
        (sum) edge node {$+$} (sumrhs)
        (sumzero) edge (sumrhs);
    \end{tikzpicture}\]
    \caption{Incompressible Euler equations}
    \label{subfig:euler}
    \end{subfigure}
    \par\bigskip
    \begin{subfigure}{\textwidth}
    \[\begin{tikzpicture}
      \node[place] (u) at (0,0) {$u: \Omega_t^1$};
      \node[place] (uvec) at (2,0) {$\bm{u}: \VectField_t$};
      \node[place] (divu) at (-2.5,0) {$\delta u: \Omega_t^0$};
      \node[place] (uzero) at (-4,0) {$0$};
      \node[place] (udot) at (-3,-1.5) {$\dt u: \Omega_t^1$};

      \node[place] (stress) at (-1,-1.5) {$\tau: \Omega_t^2$};
      \node[place] (dstress) at (-1,-2.5) {$\delta \tau: \Omega_t^1$};

      \node[transition] (covd) at (1,-1.25) {$\nabla$};
      \node[place] (ucovd) at (1,-2.5) {$\nabla_{\bm u} u: \Omega_t^1$};

      \node[place] (p) at (4,0) {$p: \Omega_t^0$};
      \node[place] (dp) at (4,-1.5) {$dp: \Omega_t^1$};

      \node[sum place] (sum) at (1,-3.5) {};
      \node[place] (sumrhs) at (1,-4.5) {$\Omega_t^1$};
      \node[place] (sumzero) at (1,-5.5) {$0$};

      \path[commutative diagrams/.cd, every arrow, every label]
        (u) edge[swap] node {$\delta$} (divu)
        (uzero) edge (divu)
        (u) edge[bend left] node {$\sharp$} (uvec)
        (u) edge[swap] node {$\partial_t$} (udot)
        (uvec) edge[swap,bend left] node {$\flat$} (u)

        (u) edge node {$\mu\, d$} (stress)
        (stress) edge node {$\delta$} (dstress)

        (u) edge[bend right,in=110] ([xshift=-1mm]covd.north east)
        (uvec) edge[bend left,in=250] ([xshift=1mm]covd.north west)
        (covd) edge (ucovd)

        (p) edge node {$d$} (dp)

        (sum) edge (ucovd)
        (sum) edge[bend left] (udot)
        (sum) edge[bend left] (dstress)
        (sum) edge[bend right] (dp)
        (sum) edge node {$+$} (sumrhs)
        (sumzero) edge (sumrhs);
    \end{tikzpicture}\]
    \caption{Incompressible Navier-Stokes equations}
    \label{subfig:navier-stokes}
    \end{subfigure}
    \caption{Fundamental equations for incompressible fluid flow on a Riemannian manifold $M$, presented as symmetric monoidal diagrams in $\Sh_\R(M \times [0,\infty))$.}
    \label{fig:fluid-mechanics}
\end{figure}
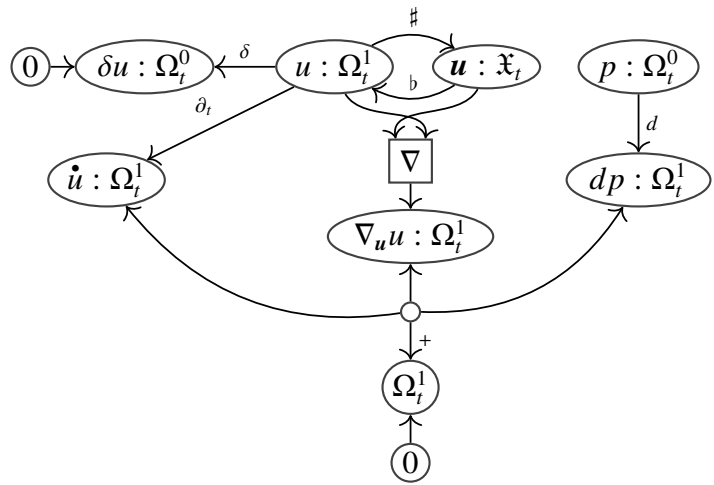
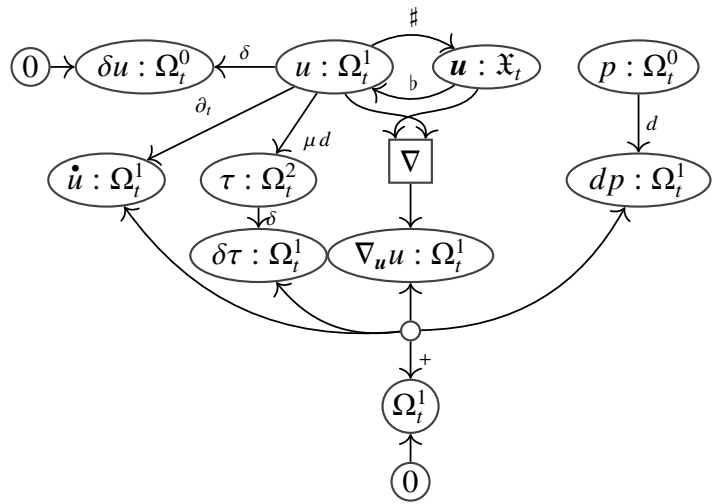

\begin{example}[Incompressible Navier-Stokes]
  The Euler equations model the flow of an inviscous fluid. The Navier-Stokes equations extend the Euler equations with an extra term to account for viscosity. For a homogeneous, incompressible fluid, the Navier-Stokes equations on a Riemannian manifold are
  \begin{align*}
    \partial_t u + \nabla_{\bm u} u + \mu\, \delta d u + dp &= 0 \\
    \delta u &= 0,
  \end{align*}
  where $\mu$, assumed to be constant, is the dynamic viscosity \cite[Equation 5]{mohamed2016}. The extra term $\mu\, \delta d u$ can be rewritten as $\delta \tau$, where $\tau$ is the stress and $\tau = \mu\, d u$ is a constitutive equation relating the stress and the gradient of the flow velocity. With this notation, \cref{subfig:navier-stokes} shows the incompressible Navier-Stokes equations as a symmetric monoidal diagram.
\end{example}

{\bf Lifting morphisms of monoidal diagrams.}
Just as discrete opfibrations provide a useful level of generality at which to study the lifting properties of diagram morphisms, \emph{monoidal} discrete opfibrations, introduced more recently, are useful to study the lifting properties of morphisms of monoidal diagrams. In the remainder of this section, we define monoidal discrete opfibrations, explain their connection with monoidal category actions, and extend the main lifting result of \cref{sec:lifting-properties} to monoidal diagrams. This material may be omitted without loss of continuity.

\begin{definition}[Monoidal discrete opfibration]
A \emph{symmetric monoidal discrete opfibration} between SMCs is a strict symmetric monoidal functor that is also a discrete opfibration between the underlying categories.
\end{definition}

The definition specializes that of a (non-discrete) monoidal opfibration, introduced by Shulman \cite[Definition 12.1]{shulman2008}. We emphasize that the monoidal functor is required to be strict even when the monoidal categories are not strict, which is atypical for monoidal functors. As a result, a monoidal discrete opfibration $\pi: \cat{E} \to \cat{C}$ satisfies the important property that, if $\overline f: \overline x \to \overline w$ and $\overline g: \overline y \to \overline z$ in $\cat{E}$ are lifts of $f: x \to w$ and $g: y \to z$ in $\cat{C}$ through $\pi$, then $\overline f \otimes \overline g$ is a lift of $f \otimes g$ through $\pi$ with domain $\overline x \otimes \overline y$, and hence is the unique such lift.

The equivalence between discrete opfibrations over a category $\cat{C}$ and actions of the category $\cat{C}$, recalled in \cref{eq:dopf-cset-equivalence}, extends to monoidal categories. An \emph{action} of an SMC $(\cat{C}, \otimes, I)$ is a \emph{lax} symmetric monoidal functor $(X,\Phi,\phi): (\cat{C}, \otimes, I) \to (\Set, \times, 1)$, whose structure morphisms
\begin{equation*}
    \Phi_{c,d}: F(c) \times F(d) \to F(c \otimes d), \qquad
    c, d \in \cat{C},
\end{equation*}
and $\phi: 1 = \{*\} \to F(I)$ are interpreted as an action of the monoidal product on the family of sets $F(c)$ indexed by $c \in \cat{C}$. Given any symmetric monoidal category action $(X,\Phi,\phi)$, the category of elements $\El(X)$ becomes an SMC with monoidal product
\begin{equation*}
    (c, x) \otimes (d, y) := (c \otimes d, \Phi_{c,d}(x,y)),
    \qquad c,d \in \cat{C},\ x \in F(c),\ y \in F(d)
\end{equation*}
on objects and monoidal unit $(I, \phi(*))$, where $\phi(*) \in F(I)$. Monoidal products on morphisms, as well as associators, unitors, and braidings, are inherited from $\cat{C}$. With this definition, the canonical projection $\pi_X: \El(X) \to \cat{C}$ is a symmetric monoidal discrete opfibration. Moreover, the category of elements construction extends to an equivalence between symmetric monoidal discrete opfibrations over $\cat{C}$ and actions of the SMC $\cat{C}$. We will not state this equivalence precisely, but it is a special case of the equivalence between monoidal opfibrations and lax monoidal pseudofunctors $(\cat{C}, \otimes, I) \to (\Cat, \times, 1)$, established by Moeller and Vasilakopoulou using the monoidal Grothendieck construction \cite{moeller2020}.

For example, when $(S, \delta, \epsilon)$ is a commutative comonoid object in an SMC $(\cat{C}, \otimes, I)$, the representable category action $\Hom_{\cat{C}}(S,-): \cat{C} \to \Set$ becomes an action of $(\cat{C}, \otimes, I)$ via the structure morphisms
\begin{equation*}
    \Phi_{c,d}(S \xrightarrow{x} c, S \xrightarrow{y} d) :=
      (S \xrightarrow{\delta} S \otimes S \xrightarrow{x \otimes y} c \otimes d),
    \qquad c,d \in \cat{C},
\end{equation*}
and $\phi(*) := (S \xrightarrow{\epsilon} I)$. Taking its category of elements, we recover the symmetric monoidal discrete opfibration $\cod: \El_S(\cat{C}) \to \cat{C}$ associated with the category of generalized elements of shape $S$.

The basic lifting result in \cref{thm:diagram-opfibration} extends to monoidal diagrams as follows.

\begin{theorem}
Suppose that $\pi: \cat{E} \to \cat{C}$ is a symmetric monoidal discrete opfibration between SMCs. Then the functor $\DiagOp(\pi): \DiagOp(\cat{E}) \to \DiagOp(\cat{C})$ between categories of symmetric monoidal diagrams is a discrete opfibration.
\end{theorem}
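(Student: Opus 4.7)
The plan is to mirror the structure of the proof in the cartesian case: first invoke the bare-diagram Theorem 5.8 to obtain the underlying lift of the diagram morphism, and then promote it to a monoidal one by showing that the lifted codomain functor carries a canonical symmetric monoidal structure such that the lifted natural transformation is monoidal. The unique-lift property of $\pi$, combined with the fact that $\pi$ is \emph{strict} monoidal, will do all the work.

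Concretely, let $(R,\rho)\colon D\to D'$ be a morphism of symmetric monoidal diagrams and let $\overline D$ be a lift of $D$ through $\pi$. By Theorem~\ref{thm:diagram-opfibration} applied to the underlying functors, there exists a unique morphism of bare diagrams $(R,\overline\rho)\colon\overline D\to\overline D'$ lifting $(R,\rho)$, where $\overline D'$ is so far only a functor. I would then construct structure morphisms for $\overline D'$ as follows. For each pair $j,k\in\cat{J}'$, the object $\overline D'j\otimes\overline D'k\in\cat{E}$ satisfies $\pi(\overline D'j\otimes\overline D'k)=D'j\otimes D'k$ by strictness of $\pi$, so the structure isomorphism $\Phi^{D'}_{j,k}\colon D'j\otimes D'k\to D'(j\otimes k)$ admits a unique lift $\overline\Phi^{D'}_{j,k}$ starting at this object. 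The key claim is that its codomain is forced to be $\overline D'(j\otimes k)$. This follows by comparing two lifts of the composite $\Phi^{D'}_{j,k}\circ(\rho_j\otimes\rho_k)=\rho_{j\otimes k}\circ\Phi^D_{Rj,Rk}$ (the monoidality identity for $\rho$, using that $R$ is strict), namely $\overline\Phi^{D'}_{j,k}\circ(\overline\rho_j\otimes\overline\rho_k)$ and $\overline\rho_{j\otimes k}\circ\overline\Phi^{\overline D}_{Rj,Rk}$; both start at $\overline DRj\otimes\overline DRk$, so uniqueness of lifts through $\pi$ identifies them, pinning down the codomain and simultaneously verifying the monoidality square for $\overline\rho$. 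A parallel argument for the unit, using $\phi^{D'}=\rho_{I_{\cat{J}'}}\circ\phi^D$, produces $\overline\phi^{D'}\colon I_{\cat{E}}\to\overline D'(I_{\cat{J}'})$.

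The remaining verifications are all instances of the same principle: an equation between parallel morphisms in $\cat{E}$ holds whenever the two sides share a source and project to the same morphism in $\cat{C}$. This immediately gives naturality of the $\overline\Phi^{D'}_{j,k}$ in $j$ and $k$, the hexagon and triangle coherence axioms, and compatibility with the symmetry, each by lifting the corresponding equation satisfied by $D'$. The fact that each $\overline\Phi^{D'}_{j,k}$ is an isomorphism (so $\overline D'$ is \emph{strong} monoidal, as required) follows because lifts through discrete opfibrations preserve isomorphisms: the lift of $(\Phi^{D'}_{j,k})^{-1}$ starting at $\overline D'(j\otimes k)$ composes with $\overline\Phi^{D'}_{j,k}$ to give a lift of the identity, which must be the identity. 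Finally, the uniqueness assertion of the theorem is automatic, since the underlying morphism $(R,\overline\rho)$ is already unique and the monoidal structure on $\overline D'$ was determined cell-by-cell by the opfibration property.

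The main obstacle is purely bookkeeping: threading the (strong, not strict) monoidal structure maps $\Phi^D$, $\Phi^{\overline D}$, $\Phi^{D'}$ through the lifts, and keeping straight which lifts are forced to coincide by uniqueness. Strictness of $\pi$ and of $R$ are what keep this from degenerating into a morass of coherence isomorphisms, since they ensure the domains of the relevant lifts match on the nose rather than up to yet another structure morphism. Once this is sorted out, the proof is formally identical to the cartesian case treated in Lemmas~\ref{lem:cartesian-diagram-lifts} and~\ref{lem:reflect-products}, with ``product cone'' replaced by ``monoidal structure morphism'' throughout.
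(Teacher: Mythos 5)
Your proposal is correct and follows essentially the same route as the paper's proof: reduce to the bare-diagram case via \cref{thm:diagram-opfibration}, then use strictness of $\pi$ to lift the structure morphisms $\Phi'_{j,k}$ and pin down their codomains by comparing the two composites arising from the monoidality square for $\rho$, with uniqueness of lifts handling all remaining coherence. The only difference is that you spell out the coherence axioms and the strength of $\overline D'$ in slightly more detail than the paper, which dispatches these with the same uniqueness-of-lifts principle.
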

\begin{proof}
Suppose $(R,\rho): D \to D'$ is a morphism of symmetric monoidal diagrams and $(\overline D, \overline\Phi, \overline\phi)$ is a symmetric monoidal functor lifting $(D, \Phi, \phi)$ through $\pi$. Then, in particular, $(R,\rho)$ is a morphism of diagrams and $\overline D$ is a diagram lifting $D$ through $\pi$. By \cref{thm:diagram-opfibration}, there exists a unique morphism of diagrams $(R, \overline\rho): \overline D \to \overline D'$ lifting $(R,\rho)$. Thus, to prove that $\DiagOp(\pi)$ is a discrete opfibration, we must show that there exists a unique structure of a symmetric monoidal functor $(\overline{D}', \overline\Phi', \overline\phi')$ on $\overline D'$ that lifts $D'$ through $\pi$ \emph{as a symmetric monoidal functor}, i.e., such that $\pi \circ (\overline{D}', \overline\Phi', \overline\phi') = (D', \Phi', \phi')$. We must also show that $\overline\rho$ is a monoidal natural transformation with respect to this structure.

The uniqueness is clear, since if the structure morphisms exist, then the composition law for monoidal functors and the strictness of $\pi$ implies that $\pi(\overline\Phi_{j,k}') = \Phi_{j,k}'$ for all $j,k \in \cat{J}'$ and also $\pi(\overline\Phi') = \phi'$. The uniqueness of the structure morphisms $\overline\Phi_{j,k}'$ and $\overline\phi'$ then follows from the uniqueness of lifts through $\pi$.

We now prove the existence of the structure morphisms. For any $j,k \in \cat{J}'$, since $\pi$ is strict, $\overline{D}' j \otimes \overline{D}' k$ is a lift of $D' j \otimes D' k$ through $\pi$. Let $\overline\Phi_{j,k}': \overline{D}' j \otimes \overline{D}' k \to \bullet$ be the corresponding lift of $\Phi_{j,k}': D' j \otimes D' k \to D'(j \otimes k)$ through $\pi$. Then the morphisms
\begin{equation*}
    \overline{D} Rj \otimes \overline{D} Rk
      \xrightarrow{\overline\rho_j \otimes \overline\rho_k}
    \overline{D}' j \otimes \overline{D}' k
      \xrightarrow{\overline\Phi_{j,k}'} \bullet
\end{equation*}
and
\begin{equation*}
    \overline{D} Rj \otimes \overline{D} Rk \xrightarrow{\overline\Phi_{Rj,Rk}}
    \overline{D} (Rj \otimes Rk) = \overline{D} R(j \otimes k) \xrightarrow{\overline\rho_{j \otimes k}}
    \overline{D}' (j \otimes k)
\end{equation*}
in $\cat{E}$ project under $\pi$ to the morphisms
\begin{equation*}
    DRj \otimes DRk \xrightarrow{\rho_j \otimes \rho_k}
    D'j \otimes D'k \xrightarrow{\Phi_{j,k}'} D'(j \otimes k)
\end{equation*}
and
\begin{equation*}
    DRj \otimes DRk \xrightarrow{\Phi_{Rj,Rk}}
    D(Rj \otimes Rk) = DR(j \otimes k) \xrightarrow{\rho_{j \otimes k}} D'(j \otimes k)
\end{equation*}
in $\cat{C}$, respectively. But the latter two morphisms are equal, since $\rho: D \circ R \To D'$ is a monoidal natural transformation, so the former two morphisms are also equal, by the uniqueness of lifts through $\pi$. In particular, the codomain of $\overline\Phi_{j,k}'$ is $\overline{D}'(j \otimes k)$, as required. Similary, there is a lift $\overline\phi': I_{\cat{E}} \to \overline{D}'(I_{\cat{J}'})$ of $\phi': I_{\cat{C}} \to D'(I_{\cat{J}'})$ through $\pi$. Thus, we have the data $(\overline{D}', \overline\Phi', \overline\phi')$ of a symmetric monoidal functor. That it \emph{is} a strong symmetric monoidal functor follows straightforwardly from $\pi$ being discrete opfibration, and the argument above already shows that $\overline\rho$ is a monoidal natural transformation.
\end{proof}

\section{Composition of diagrams and multiphysics} \label{sec:composing-diagrams}

Multiphysics is the study of physical systems that couple together distinct physical phenomena \cite{keyes2013}. For example, reaction-diffusion systems combine chemical reactions between different substances with the diffusion of those substances throughout a medium \cite{fife1979,socolofsky2005}. Of course, what counts as a single physics or as multiple physics is largely a matter of perspective, and even systems that are usually considered to involve a single physics, such as a diffusion process, may be derived by combining several distinct physical principles. In this section, as an application of the diagrammatic formalism, we systematize the compositional and hierarchical construction of physical, possibly multiphysical, theories from first principles and experimental laws.

Besides categories of diagrams, the key mathematical tool enabling this formalization are the structured cospans introduced by Baez and Courser \cite{baez2020} as a variant of Fong's decorated cospans \cite{fong2015}. Decorated and structured cospans belong to a now well established tradition of modeling open systems usings spans or cospans \cite{katis1997}. Open systems of a geometrical nature are modeled as cospans, where the legs of a cospan give inclusions of the system's boundaries into the system itself. Composition is by pushout (gluing along boundaries). Dually, open systems consisting of quantitative state variables on a geometrical space are modeled as spans, where the legs of a span give projections of the system's state space onto the state spaces of its boundaries. Composition is by pullback or variants thereof, such as in Baez, Weisbart, and Yassine's study of open systems in Lagrangian and Hamiltonian mechanics \cite{baez2021mechanics}.

Having seen how diagrams can present physical quantities on geometrical spaces and how diagram morphisms can serve as projections defining boundary value problems, it would be natural to define ``open diagrams'' in a category $\cat{C}$ as spans in the diagram category $\DiagOp(\cat{C})$. For the purposes of this section, however, it suffices to consider spans in $(\Cat/\cat{C})^\op$, the wide subcategory of $\DiagOp(\cat{C})$ having \emph{strict} diagram morphisms. Composing open diagrams by pullback then amounts to computing pushouts in the slice category $\Cat/\cat{C}$ and so ultimately in $\Cat$. Although pushouts of categories are generally complicated, pushouts of free categories (the shapes of free diagrams) are simple because they reduce to pushouts of the generating graphs.

To be more precise, we take open diagrams to be structured spans, which are the formal duals of structured cospans \cite{baez2020}. Given a functor $R: \cat{A} \to \cat{X}$, an \emph{$R$-structured span} is a pair of objects $a,b \in \cat{A}$ together with a span of form $Ra \leftarrow x \rightarrow Rb$ in $\cat{X}$. Assuming the categories $\cat{A}$ and $\cat{X}$ have finite limits and the functor $R$ is a right adjoint, a hypergraph category is formed by the objects of $A$ and isomorphism classes of $R$-structured spans, where composition is given by pullback in $\cat{X}$ \cite[Theorem 3.12]{baez2020}. For our applications, fix a category $\cat{C}$ and consider the discrete diagram functor $\Disc_{\cat{C}}: \Set/\Ob(\cat{C}) \to \Cat/\cat{C}$, which sends an indexed set of objects in $\cat{C}$ to the discrete diagram with those objects. This functor is left adjoint to the underlying objects functor $\Ob_{\cat{C}}: \Cat/\cat{C} \to \Set/\Ob(\cat{C})$. Consequently, its opposite functor $\Disc_{\cat{C}}^\op: (\Set/\Ob(\cat{C}))^\op \to (\Cat/\cat{C})^\op$ is a right adjoint. We provisionally define an \emph{open diagram} in $\cat{C}$ to be a $\Disc_{\cat{C}}^\op$-structured span, or equivalently a span in $\DiagOp(\cat{C})$ whose legs are strict diagram morphisms and whose feet are discrete diagrams.

\begin{example}[Constituents of diffusion]
In \cref{ex:diffusion}, the diffusion equation on a three-dimensional Riemannian manifold $M$ was loosely described as the conjunction of two physical principles, an experimental law for the flux of the diffusing substance (Fick's first law) and a conservation principle relating the instantaneous changes in concentration in time and space (conservation of mass). We can now begin to make this precise by formulating the two principles as open diagrams. Fick's first law is presented by the open diagram
\begin{equation} \label{eq:diffusion-flux-diagram}
    \begin{tikzcd}[
    /tikz/execute at end picture={
        \node (apex) [boxUWD, inner xsep=1.5em, fit=(C1) (dC) (F1)] {};
        \node (foot1) [boxUWD, inner xsep=1.5em, fit=(C2)] {};
        \node (foot2) [boxUWD, inner xsep=1.5em, fit=(F2)] {};
    }]
    	|[alias=C1]| {C: \Omega_t^0} &&& |[alias=C2]| {C: \Omega_t^0} \\
    	|[alias=dC]| {dC: \Omega_t^1} & |[alias=F1]| {\phi: \widetilde\Omega_t^2} && |[alias=F2]| {\phi: \widetilde\Omega_t^2} \\
    	&& {}
    	\arrow["d"', from=1-1, to=2-1]
    	\arrow["{k \star}"', from=2-1, to=2-2]
    	\arrow[dashed, from=1-1, to=1-4]
    	\arrow[dashed, from=2-2, to=2-4]
    \end{tikzcd}
\end{equation}
in $\Sh_\R(M \times [0,\infty))$. Diagram morphisms are drawn here as in \cref{sec:diagram-morphisms}; because they are strict, the dashed lines are identities. Similarly, conservation of mass is presented by the open diagram
\begin{equation} \label{eq:transport-conservation-diagram}
    \begin{tikzcd}[
    /tikz/execute at end picture={
        \node (apex) [boxUWD, inner xsep=1.5em, fit=(C1) (dotC) (dF) (F1)] {};
        \node (foot1) [boxUWD, inner xsep=1.5em, fit=(C2)] {};
        \node (foot2) [boxUWD, inner xsep=1.5em, fit=(F2)] {};
    }]
    	|[alias=C2]| {C: \Omega_t^0} && |[alias=C1]| {C: \Omega_t^0} & |[alias=dotC]| {\dt C: \Omega_t^0} & |[alias=dF]| {d\phi: \widetilde \Omega_t^3} \\
    	|[alias=F2]| {\phi: \widetilde\Omega_t^2} &&&& |[alias=F1]| {\phi: \widetilde \Omega_t^2} \\
    	& {}
    	\arrow["{\partial_t}", from=1-3, to=1-4]
    	\arrow["d"', from=2-5, to=1-5]
    	\arrow["{\star^{-1}}"', from=1-5, to=1-4]
    	\arrow[dashed, from=1-3, to=1-1]
    	\arrow[dashed, from=2-5, to=2-1]
    \end{tikzcd}\ .
\end{equation}
\end{example}

As one would expect, the two open diagrams above can be composed, using the operations available in a hypergraph category, to form the diffusion equation \eqref{eq:diffusion-diagram}. However, experience of the authors and collaborators shows that it is more convenient to employ an alternative syntax for composition that does not artificially insist on dividing the boundary of an open system into exactly two parts \cite{libkind2021,libkind2022}. We thus mildly generalize the definitions of structured spans and open diagrams as follows. Given a functor $R: \cat{A} \to \cat{X}$, an \emph{$R$-structured multispan} is a list of objects $a_1, \dots, a_n \in \cat{A}$, where $n \geq 0$, together with a multispan of form
\begin{equation*}
    \begin{tikzcd}
    	&& x \\
    	{R a_1} & {R a_2} & \cdots & {Ra_{n-1}} & {R a_n}
    	\arrow[from=1-3, to=2-1]
    	\arrow[from=1-3, to=2-4]
    	\arrow[from=1-3, to=2-2]
    	\arrow[from=1-3, to=2-5]
    \end{tikzcd}
\end{equation*}
in the category $\cat{X}$.

\begin{definition}[Open diagram] \label{def:open-diagram}
An \emph{open diagram} in a category $\cat{C}$ is a $\Disc_{\cat{C}}^\op$-structured multispan, or equivalently a multispan in $\DiagOp(\cat{C})$ whose legs are strict diagram morphisms and whose feet are discrete diagrams.
\end{definition}

Although it is not assumed by the theory, the feet of an open diagram will be singleton diagrams (diagrams indexed by the terminal category) in our examples. Open diagrams can then be defined more succinctly by giving the diagram at the apex of the multispan and stating which indexing objects are \emph{exposed} through the legs of the multispan. For example, the open diagrams \eqref{eq:diffusion-flux-diagram} and \eqref{eq:transport-conservation-diagram} both expose the concentration $C$ and the negative diffusion flux $\phi$ and no other quantities.

The definition of an open diagram in a category is straightforwardly modified for open cartesian diagrams in a cartesian category (\cref{sec:cartesian-diagrams}) or open symmetric monoidal diagrams in a symmetric monoidal category (\cref{sec:monoidal-diagrams}). We omit the details.

Composites of open diagrams can be specified in an appealing visual manner using undirected wiring diagrams (UWDs). Given a category $\cat{C}$, there is an operad of typed, undirected wiring diagrams~\cite{spivak2013,libkind2021} with types being discrete diagrams in $\cat{C}$. Isomorphism classes of open diagrams in $\cat{C}$ form an algebra of this operad, where an undirected wiring diagram acts on a list of structured multispans by evaluating a suitable limit, generalizing the composition of structured spans by pullback. The existence of this operad algebra follows from the equivalences between hypergraph categories and cospan algebras \cite{fong2019} and between cospan algebras and algebras of the operad of UWDs \cite[Examples 2.1.3 and 2.1.10]{leinster2004}, \cite[Examples 2.4 and 2.7]{libkind2021}. Details of this construction may be extracted from the given references, but should not be necessary to understand the following examples.

\begin{example}[Diffusion, compositionally] \label{ex:diffusion-compositional}
Modeling mass transport of a substance through a medium, such as through diffusion or advection, requires a description of the transport process itself along with a mass conservation principle, both of which relate the concentration of the substance $C$ to the (negative) diffusion flux $\phi$. This pattern of composition is represented by the undirected wiring diagram
\begin{equation} \label{eq:transport-uwd}
    \begin{tikzpicture}[baseline=(current bounding box.center),tips=proper]
        \node (C) {$C: \Omega_t^0$};
        \node (F) [below=of C] {$\phi: \widetilde\Omega_t^2$};
        \node (CF) [fit=(C) (F), inner sep=0] {};
        \node (transport) [boxUWD, left=2em of CF, align=center] {Transport \\ flux};
        \node (conservation) [boxUWD, right=2em of CF, align=center] {Mass \\ conservation};
        \node (outer) [boxUWD, inner xsep=1em, inner ysep=1.5em, fit=(C) (F) (transport) (conservation)] {};
        \path
            (transport.15) edge[out=0, in=180] (C)
            (transport.345) edge[out=0, in=180] (F)
            (conservation.165) edge[out=180, in=0] (C)
            (conservation.195) edge[out=180, in=0] (F)
            (outer.north) edge[out=270, in=90] (C.north);
    \end{tikzpicture}
\end{equation}
which has two boxes (each with two ports), two junctions, and one outer port. The ports and junctions are all typed by discrete diagrams---here singleton diagrams---but for brevity the types are shown only on the junctions. Note that the textual labels on the boxes have no formal meaning, but in all other respects the picture denotes a meaningful mathematical object. Applying the UWD \eqref{eq:transport-uwd} to the open diagrams \eqref{eq:diffusion-flux-diagram} and \eqref{eq:transport-conservation-diagram} yields an open diagram for the diffusion equation \eqref{eq:diffusion-diagram}, with the concentration $C$ as the only exposed quantity.
\end{example}

The modular construction of a physical theory using UWDs and open diagrams allows certain components of the theory to be replaced or extended without affecting the others, as demonstrated by the next example.

\begin{example}[Advection] \label{ex:advection}
To construct an advection system following the pattern of \cref{ex:diffusion-compositional}, we need only replace the open diagram plugged into the ``transport flux'' box, previously given by Fick's first law of diffusion, with an open diagram describing the flux due to advection along a moving fluid.

Let $\bm{v} \in \VectField_t(M)$ be a fixed time-dependent velocity field for the fluid advecting the substance. If $\widetilde C := \star C \in \widetilde\Omega_t^3$ is the density corresponding to the 0-form $C \in \Omega_t^0$, then the flux of $C$ due to advection is simply $\iota_{\bm v} \widetilde C$, where $\iota_{\bm v}: \widetilde\Omega_t^3 \to \widetilde\Omega_t^2$ is the \emph{interior product} with respect to $\bm v$ \cite[Definition 6.4.7]{abraham1988}. In Euclidean space $M = \R^3$, using the various isomorphisms between forms and vector fields, this term can be identified with the vector-scalar product $\bm{v} C$, which is possibly more familiar. In any case, an advection equation on a three-dimensional Riemannian manifold $M$ is obtained by applying the UWD \eqref{eq:transport-uwd} to the open diagram
\begin{equation} \label{eq:advection-flux-diagram}
    \begin{tikzcd}
    	{C: \Omega_t^0} & {\widetilde C: \widetilde\Omega_t^3} \\
    	& {\phi: \widetilde\Omega_t^2}
    	\arrow["\star", from=1-1, to=1-2]
    	\arrow["{-\iota_{\mathbf{v}}}", from=1-2, to=2-2]
    \end{tikzcd}
\end{equation}
with $C$ and $\phi$ exposed, along with the open diagram \eqref{eq:transport-conservation-diagram} for mass conservation. In traditional notation, the resulting differential equation is
\begin{equation*}
    \partial_t C = -\star^{-1} d \iota_{\bm v} \star C
\end{equation*}
or equivalently
\begin{equation*}
    \partial_t \widetilde C = -d \iota_{\bm v} \widetilde C = - \mathcal{L}_{\bm v} \widetilde C,
\end{equation*}
where $\mathcal{L}_{\bm v}$ denotes the Lie derivative with respect to $\bm v$, having used \emph{Cartan's magic formula} $\mathcal{L}_{\bm v} = d \circ \iota_{\bm v} + \iota_{\bm v} \circ d$ \cite[Theorem 6.4.8 (iv)]{abraham1988}. The equation $\partial_t \widetilde C + \mathcal{L}_{\bm v} \widetilde C = 0$ has been called the \emph{Lie advection equation} \cite{mullen2011}.
\end{example}

The UWD algebra of open diagrams enables not only modular but also hierarchical construction of physical theories, as illustrated by the following two-level construction of the advection-diffusion equation.

\begin{example}[Advection-diffusion]
An advection-diffusion system combines diffusion of a substance with advection of the substance by a moving fluid. The fluxes involved are described the UWD
\begin{equation} \label{eq:advection-diffusion-flux-uwd}
    \begin{tikzpicture}[baseline=(current bounding box.center),tips=proper]
        \node (C) {$C: \Omega_t^0$};
        \node (F) [below=4em of C] {$\phi: \widetilde\Omega_t^2$};
        \node (CF) [fit=(C) (F), inner sep=0] {};
        \node (superposition) [boxUWD, left=of F, align=center] {Flux \\ superposition};
        \node (F1) [above=4em of superposition] {$\phi_1: \widetilde\Omega_t^2$};
        \node (F2) [left=2em of superposition] {$\phi_2: \widetilde\Omega_t^2$};
        \node (advection) [boxUWD, above=2em of F2] {Advection};
        \node (diffusion) [boxUWD, above=of advection] {Diffusion};
        \node (outer) [boxUWD, inner xsep=2em, inner ysep=1em, fit=(C) (F) (diffusion) (advection)] {};
        \path
            (diffusion.east) edge[out=0, looseness=0.5] (C.north west)
            (diffusion.south) edge[out=270, in=180] (F1.west)
            (advection.south) edge (F2)
            (superposition) edge (F)
            (superposition) edge (F1)
            (superposition) edge (F2)
            (outer.10) edge[out=90, in=0] (C.east)
            (outer.340) edge[out=90, in=0] (F.east);
        \path[draw,line width=5pt,white]
            (advection.east) edge[out=0, in=180] (C.west);
        \path
            (advection.east) edge[out=0, in=180] (C.west);
    \end{tikzpicture}
\end{equation}
where the box labeled ``flux superposition'' is intended to be filled by the open cartesian diagram
\begin{equation*}
    \begin{tikzcd}[row sep=tiny]
    	{\phi_1: \widetilde\Omega_t^2} \\
    	& {(\phi_1, \phi_2): \widetilde\Omega_t^2 \oplus \widetilde\Omega_t^2} & {\phi: \widetilde\Omega_t^2} \\
    	{\phi_2: \widetilde\Omega_t^2}
    	\arrow["{+}", from=2-2, to=2-3]
    	\arrow["{\pi_1}"', from=2-2, to=1-1]
    	\arrow["{\pi_2}", from=2-2, to=3-1]
    \end{tikzcd}
\end{equation*}
with all of $\phi_1$, $\phi_2$, and $\phi$ exposed. It may seem excessive to glorify mere addition with the name ``superposition'' but the addivity of the fluxes does have the physically meaningful interpretation that the two sources of flux are independent of each other.

Applying the UWD \eqref{eq:advection-diffusion-flux-uwd} to the superposition diagram as well as to the diffusion and advection flux diagrams \eqref{eq:diffusion-flux-diagram} and \eqref{eq:advection-flux-diagram}, all now regarded as open cartesian diagrams, we obtain an open cartesian diagram exposing $C$ and $\phi$ and encoding the equation
\begin{equation*}
    \phi = \phi_1 + \phi_2 = k \star dC + \iota_{\bm v} \star C
\end{equation*}
for the total negative flux. We can then apply the original UWD \eqref{eq:transport-uwd} to this diagram and the mass conservation diagram \eqref{eq:transport-conservation-diagram} to obtain an open cartesian diagram exposing $C$ and encoding the advection-diffusion equation. Alternatively, we can operadically compose the original UWD \eqref{eq:transport-uwd} with the UWD \eqref{eq:advection-diffusion-flux-uwd} to obtain the enlarged UWD
\begin{equation} \label{eq:advection-diffusion-uwd}
    \begin{tikzpicture}[baseline=(current bounding box.center),tips=proper]
        \node (C) {$C: \Omega_t^0$};
        \node (F) [below=4em of C] {$\phi: \widetilde\Omega_t^2$};
        \node (CF) [fit=(C) (F), inner sep=0] {};
        \node (superposition) [boxUWD, left=of F, align=center] {Flux \\ superposition};
        \node (F1) [above=4em of superposition] {$\phi_1: \widetilde\Omega_t^2$};
        \node (F2) [left=2em of superposition] {$\phi_2: \widetilde\Omega_t^2$};
        \node (advection) [boxUWD, above=2em of F2] {Advection};
        \node (diffusion) [boxUWD, above=of advection] {Diffusion};
        \node (conservation) [boxUWD, right=of CF, align=center] {Mass \\ conservation};
        \node (outer) [boxUWD, inner sep=1em, fit=(C) (F) (diffusion) (advection) (conservation)] {};
        \path
            (diffusion.east) edge[out=0, looseness=0.5] (C.north west)
            (diffusion.south) edge[out=270, in=180] (F1.west)
            (advection.south) edge (F2)
            (conservation.165) edge[out=180, in=0] (C)
            (conservation.195) edge[out=180, in=0] (F)
            (superposition) edge (F)
            (superposition) edge (F1)
            (superposition) edge (F2)
            (outer.north) edge[out=270] (C.north);
        \path[draw,line width=5pt,white]
            (advection.east) edge[out=0, in=180] (C.west);
        \path
            (advection.east) edge[out=0, in=180] (C.west);
    \end{tikzpicture}
\end{equation}
and then apply it to the same set of open cartesian diagrams. The result, which is the same either way due to the associativity law of an operad algebra, is an open cartesian diagram presenting the advection-diffusion equation
\begin{equation*}
    \partial_t \widetilde C = d (k\star) dC - \mathcal{L}_{\bm v} \widetilde C
    \qquad\text{where}\qquad
    \widetilde C := \star C.
\end{equation*}
This equation, compact though it is, already contains substantial physical content, which is formally represented by its hierarchical construction.
\end{example}

Nesting of undirected wiring diagrams may be repeated indefinitely, as hinted at in this section's final example, which involves a three-level hierarchy.

\begin{example}[Advection-diffusion-reaction]
An advection-reaction-diffusion system combines the simultaneous advection and diffusion of one or more substances in a moving fluid with chemical reactions or other transformations between the substances \cite[Chapter 4]{socolofsky2005}. For simplicity, we consider only a single substance undergoing a single reaction, declining to model the other substances involved in the reaction.

Since a reacting system has inflows and/or outflows of substance due to reaction, we replace the open diagram \eqref{eq:transport-conservation-diagram} for the conservation of mass in a \emph{closed} system with one for conservation in an \emph{open} system, namely the open cartesian diagram
\begin{equation*}
    \begin{tikzcd}
    	& {(\dt C_\mathrm{flux}, S): (\Omega_t^0)^2} & {S: \Omega_t^0} \\
    	{C: \Omega_t^0} & {\dt C: \Omega_t^0} & {\dt C_{\mathrm{flux}}: \Omega_t^0} & {d\phi: \widetilde\Omega_t^3} & {\phi: \widetilde\Omega_t^2}
    	\arrow["{\pi_2}", from=1-2, to=1-3]
    	\arrow["{\pi_1}"', from=1-2, to=2-3]
    	\arrow["{+}"', from=1-2, to=2-2]
    	\arrow["{\partial_t}", from=2-1, to=2-2]
    	\arrow["{\star^{-1}}"', from=2-4, to=2-3]
    	\arrow["d"', from=2-5, to=2-4]
    \end{tikzcd}
\end{equation*}
exposing the concentration $C$, negative flux $\phi$, and source or sink of substance $S$. Likewise, we replace the UWD \eqref{eq:transport-uwd} with the following:
\begin{equation} \label{eq:open-transport-uwd}
    \begin{tikzpicture}[baseline=(current bounding box.center),tips=proper]
        \node (C) {$C: \Omega_t^0$};
        \node (F) [below=of C] {$\phi: \widetilde\Omega_t^2$};
        \node (CF) [fit=(C) (F), inner sep=0] {};
        \node (transport) [boxUWD, left=2em of CF, align=center] {Transport \\ flux};
        \node (conservation) [boxUWD, right=2em of CF, align=center] {Mass \\ conservation};
        \node (S) [right=2em of conservation] {$S: \Omega_t^0$};
        \node (outer) [boxUWD, inner sep=1.5em, fit=(C) (F) (S) (transport) (conservation)] {};
        \path
            (transport.15) edge[out=0, in=180] (C)
            (transport.345) edge[out=0, in=180] (F)
            (conservation.165) edge[out=180, in=0] (C)
            (conservation.195) edge[out=180, in=0] (F)
            (conservation.east) edge (S)
            (outer.east) edge (S)
            (outer.north) edge[out=270, in=90] (C.north);
    \end{tikzpicture}\ .
\end{equation}
We can then proceed as in the previous examples. For instance, applying the new UWD to the diffusion flux diagram \eqref{eq:diffusion-flux-diagram} yields the nonhomogeneous diffusion equation, which was defined directly in \cref{ex:diffusion-with-source}. We can also compose with the UWD \eqref{eq:advection-diffusion-uwd} to obtain a more refined UWD for advection-diffusion with a source or sink.

To incorporate reaction, we introduce a yet higher level of hierarchy via the UWD
\begin{equation} \label{eq:transport-transformation-uwd}
    \begin{tikzpicture}[baseline=(current bounding box.center),tips=proper]
        \node (C) {$C: \Omega_t^0$};
        \node (S) [below=of C] {$S: \Omega_t^0$};
        \node (CS) [fit=(C) (S), inner sep=0] {};
        \node (transport) [boxUWD, left=2em of CS, align=center] {Transport \\ process};
        \node (transformation) [boxUWD, right=2em of CS, align=center] {Transformation \\ process};
        \node (outer) [boxUWD, inner xsep=1em, inner ysep=1.5em, fit=(C) (S) (transport) (transformation)] {};
        \path
            (transport.15) edge[out=0, in=180] (C)
            (transport.345) edge[out=0, in=180] (F)
            (transformation.165) edge[out=180, in=0] (C)
            (transformation.195) edge[out=180, in=0] (F)
            (outer.north) edge[out=270, in=90] (C.north);
    \end{tikzpicture}
\end{equation}
coupling mass transport with reaction or transformation processes. Into the box labeled ``transport process'' we expect to substitute the UWD \eqref{eq:open-transport-uwd} or any of its concrete applications sketched above. As for the ``transformation process'' box, a particularly simple choice is the open diagram
\begin{equation*}
    \begin{tikzcd}
        C: \Omega_t^0 \arrow[r, "\beta"] & S: \Omega_t^0
    \end{tikzcd}
\end{equation*}
for a first-order growth or decay, according to the sign of the rate constant $\beta \in \R$. Any additional substances which may be driving this process are left unmodeled. Applying the UWD \eqref{eq:transport-transformation-uwd} to the appropriate open cartesian diagrams yields the basic example
\begin{equation*}
    \partial_t \widetilde C = d (k\star) dC - \mathcal{L}_{\bm v} \widetilde C + \beta \widetilde C,
    \qquad\text{where}\qquad
    \widetilde C := \star C,
\end{equation*}
of an advection-diffusion-reaction equation on a Riemannian manifold.
\end{example}

\section{Weak equivalence of diagrams} \label{sec:weak-equivalences}

It is possible for nonisomorphic diagrams to present physical theories that are equivalent in the sense of having interchangeable solutions, as shown by the two different presentations of the heat equation in \cref{ex:heat-equation-morphisms}. In this section, we make a preliminary study of such ``weak equivalences'' of diagrams: morphisms of diagrams that establish one-to-one correspondences between lifts of their domain and codomain diagrams.

The foundational \cref{thm:diagram-opfibration} established that lifts of a diagram $D$ in a category $\cat{C}$ through a discrete opfibration can be pushed forward along any morphism out of $D$ in the diagram category $\DiagOp(\cat{C})$. Understanding when morphisms in the dual category $\Diag(\cat{C})$ have the same property is essential to characterizing weak equivalence. To state this property precisely, it is helpful to weaken the defining condition of a discrete opfibration (\cref{def:discrete-opfibration}) so that it does not apply everywhere. Let us say, using a nonstandard but self-explanatory phrase, that a functor $\pi: \cat{E} \to \cat{C}$ is a \emph{discrete opfibration at a morphism $f: x \to y$} in $\cat{C}$ if, for every object $\overline x \in \cat{E}$ with $\pi(\overline x) = x$, there exists a unique morphism $\overline f: \overline x \to \overline y$ in $\cat{E}$ such that $\pi(\overline f) = f$. Thus, a functor into $\cat{C}$ is a discrete opfibration in the usual sense exactly when it is discrete opfibration at every morphism in $\cat{C}$.

\begin{definition}[Pushforward property] \label{def:pushforward-property}
Given a category $\cat{C}$, a morphism $(R,\rho)$ in the diagram category $\Diag(\cat{C})$ has the \emph{pushforward property} if, for any discrete opfibration $\pi: \cat{E} \to \cat{C}$, the functor $\Diag(\pi): \Diag(\cat{E}) \to \Diag(\cat{C})$ is a discrete opfibration at $(R,\rho)$.
\end{definition}

\begin{lemma}[Closure of pushforward property]
Let $\cat{C}$ be a category.
\begin{enumerate}[(i),nosep]
\item The class of morphisms in $\Diag(\cat{C})$ having the pushforward property forms a wide subcategory of $\Diag(\cat{C})$.
\item For any discrete opfibration $\pi: \cat{E} \to \cat{C}$, the pushforward property is preserved by lifts through $\Diag(\pi)$.
\end{enumerate}
\end{lemma}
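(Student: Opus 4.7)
For part (i), the identity morphism $(\id_{\cat{J}}, \id_D): D \to D$ has the pushforward property trivially: given any lift $\overline D$ of $D$, the unique lift of the identity is $(\id_{\cat{J}}, \id_{\overline D})$, as any other candidate is forced object by object and on morphisms by the uniqueness of lifts through $\pi$. The substantive step is closure under composition. Suppose $(R_1, \rho_1): D_0 \to D_1$ and $(R_2, \rho_2): D_1 \to D_2$ both have the pushforward property. Fix a discrete opfibration $\pi: \cat{E} \to \cat{C}$ and a lift $\overline D_0$ of $D_0$ through $\pi$. The plan for existence is to apply the pushforward of $(R_1, \rho_1)$ to obtain a unique lift $(R_1, \overline{\rho_1}): \overline D_0 \to \overline D_1$ through $\Diag(\pi)$, then apply the pushforward of $(R_2, \rho_2)$ to $\overline D_1$ to obtain a unique lift $(R_2, \overline{\rho_2}): \overline D_1 \to \overline D_2$; the composite in $\Diag(\cat{E})$ then lifts the composite in $\Diag(\cat{C})$.

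Uniqueness is the delicate part. Given a competing lift $(R_2 R_1, \sigma): \overline D_0 \to \widetilde D_2$ of the composite, a direct componentwise factorization of $\sigma_j$ through $\pi$ pins down $\widetilde D_2$ only on objects in the image of $R_2 R_1$, which is the main obstacle to showing $\widetilde D_2 = \overline D_2$ as full diagrams on $\cat{J}_2$. To resolve this, I would construct, for each $k \in \cat{J}_1$, the unique lift $\beta_k: \overline D_1(k) \to \widetilde D_2(R_2 k)$ of $\rho_{2,k}$ starting at $\overline D_1(k)$ through $\pi$; the naturality of the family $\beta = (\beta_k)$ follows from the naturality of $\rho_2$ combined with the uniqueness of lifts through $\pi$ (two parallel morphisms in $\cat{E}$ with a common domain and the same projection to $\cat{C}$ must coincide). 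The resulting $(R_2, \beta): \overline D_1 \to \widetilde D_2$ is then a lift of $(R_2, \rho_2)$ through $\Diag(\pi)$, so the pushforward property of $(R_2, \rho_2)$ forces $\widetilde D_2 = \overline D_2$ and $\beta = \overline{\rho_2}$. The identity $\sigma = (\overline{\rho_2} R_1) \circ \overline{\rho_1}$ then follows componentwise, since at each $j \in \cat{J}_0$ both sides are lifts of $\rho_{2, R_1 j} \circ \rho_{1, j}$ starting at $\overline D_0(j)$.

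For part (ii), suppose $(R, \rho): D \to D'$ has the pushforward property in $\Diag(\cat{C})$ and that $(R, \overline\rho): \overline D \to \overline D'$ is a lift of $(R, \rho)$ through $\Diag(\pi)$. To verify that $(R, \overline\rho)$ has the pushforward property, fix an arbitrary discrete opfibration $\pi': \cat{E}' \to \cat{E}$ and a lift $\overline{\overline D}$ of $\overline D$ through $\pi'$. Composites of discrete opfibrations are discrete opfibrations, so $\pi \circ \pi'$ is one and $\overline{\overline D}$ lifts $D$ through it. The pushforward property of $(R, \rho)$ therefore supplies a unique lift $(R, \tau): \overline{\overline D} \to \widetilde D$ of $(R, \rho)$ through $\Diag(\pi \pi')$. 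Projecting under $\Diag(\pi')$ yields $(R, \pi' * \tau): \overline D \to \pi' \circ \widetilde D$, a lift of $(R, \rho)$ through $\Diag(\pi)$ starting at $\overline D$; the uniqueness clause of the pushforward property of $(R, \rho)$ at $\pi$ forces this projection to equal $(R, \overline\rho)$, so $\pi' \circ \widetilde D = \overline D'$ and $\pi' * \tau = \overline\rho$, exhibiting $(R, \tau)$ as a lift of $(R, \overline\rho)$ through $\Diag(\pi')$. Uniqueness of this lift is automatic: any competing lift of $(R, \overline\rho)$ through $\Diag(\pi')$ is also a lift of $(R, \rho)$ through $\Diag(\pi \pi')$ starting at $\overline{\overline D}$, and so must coincide with $(R, \tau)$.
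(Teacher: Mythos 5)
Your part~(ii) is correct and is essentially the paper's argument (compose the two discrete opfibrations and play the two uniqueness clauses against each other), and the existence half of part~(i), together with the treatment of identities, also agrees with the paper's sketch. The genuine problem is the uniqueness half of the composition argument. You have correctly located the difficulty---componentwise factorization controls $\widetilde D_2$ only on the image of $R_2R_1$---but your repair does not close it. The unique lift of $\rho_{2,k}$ out of $\overline D_1(k)$ through $\pi$ comes with a codomain that $\pi$ itself determines, namely $\overline D_2(R_2k)$, the value of the \emph{canonical} lift; writing it as $\beta_k\colon \overline D_1(k)\to\widetilde D_2(R_2k)$ silently asserts $\widetilde D_2(R_2k)=\overline D_2(R_2k)$, which the factorization of $\sigma_j$ forces only when $k$ lies in the image of $R_1$. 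For $k$ outside that image, $(R_2,\beta)$ need not be a morphism into $\widetilde D_2$ at all, so the appeal to the uniqueness clause for $(R_2,\rho_2)$ is circular: it presupposes the identity $\widetilde D_2\circ R_2=\overline D_2\circ R_2$ that you are trying to establish.

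This is not merely a presentational lacuna; the step genuinely fails. Let $\cat{C}$ be free on $c\xrightarrow{u}d\xrightarrow{v}e$ and take the discrete opfibration of elements of the copresheaf $X$ with $X(c)=\{\ast\}$, $X(d)=\{1,2\}$, $X(e)=\{0\}$ and $X(u)(\ast)=1$. Let $\cat{J}_0$ be terminal, $\cat{J}_1=\cat{J}_2=\{b\xrightarrow{p}a\}$, with $D_0\equiv c$, with $D_1\equiv c$ and $D_1(p)=\id_c$, and with $D_2(b)=d$, $D_2(a)=e$, $D_2(p)=v$. Let $R_1$ pick out $a$ with $\rho_1=\id$, and let $R_2=\id$ with $\rho_{2,a}=vu$ and $\rho_{2,b}=u$. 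Both morphisms have the pushforward property: the first because any morphism lying over an identity in a discrete opfibration is itself an identity, so $\overline D_1(b)=\overline D_1(a)$ is forced; the second because $R_2$ is surjective on objects. Yet over $\overline D_0=(c,\ast)$ the composite (which sends the unique object to $a$ with $2$-cell $vu$) admits two distinct lifts, whose codomain diagrams take the values $(d,1)$ and $(d,2)$ at $b$. So closure under composition fails for the pushforward property as defined. The paper's own proof, labelled a sketch, elides exactly the same point (``as seen by applying the unique lifting property of $\pi$ to the components of $\overline\tau$'' identifies the codomain diagrams only on the image of the composite functor), so you should not feel bad about reproducing the gap; but a correct treatment needs either an added reachability hypothesis on the indexing functors or a retreat to the relatively initial morphisms of \cref{lem:closure-relatively-initial-map}, which \cref{thm:diagram-initial-opfibration} shows do have the pushforward property and which are closed under composition.
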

\begin{proof}[Sketch of proof]
To prove part (i), suppose $D \xrightarrow{(R,\rho)} E \xrightarrow{(S,\sigma)} F$ are composable morphisms in $\Diag(\cat{C})$ having the pushforward property. Let $\pi: \cat{E} \to \cat{C}$ be a discrete opfibration and $\overline D$ be a lift of $D$ through $\pi$. A lift of the composite $(T,\tau) := (R,\rho) \cdot (S,\sigma)$ through $\Diag(\pi)$ with domain $\overline D$ is obtained by composing the lift $(R,\overline\rho): \overline D \to \overline E$ of $(R,\rho)$ at $\overline D$ with the subsequent lift $(S,\overline\sigma): \overline E \to \overline F$ of $(S,\sigma)$ at $\overline E$. As for uniqueness, any lift $(T,\overline{\tau})$ of the composite $(T,\tau)$ with domain $\overline D$ must factorize as $(R,\overline\rho) \cdot (S,\overline\sigma)$, as seen by applying the unique lifting property of $\pi$ to the components of $\overline\tau$. That identities in $\Diag(\cat{C})$ have the pushforward property is easy to see, and also follows from stronger results below.

For part (ii), suppose $(R,\overline\rho)$ is a lift through $\Diag(\pi)$ of a morphism $(R,\rho)$ having the pushforward property. Since the composite of another discrete opfibration $\mu: \cat{F} \to \cat{E}$ with $\pi: \cat{E} \to \cat{C}$ is again a discrete opfibration, it straightforward to show that $(R,\overline\rho)$ has the pushforward property using the fact that $(R,\rho)$ does.
\end{proof}

A weak equivalence of diagrams is a special kind of diagram morphism having the pushforward property, namely one with an invertible 2-cell. Recall that diagram morphism $(R,\rho)$ is said to be \emph{strong} if $\rho$ is invertible, and \emph{strict} if $\rho$ is an identity.

\begin{definition}[Weak equivalence] \label{def:weak-equivalence}
A \emph{weak equivalence} between diagrams $D$ and $D'$ in a category $\cat{C}$ is a strong morphism $(R,\rho): D \to D'$ in $\DiagOp(\cat{C})$ such that the morphism $(R,\rho^{-1}): D' \to D$ in $\Diag(\cat{C})$ has the pushforward property.
\end{definition}

It follows from the lemma above that, for any category $\cat{C}$, weak equivalences of diagrams in $\cat{C}$ form a wide subcategory of $\DiagOp(\cat{C})$, and for any discrete opfibration $\pi: \cat{E} \to \cat{C}$, weak equivalences are preserved by lifts through $\DiagOp(\pi)$.

The property of defining a bijection between lifts, which weak equivalences are expected to exhibit, is captured by the concept of a discrete bifibration. A functor $\pi: \cat{E} \to \cat{C}$ is a \emph{discrete fibration} if its opposite $\pi^\op: \cat{E}^\op \to \cat{C}^\op$ is a discrete opfibration; explicitly, $\pi: \cat{E} \to \cat{C}$ is a discrete fibration if for every morphism $f: x \to y$ in $\cat{C}$ and every object $\overline y \in \pi^{-1}(y)$, there exists a unique morphism $\overline f: \overline x \to \overline y$ in $\cat{E}$ such that $\pi(\overline f) = f$. A functor is a \emph{discrete bifibration} if it is both a discrete fibration and a discrete opfibration. One can similarly define the notion of a functor into $\cat{C}$ that is a \emph{discrete (bi)fibration at some morphism $f:x\to y$} in $\cat{C}$.

With this terminology, the definition of a weak equivalence is justified by:

\begin{proposition}[Weak equivalences biject lifts] \label{prop:weak-equivalence-bifibration}
If $\pi: \cat{E} \to \cat{C}$ is a discrete obfibration, then the functor $\DiagOp(\pi): \DiagOp(\cat{E}) \to \DiagOp(\cat{C})$ is a discrete bifibration at any weak equivalence of diagrams in $\cat{C}$.
\end{proposition}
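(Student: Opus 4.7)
The opfibration half of ``discrete bifibration'' is already handed to us by \cref{thm:diagram-opfibration}: since $\DiagOp(\pi)$ is a discrete opfibration on the nose, it is a discrete opfibration at every morphism, and hence at any weak equivalence. So the substantive content of the proposition is the fibration half. The plan is to show that, for a weak equivalence $(R,\rho) \colon D \to D'$ in $\DiagOp(\cat{C})$ and any lift $\overline{D}'$ of $D'$ through $\pi$, there is a unique lift $(R,\overline\rho) \colon \overline{D} \to \overline{D}'$ in $\DiagOp(\cat{E})$ with $\DiagOp(\pi)(R,\overline\rho)=(R,\rho)$.

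The key idea is to transport the problem to $\Diag(\cat{C})$, where the definition of weak equivalence (\cref{def:weak-equivalence}) has direct content. Since $\rho$ is a natural isomorphism, the morphism $(R,\rho^{-1}) \colon D' \to D$ in $\Diag(\cat{C})$ has the pushforward property (\cref{def:pushforward-property}). Applied to $\pi$ and the given lift $\overline{D}'$, this produces, uniquely, a diagram $\overline{D} \colon \cat{J} \to \cat{E}$ lifting $D$ together with a natural transformation $\overline\sigma \colon \overline{D}' \Rightarrow \overline{D} \circ R$ above $\rho^{-1}$, i.e., with $\pi\overline\sigma=\rho^{-1}$.

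The next step is to promote $\overline\sigma$ to a natural isomorphism, so that $\overline\rho := \overline\sigma^{-1} \colon \overline{D} \circ R \Rightarrow \overline{D}'$ defines the sought morphism $(R,\overline\rho)$ in $\DiagOp(\cat{E})$; functoriality of $\pi$ then yields $\pi\overline\rho=(\pi\overline\sigma)^{-1}=\rho$ as required. This reduces to the pointwise folklore that a lift of an isomorphism through a discrete opfibration is itself an isomorphism: given iso $f \colon x \to y$ in $\cat{C}$ with specified domain $\overline{x}$, the unique lift $\overline{f} \colon \overline{x} \to \overline{y}$ admits an inverse obtained by lifting $f^{-1}$ starting at $\overline{y}$, the two composites being forced to be identities by the uniqueness of lifts of $\id_x$ from $\overline{x}$ and of $\id_y$ from $\overline{y}$. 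Applied componentwise to $\overline\sigma$, this yields the required invertibility.

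Uniqueness of $(R,\overline\rho)$ follows by running the same trick in reverse: any candidate $(R,\overline\rho'') \colon \overline{D}'' \to \overline{D}'$ lifting $(R,\rho)$ has components lifting the isomorphisms $\rho_{j'}$ with specified domain $\overline{D}''(Rj')$, so each $\overline\rho''_{j'}$ is iso by the same lemma; then $(\overline\rho'')^{-1}$ is a morphism in $\Diag(\cat{E})$ lifting $(R,\rho^{-1})$ with domain $\overline{D}'$, and the uniqueness clause of the pushforward property forces $(\overline{D}'',(\overline\rho'')^{-1})=(\overline{D},\overline\sigma)$, whence $(R,\overline\rho'')=(R,\overline\rho)$. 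The only real obstacle is bookkeeping: one must track the direction reversal between $\Diag(\cat{C})$ and $\DiagOp(\cat{C})$ and apply the iso-lifting lemma with specified \emph{domain}, as befits a discrete opfibration, rather than codomain. Once these conventions are pinned down, the argument is a short diagram chase combined with a single standard fact about isomorphisms and discrete opfibrations.
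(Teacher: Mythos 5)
Your proposal is correct and follows essentially the same route as the paper: the opfibration half is quoted from \cref{thm:diagram-opfibration}, and the fibration half is reduced to the pushforward property of $(R,\rho^{-1})$ in $\Diag(\cat{C})$ via the fact that lifts of isomorphisms through a discrete opfibration are isomorphisms. The paper merely packages your explicit componentwise inversion of $\overline\sigma$ and the uniqueness chase into the abstract observation that restricting to strong morphisms changes nothing (since $\pi$ reflects isomorphisms) together with the isomorphism $\DiagOp^{\text{2-iso}}(\cat{C})^\op \cong \Diag^{\text{2-iso}}(\cat{C})$ given by $(R,\rho)\mapsto(R,\rho^{-1})$.
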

\begin{proof}
First, observe that the functor $\DiagOp^{\text{2-iso}}(\pi): \DiagOp^{\text{2-iso}}(\cat{E}) \to \DiagOp^{\text{2-iso}}(\cat{C})$ restricting $\DiagOp(\pi)$ to strong morphisms will be a discrete (op or bi)fibration at such a morphism if and only if the unrestricted functor $\DiagOp(\pi)$ is, because a discrete opfibration $\pi: \cat{E} \to \cat{C}$ reflects isomorphisms. A similar statement applies to the functor $\Diag^{\text{2-iso}}$, and the map $(R,\rho) \mapsto (R,\rho^{-1})$ gives an isomorphism $\DiagOp^{\text{2-iso}}(\cat{C})^\op \cong \Diag^{\text{2-iso}}(\cat{C})$.

Let $(R,\rho): D \to D'$ be a weak equivalence of diagrams in $\cat{C}$. On one hand, the functor $\DiagOp(\pi)$ is a discrete opfibration at $(R,\rho)$ by \cref{thm:diagram-opfibration}. On the other hand, $\Diag(\pi)$ is a discrete opfibration at $(R,\rho^{-1}): D' \to D$ by the definition of a weak equivalence; equivalently, using the observation above, $\DiagOp(\pi)^\op$ is a discrete opfibration at $(R,\rho)$. We conclude that $\DiagOp(\pi)$ is a discrete bifibration at $(R,\rho)$.
\end{proof}

Essentially by definition, diagram morphisms satisfying the pushforward property give a statement dual to \cref{thm:diagram-opfibration}, and weak equivalences of diagrams establish a bijection between lifts through any discrete opfibration. However, these definitions are of little utility unless they can be verified in concrete situations. To that end, the rest of this section is dedicated to finding sufficient conditions for the pushforward property, and hence for weak equivalences, that can be checked mechanically in particular examples.

{\bf Initial functors.} A simple criterion, not the most general possible, for a diagram morphism to have the pushforward property is that the functor between indexing categories be \emph{initial} \cite[\S IX.3]{maclane1978}, \cite[\S 8.3]{riehl2014}. This important fact, which is a special case of \cref{thm:diagram-initial-opfibration} below, will not be immediately apparent from the definition:

\begin{definition}[Initial functor] \label{def:initial-functor}
A functor $R: \cat{J} \to \cat{J}'$ is \emph{initial} if, for every object $j' \in \cat{J}'$, the comma category $R/j'$ is nonempty and connected.
\end{definition}

\noindent Recall that the \emph{comma category} $R/j'$ has as objects, the pairs $(j,f')$, where $j \in \cat{J}$ and $f': Rj \to j'$ in $\cat{J}'$, and as morphisms $(j,f') \to (k,g')$, the morphisms $h: j \to k$ in $\cat{J}$ making the triangle commute:
\begin{equation*}
  \begin{tikzcd}
    Rj && Rk \\
    & {j'}
    \arrow["Rh", from=1-1, to=1-3]
    \arrow["{g'}", from=1-3, to=2-2]
    \arrow["{f'}"', from=1-1, to=2-2]
  \end{tikzcd}.
\end{equation*}
Also, recall that a category is \emph{connected} whenever any pair of objects can be joined by a finite zig-zag of morphisms.

\begin{example}[Weak equivalence from initial functor]
Recall from \cref{ex:heat-equation-morphisms} the morphism in $\DiagOp(\Sh_\R(M \times [0,\infty)))$
\begin{equation*}
    \left(
    \begin{tikzcd}
    	{C: \Omega_t^0} & {\dt C: \Omega_t^0} & {d\phi: \widetilde \Omega_t^3} \\
    	{dC: \Omega_t^1} & {} & {\phi: \widetilde \Omega_t^2}
    	\arrow["d"', from=1-1, to=2-1]
    	\arrow["{\partial_t}", from=1-1, to=1-2]
    	\arrow["k\star"', from=2-1, to=2-3]
    	\arrow["d"', from=2-3, to=1-3]
    	\arrow["{\star^{-1}}"', from=1-3, to=1-2]
    \end{tikzcd}
    \right)
    \quad\longrightarrow\quad
    \left(
    \begin{tikzcd}
    	{C: \Omega_t^0} & {\dt C: \Omega_t^0}
    	\arrow["{k \Delta}"', shift right=1, from=1-1, to=1-2]
    	\arrow["{\partial_t}", shift left=1, from=1-1, to=1-2]
    \end{tikzcd}
    \right)
\end{equation*}
between presentations of the heat equation, which is well-defined when $k$ is constant over the manifold $M$. The diagram morphism is strict and its map between indexing categories is initial, as can be checked directly from the definition. Therefore, by the result to be shown below, the morphism of diagrams is a weak equivalence of diagrams, as anticipated in \cref{ex:heat-equation-morphisms}.
\end{example}

The definition of an initial functor in terms of comma categories is concrete and often easy to check in particular situations, but provides little conceptual insight. Initial functors are motivated by their classical characterization as the functors that preserve limits under change of indexing category. Specifically, for any functor $R: \cat{J}' \to \cat{J}$ and diagram $D: \cat{J} \to \cat{C}$, there is a canonical morphism
\begin{equation} \label{eq:initial-functor-limits}
    \lim_{\cat{J}} D \to \lim_{\cat{J}'} D \circ R
\end{equation}
in $\cat{C}$, which exists whenever the limits involved do and results from applying \cref{prop:functorality-limits} to the morphism $(R, \id_{D \circ R}): (\cat{J}, D) \to (\cat{J}', D \circ R)$ in $\DiagOp(\cat{C})$.

\begin{proposition}[Initial functors preserve limits] \label{prop:initial-functor-limits}
A functor $R: \cat{J}' \to \cat{J}$ is initial if and only if, for any category $\cat{C}$ and diagram $D: \cat{J} \to \cat{C}$, the canonical morphism \eqref{eq:initial-functor-limits} is an isomorphism whenever the limits involved exist.

The same statement holds when the category $\cat{C}$ is specialized to $\Set$.
\end{proposition}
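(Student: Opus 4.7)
The plan is to prove both directions and note that the ``Set suffices'' part of the statement follows because the converse direction will only require $\cat{C} = \Set$.

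For the forward direction, I would exhibit an explicit inverse to the cone-restriction map. Given a cone $\lambda': \Delta_S \To D \circ R$ with apex $S$, I would define a candidate cone $\lambda: \Delta_S \To D$ as follows: for each $j \in \cat{J}$, pick any object $(j', f: Rj' \to j) \in R/j$ (possible since $R/j$ is nonempty) and set $\lambda_j := Df \circ \lambda'_{j'}$. Well-definedness requires that this be independent of choice, which I would verify by showing first that any morphism $h: (j', f) \to (k', g)$ in $R/j$ gives equal values: $Dg \circ \lambda'_{k'} = Dg \circ DRh \circ \lambda'_{j'} = D(g \circ Rh) \circ \lambda'_{j'} = Df \circ \lambda'_{j'}$, using the naturality of $\lambda'$ at $h$. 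Connectedness of $R/j$ then extends this to zig-zags. Naturality of $\lambda$ at a morphism $\phi: j \to k$ follows from choosing $(j',f) \in R/j$ and observing $(j', \phi \circ f) \in R/k$, whence $\lambda_k = D(\phi \circ f) \circ \lambda'_{j'} = D\phi \circ \lambda_j$. Since $\lambda' \mapsto \lambda$ is clearly inverse to restriction, applying the bijection of cones at $S = \lim_{\cat{J}'} D \circ R$ produces an inverse to the canonical morphism.

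For the backward direction, I would specialize to $\cat{C} = \Set$ and leverage adjunctions. The hypothesis is that the canonical natural transformation $\lim_{\cat{J}} \To \lim_{\cat{J}'} \circ R^*$ between functors $\Set^{\cat{J}} \to \Set$ is a natural isomorphism, where $R^*$ denotes pre-composition with $R$. Since $\lim_{\cat{J}}$ is right adjoint to the constant-diagram functor $\Delta: \Set \to \Set^{\cat{J}}$, while $\lim_{\cat{J}'} \circ R^*$ is right adjoint to $R_! \circ \Delta: \Set \to \Set^{\cat{J}}$ (where $R_!$ is left Kan extension along $R$), uniqueness of adjoints produces a natural isomorphism $\Delta \cong R_! \circ \Delta$ on the nose. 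Evaluating at the singleton $1 \in \Set$ and applying the pointwise formula for Kan extension,
\begin{equation*}
    1 = \Delta_1(j) \cong (R_! \Delta_1)(j) = \colim_{(j', f) \in R/j} 1 = \pi_0(R/j),
\end{equation*}
for every $j \in \cat{J}$. Thus each $R/j$ has exactly one connected component, i.e., is nonempty and connected, which is the definition of $R$ being initial.

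The main obstacle is a subtle bookkeeping check in the backward direction: one must confirm that the natural isomorphism arising from the hypothesis is indeed the \emph{canonical} one, so that the adjoint correspondence applies to identify the two left adjoints and hence to identify $\Delta_1$ with $R_! \Delta_1$. Once this is in place, the pointwise formula does the work. A more elementary, if less slick, alternative would avoid Kan extensions by constructing, for each $j \in \cat{J}$, a specific $\Set$-valued diagram whose two limits compute the cardinality and connectivity of $R/j$ (e.g.\ the $\cat{J}$-set associated to a suitable discrete opfibration over $\cat{J}$, in the spirit of \cref{sec:lifting-properties}) and applying the hypothesis directly; this trades abstraction for a case analysis separating nonemptiness from connectedness.
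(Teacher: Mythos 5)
Your argument is correct. Note first that the paper does not actually prove this proposition: it explicitly defers to Mac Lane (Theorem IX.3.1) and Riehl (Lemma 8.3.4), so there is no in-paper proof to compare against line by line. Your forward direction is the classical argument those references give: the explicit inverse to cone-whiskering, with well-definedness secured by nonemptiness plus connectedness of each $R/j$ and naturality checked via the observation that $(j',f)\in R/j$ yields $(j',\phi\circ f)\in R/k$; the only step left implicit is the routine verification (via the universal property) that the resulting bijection of cones at apex $\lim_{\cat{J}'} D\circ R$ really inverts the canonical comparison map, which is fine for a sketch. Your backward direction is the slicker, more modern route (closer in spirit to Riehl's treatment than to Mac Lane's hands-on construction of test diagrams in $\Set$): passing to left adjoints and computing $R_!\Delta_1$ pointwise as $\colim_{R/j} 1 = \pi_0(R/j)$ packages nonemptiness and connectedness into the single condition $\pi_0(R/j)\cong 1$. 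One remark: the ``obstacle'' you flag at the end is not actually an obstacle. The componentwise canonical morphisms do assemble into a natural transformation $\lim_{\cat{J}}\Rightarrow\lim_{\cat{J}'}\circ R^*$ (naturality in $D$ follows from the universal property of limits), so the hypothesis gives a natural isomorphism of right adjoints; and for your conclusion you only need \emph{some} natural isomorphism $\Delta_1\cong R_!\Delta_1$, since any bijection $1\cong\pi_0(R/j)$ already forces $R/j$ to be nonempty and connected. Hence no identification of the induced isomorphism of left adjoints with a ``canonical'' one is required, and the elementary fallback you describe is unnecessary.
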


\noindent The proof of this proposition, which is not essential for our purposes, can be found in \cite[Theorem IX.3.1]{maclane1978} or \cite[Lemma 8.3.4]{riehl2014}. However, given the connection between lifts and cones (\cref{prop:lifts-and-cones}), the result is suggestive of why initial functors are relevant to the lifting properties of diagram morphisms.

A useful sufficient condition for a functor to be initial is:

\begin{lemma}[Full initial functors] \label{lem:full-initial-functor}
Let $R: \cat{J} \to \cat{J}'$ be a full functor. Then, for $R$ to be initial, it is enough that the comma category $R/j'$ be nonempty and connected for every object $j \in \cat{J}'$ not in the essential image of $R$.

In particular, any full and essentially surjective functor is initial.
\end{lemma}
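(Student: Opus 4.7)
The plan is to verify that the extra condition on essential-image objects holds for free when $R$ is full, so that only the non-essential-image case needs to be checked by hand. Concretely, I would fix an object $j' \in \cat{J}'$ that lies in the essential image of $R$ and show directly that $R/j'$ is nonempty and connected.

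For nonemptiness: since $j'$ is in the essential image, there exist $j \in \cat{J}$ and an isomorphism $\phi: Rj \simto j'$, so the pair $(j, \phi)$ is an object of $R/j'$. For connectedness: given any other object $(j_1, f_1)$, with $f_1: Rj_1 \to j'$, consider the composite $\phi^{-1} \cdot f_1: Rj_1 \to Rj$ in $\cat{J}'$. Fullness of $R$ produces a morphism $g_1: j_1 \to j$ in $\cat{J}$ such that $Rg_1 = \phi^{-1} \cdot f_1$. Then $\phi \cdot Rg_1 = \phi \cdot \phi^{-1} \cdot f_1 = f_1$, which is exactly the commuting triangle required for $g_1$ to be a morphism $(j_1, f_1) \to (j, \phi)$ in the comma category. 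Hence every object of $R/j'$ is connected to the distinguished object $(j, \phi)$, and $R/j'$ is connected.

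Combining this with the hypothesis that $R/j'$ is nonempty and connected for every $j'$ outside the essential image of $R$ yields that $R/j'$ is nonempty and connected for \emph{every} $j' \in \cat{J}'$, i.e., $R$ is initial by \cref{def:initial-functor}. The final ``in particular'' statement follows immediately: if $R$ is essentially surjective then no objects lie outside the essential image, so the extra hypothesis is vacuous, and $R$ is initial.

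There is no serious obstacle here; the only thing to be careful about is the direction of the triangle defining morphisms in the comma category $R/j'$ (one needs $g_1$ so that $\phi \cdot Rg_1 = f_1$, not the other way round), which is handled by choosing the composite $\phi^{-1} \cdot f_1$ rather than $f_1 \cdot \phi^{-1}$ when invoking fullness. The argument relies crucially on $\phi$ being invertible (to form $\phi^{-1} \cdot f_1$) and on $R$ being full (to lift this morphism to $\cat{J}$); fullness of $R$ is essential and cannot be relaxed to, say, faithfulness.
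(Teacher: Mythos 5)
Your proof is correct and is essentially the paper's own argument: both exhibit $(j,\phi)$ as a weakly terminal object of $R/j'$ by using fullness to lift $Rj_1 \xrightarrow{f_1} j' \xrightarrow{\phi^{-1}} Rj$. One small notational caution: the paper writes composition with $\cdot$ in diagrammatic order, so the composite you want would be written $f_1 \cdot \phi^{-1}$ there, not $\phi^{-1} \cdot f_1$; the underlying morphism you use is the right one.
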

\begin{proof}
Suppose $j' \in \cat{J}'$ is in the essential image of $R$, so that we can find an object $k \in \cat{J}$ and isomorphism $m: Rk \xrightarrow{\cong} j'$. The category $R/j'$ is nonempty because it contains the object $(k, m)$. Suppose $(j, f')$ is another object of $R/j'$. Since $R$ is full, the composite
\begin{equation*}
    Rj \xrightarrow{f'} j' \xrightarrow{m^{-1}} Rk
\end{equation*}
is the image under $R$ of some morphism $f: j \to k$ in $\cat{J}$ and hence $f: (j,f') \to (k,m)$ is a morphism in $R/j'$. This shows that $(k,m)$ is weakly terminal in $R/j'$. In particular, the category $R/j'$ is connected.
\end{proof}

\begin{lemma}[Closure of initial functors] \label{lem:closure-initial-functor}
  \leavevmode
  \begin{enumerate}[(i),nosep]
    \item Composites of initial functors are initial.
    \item Equivalences of categories are initial.
  \end{enumerate}
  In particular, the initial functors comprise a wide subcategory of $\Cat$.
\end{lemma}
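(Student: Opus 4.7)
The plan is to handle part (ii) first, as it will be immediate from the preceding \cref{lem:full-initial-functor}: any equivalence of categories is full and essentially surjective, and is therefore initial by the ``in particular'' clause of that lemma.

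For part (i), I would choose between two routes. The conceptual route invokes \cref{prop:initial-functor-limits}, which characterizes initial functors as exactly those whose restriction preserves all existing limits. Given initial $R: \cat{J} \to \cat{J}'$ and $S: \cat{J}' \to \cat{J}''$ and any diagram $D: \cat{J}'' \to \cat{C}$ whose limit exists, chaining the two canonical isomorphisms supplied by initiality of $S$ and then of $R$ yields $\lim_{\cat{J}''} D \xrightarrow{\sim} \lim_{\cat{J}'} D \circ S \xrightarrow{\sim} \lim_{\cat{J}} D \circ S \circ R$. A brief check using the functoriality of $\lim$ from \cref{prop:functorality-limits} identifies this composite with the canonical comparison morphism associated with $SR$, whence $SR$ is initial.

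A more elementary route works directly from \cref{def:initial-functor}. Fixing $j'' \in \cat{J}''$, nonemptiness of $SR/j''$ is established by chaining: pick $(j', h) \in S/j''$ using initiality of $S$, then $(j, g) \in R/j'$ using initiality of $R$, producing $(j, h \cdot Sg) \in SR/j''$. For connectedness, given $(j_1, f_1), (j_2, f_2) \in SR/j''$, initiality of $S$ supplies a zig-zag in $S/j''$ connecting $(Rj_1, f_1)$ to $(Rj_2, f_2)$; at each vertex $(j'_i, h_i)$, connectedness of the appropriate $R/j'_i$ patches an interpolating sequence of arrows in $\cat{J}$, which then post-compose with $S$ and with the $h_i$'s to give a zig-zag in $SR/j''$. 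Each patched arrow $m: k \to k'$ satisfies the required triangle condition in $SR/j''$ by functoriality of $S$ applied to the defining triangle $R(m) \cdot \alpha' = \alpha$ in $R/j'_i$, together with the relations $S(\ell_i) \cdot h_{i+1} = h_i$ defining the ambient zig-zag in $S/j''$.

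The wide-subcategory conclusion will then follow because identity functors are initial: $\id_{\cat{J}}/j$ has $(j, \id_j)$ as a terminal object and so is nonempty and connected (this also follows from (ii), since identities are equivalences). The main obstacle in the elementary approach is bookkeeping in the connectedness step --- one must carefully track whether to apply connectedness in $R/j'_i$ or $R/j'_{i+1}$ depending on the orientation of each arrow in the $S/j''$ zig-zag, and verify the triangle conditions at every stage. The limit-based proof sidesteps this at the cost of checking compatibility of the canonical comparison morphisms under composition.
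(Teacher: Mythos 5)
Your proposal is correct. Part (ii) is handled exactly as in the paper: both you and the authors deduce it from \cref{lem:full-initial-functor} via fullness and essential surjectivity of equivalences. For part (i) the paper does not actually give an argument: it cites Mac Lane's Exercise IX.3.2 and observes that the statement is the special case $\rho = \id$, $\sigma = \id$ of \cref{lem:closure-relatively-initial-map}, which is proved later by precisely the kind of zig-zag patching you describe in your ``elementary route.'' So your elementary route is essentially a de-generalized version of the paper's real proof, and your sketch of the connectedness step (connect $(Rj_1,f_1)$ to $(Rj_2,f_2)$ in $S/j''$, then at each edge of that zig-zag interpolate using connectedness of the relevant $R/y$) is the standard and correct bookkeeping. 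Your ``conceptual route'' via \cref{prop:initial-functor-limits} is a genuinely different argument not taken by the paper; it works, but note one small gap: as stated, the proposition only gives an isomorphism ``whenever the limits involved exist,'' so chaining $\lim_{\cat{J}''} D \to \lim_{\cat{J}'} D \circ S \to \lim_{\cat{J}} D \circ S \circ R$ in an arbitrary $\cat{C}$ presupposes that the intermediate limit $\lim_{\cat{J}'} D \circ S$ exists. This is repaired by using the $\Set$-specialized form of the proposition (which the paper explicitly records), since all small limits exist in $\Set$ and verifying the limit-preservation criterion there suffices for the converse direction. The limit-based route buys brevity at the cost of this completeness hypothesis; the comma-category route is self-contained and is what generalizes to the relatively initial setting the paper needs later.
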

\begin{proof}
  Part (i) is \cite[Exercise IX.3.2]{maclane1978}. It is also a special case of \cref{lem:closure-relatively-initial-map} proved below. Part (ii) follows from from \cref{lem:full-initial-functor}, since equivalences of categories are full and essentially surjective.
\end{proof}

{\bf Relatively initial morphisms of diagrams.} The condition that a morphism of diagrams have an initial functor between the indexing categories is stronger than it needs to be, as it does not even make reference to the diagrams themselves or the transformation between them. This defect is remedied by the following definition, which seems to be original.

\begin{definition}[Relatively initial map]
Let $\cat{C}$ be a category and $(R,\rho): (\cat{J}, D) \to (\cat{J}',D')$ be a morphism in $\Diag(\cat{C})$.

For any $j' \in \cat{J}'$, define a category $(R,\rho)/j'$ that has as objects, the pairs $(j,f')$, where $j \in \cat{J}$ and $f': Rj \to j'$ in $\cat{J}'$, and as morphisms $(j,f') \to (k,g')$, the morphisms $h:j \to k$ in $\cat{J}$ making the pentagon commute:
\begin{equation*}
    \begin{tikzcd}
    	Dj && Dk \\
    	{D'Rj} && {D'Rk} \\
    	& {D'j'}
    	\arrow["{\rho_j}"', from=1-1, to=2-1]
    	\arrow["Dh", from=1-1, to=1-3]
    	\arrow["{\rho_k}", from=1-3, to=2-3]
    	\arrow["{D'f'}"', from=2-1, to=3-2]
    	\arrow["{D'g'}", from=2-3, to=3-2]
    \end{tikzcd}
\end{equation*}
Composition and identities in $(R,\rho)/j'$ are defined as in $\cat{J}$, and so there is a projection functor $(R,\rho)/j' \to \cat{J}$.

We say that the functor $R: \cat{J} \to \cat{J}'$ is \emph{initial relative to $\rho$} or that the diagram morphism $(R,\rho)$ is \emph{relatively initial} if, for every object $j' \in \cat{J}'$, the category $(R,\rho)/j'$ is nonempty and connected.
\end{definition}

For any $j' \in \cat{J}'$, the category $(R,\rho)/j'$ has the same objects as the comma category $R/j'$ but generally has more morphisms, since if $h: j \to k$ is a morphism from $(j,f')$ to $(k,g')$ in $R/j'$, then the naturality of $\rho$ gives a commuting diagram
\begin{equation*}
    \begin{tikzcd}
    	Dj && Dk \\
    	{D'Rj} && {D'Rk} \\
    	& {D'j'}
    	\arrow["{\rho_j}"', from=1-1, to=2-1]
    	\arrow["Dh", from=1-1, to=1-3]
    	\arrow["{\rho_k}", from=1-3, to=2-3]
    	\arrow["{D'f'}"', from=2-1, to=3-2]
    	\arrow["{D'g'}", from=2-3, to=3-2]
    	\arrow["{D'Rh}", from=2-1, to=2-3]
    \end{tikzcd}
\end{equation*}
and hence a morphism from $(j,f')$ to $(k,g')$ in $(R,\rho)/j'$. Thus, being relatively initial is weaker than being initial, and any extension of an initial functor $R: \cat{J} \to \cat{J}'$ to a morphism $(R,\rho): (\cat{J},D) \to (\cat{J}',D')$ in $\Diag(\cat{C})$ is relatively initial. On the other hand, initiality is formally a special case of relative initiality, since a functor $R: \cat{J} \to \cat{J}'$ is initial if and only $R$ is initial relative to the identity $\id_R$, i.e., if and only if the morphism $(R, \id_R): (\cat{J}, R) \to (\cat{J}', \id_{\cat{J}'})$ in $\Diag(\cat{J}')$ is relatively initial.

\begin{example}[Relatively initial but not initial]
Inspired by the Lie advection equation of \cref{ex:advection}, let $\bm v$ be a vector field on a $n$-dimensional smooth manifold $M$. The Lie derivative with respect to $\bm v$ satisfies $\mathcal{L}_{\bm v} = d \circ \iota_{\bm v}$ when restricted to densities on $M$ (twisted $n$-dimensional forms on $M$). Thus, there is a strict morphism in $\Diag(\Sh_\R(M))$
\begin{equation*}
    \left(
    \begin{tikzcd}
    	{\widetilde\Omega^n} \\
    	{\widetilde\Omega^n}
    	\arrow["{\mathcal{L}_{\mathbf{v}}}"', from=1-1, to=2-1]
    \end{tikzcd}
    \right)
    \qquad\to\qquad
    \left(
    \begin{tikzcd}
    	{\widetilde\Omega^n} & {\widetilde\Omega^{n-1}} \\
    	{\widetilde\Omega^n}
    	\arrow["{\iota_{\mathbf{v}}}", from=1-1, to=1-2]
    	\arrow["d", from=1-2, to=2-1]
    	\arrow["{\mathcal{L}_{\mathbf{v}}}"', from=1-1, to=2-1]
    \end{tikzcd}
    \right)
\end{equation*}
between free diagrams, where the map between indexing categories is the obvious inclusion. Because the indexing categories are free, the map between them is not initial, but because the diagrams commute, the diagram morphism is relatively initial. The next theorem, the main result of this section, shows that the two diagrams are weakly equivalent, as one would expect.
\end{example}

\begin{theorem}[Lifting relatively initial morphisms of diagrams] \label{thm:diagram-initial-opfibration}
Any relatively initial morphism of diagrams has the pushforward property.

Consequently, if $\cat{C}$ is a category and $(R,\rho): D \to D'$ is a strong morphism in $\DiagOp(\cat{C})$ such that the morphism $(R,\rho^{-1}): D' \to D$ in $\Diag(\cat{C})$ is relatively initial, then $(R,\rho)$ is a weak equivalence of diagrams.
\end{theorem}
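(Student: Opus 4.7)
The plan is to verify the pushforward property by directly constructing, from a discrete opfibration $\pi : \cat{E} \to \cat{C}$ and a lift $\overline{D}$ of $D$ through $\pi$, a unique lift $(R, \overline{\rho}) : \overline{D} \to \overline{D}'$ of a relatively initial morphism $(R,\rho) : (\cat{J}, D) \to (\cat{J}', D')$ in $\Diag(\cat{C})$. The guiding idea exploits the fact that $\pi$ lifts morphisms out of specified objects: for each $j' \in \cat{J}'$, I would pick any representative $(j, f' : Rj \to j')$ in $(R,\rho)/j'$ and take $\overline{D}'(j')$ to be the codomain of the unique $\pi$-lift $\overline{\omega}_{j,f'} : \overline{D}(j) \to \bullet$ of the composite $\rho_j \cdot D'f' : Dj \to D'j'$.

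The crux of the proof, and the only place where relative initiality is invoked, is showing this definition is independent of the choice of $(j, f')$. Suppose $h : (j,f') \to (k, g')$ is a morphism in $(R,\rho)/j'$. The pentagon condition says $\rho_j \cdot D'f' = Dh \cdot \rho_k \cdot D'g'$ (in diagrammatic order), so the two morphisms $\overline{\omega}_{j,f'}$ and $\overline{D}(h) \cdot \overline{\omega}_{k,g'}$ in $\cat{E}$ share the domain $\overline{D}(j)$ and have equal images under $\pi$; the unique lifting property then forces them to coincide, so in particular their codomains agree. Nonemptiness of $(R,\rho)/j'$ supplies at least one representative, and propagating this identification along a zig-zag in the connected category $(R,\rho)/j'$ shows that $\overline{D}'(j')$ is well-defined.

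I would then extend $\overline{D}'$ to a functor: for each morphism $f : j' \to k'$ in $\cat{J}'$, define $\overline{D}'(f)$ as the unique $\pi$-lift of $D'f$ starting at $\overline{D}'(j')$. Picking any representative $(j, f')$ for $j'$, the identity $\overline{\omega}_{j, f'} \cdot \overline{D}'(f) = \overline{\omega}_{j, f' \cdot f}$, again forced by unique lifting, confirms that the codomain of $\overline{D}'(f)$ really is $\overline{D}'(k')$. Functoriality of $\overline{D}'$, the definition $\overline{\rho}_j := \overline{\omega}_{j, \id_{Rj}}$, and the naturality of $\overline{\rho}$ then all reduce to routine applications of unique lifting, with naturality invoking precisely the naturality of $\rho$. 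Uniqueness of the whole lift $(R, \overline{\rho})$ follows by the same token: any competing lift has each component pinned down by $\pi$-uniqueness, propagating from the $\overline{\rho}_j$'s to the rest of $\overline{D}'$. The ``consequently'' clause is then immediate from \cref{def:weak-equivalence} applied to the first part.

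The main obstacle I anticipate is bookkeeping rather than conceptual depth. One must track morphism directions along a general zig-zag in $(R,\rho)/j'$, since the pentagon condition is asymmetric in $(j,f')$ and $(k,g')$, and organize the various unique-lifting arguments for functoriality and naturality so that they do not circle back on themselves. The conceptual heart of the proof---that the pentagon condition is exactly what is needed to identify the two natural candidate lifts over $D'j'$---is fully visible in the single-edge step, after which everything else is propagation via the universal property of the opfibration.
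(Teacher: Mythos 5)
Your proposal is correct and follows essentially the same route as the paper's proof: define $\overline D'(j')$ via the unique $\pi$-lift associated to a representative of the nonempty category $(R,\rho)/j'$, use the pentagon condition plus uniqueness of lifts to show independence of the representative along a zig-zag (the only place connectedness and relative initiality enter), and then derive functoriality, naturality of $\overline\rho$, and uniqueness by routine unique-lifting arguments. The only cosmetic difference is that you lift the composite $\rho_j \cdot D'f'$ in one step, whereas the paper first lifts $\rho_j$ to $\overline\rho_j: \overline D j \to e_j$ and then lifts $D'f'$ from $e_j$; these agree by uniqueness of lifts.
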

\begin{proof}
Let $(R, \rho): (\cat{J}, D) \to (\cat{J}', D')$ in $\Diag(\cat{C})$ be a relatively initial morphism of diagrams in a category $\cat{C}$. We must show that, for any discrete opfibration $\pi: \cat{E} \to \cat{C}$ and any lift $\overline D$ of $D$ through $\pi$, there exists a unique morphism $(R, \overline\rho): (\cat{J}, \overline D) \to (\cat{J}', \overline D')$ in $\Diag(\cat{E})$ such that $\Diag(\pi)(R, \overline\rho) = (R, \rho)$.
\begin{equation*}
    \begin{tikzcd}[row sep=large, column sep=large]
    	{\mathsf{J}} & {\mathsf{E}} \\
    	{\mathsf{J}'} & {\mathsf{C}}
    	\arrow["R"', from=1-1, to=2-1]
    	\arrow[""{name=0, anchor=center, inner sep=0}, "{\overline D}", from=1-1, to=1-2]
    	\arrow[""{name=1, anchor=center, inner sep=0}, "{\overline D'}"', dashed, from=2-1, to=1-2]
    	\arrow["\pi", from=1-2, to=2-2]
    	\arrow["{D'}"', from=2-1, to=2-2]
    	\arrow["\overline\rho"', shorten <=2pt, shorten >=2pt, Rightarrow, dashed, from=0, to=1]
    \end{tikzcd}
    \qquad=\qquad
    \begin{tikzcd}[row sep=large, column sep=large]
    	{\mathsf{J}} \\
    	{\mathsf{J}'} & {\mathsf{C}}
    	\arrow["R"', from=1-1, to=2-1]
    	\arrow[""{name=0, anchor=center, inner sep=0}, "{D'}"', from=2-1, to=2-2]
    	\arrow[""{name=1, anchor=center, inner sep=0}, "D", from=1-1, to=2-2]
    	\arrow["\rho"', shorten <=2pt, shorten >=2pt, Rightarrow, from=1, to=0]
    \end{tikzcd}
\end{equation*}

First, we prove that the lifts $\overline{D}'$ and $\overline\rho: \overline D \To \overline{D}' \circ R$ are unique whenever they exist. Since $\pi: \cat{E} \to \cat{C}$ is a discrete opfibration, for each $j \in \cat{J}$, the object $\overline{D} j \in \cat{E}$ along with the morphism $\pi(\overline{D} j) = Dj \xrightarrow{\rho_j} D'Rj$ in $\cat{C}$ has a unique lift through $\pi$ to a morphism $\overline{D} j \xrightarrow{\overline\rho_j} e_j$ in $\cat{E}$, which must be the $j$ component of $\overline\rho$. This proves the uniqueness of $\overline\rho$. The diagram $\overline D'$ is determined on objects in the image of $R$ by the equations $\overline{D}' Rj = e_j$, $j \in \cat{J}$. Moreover, once $\overline{D}'$ is defined at an object $Rj$, it is determined on all morphisms in $\cat{J}'$ outgoing from $Rj$ (including their codomains), since $\overline{D}'$ must project to $D'$ and lifts through $\pi$ are unique. In particular, since $(R,\rho)/j'$ is nonempty for every $j' \in \cat{J}'$, the diagram $\overline D'$ is determined everywhere and so is unique if it exists.

It remains to show that both $\overline{D}'$ and and $\overline{\rho}$ actually exist, the proof of which fully uses the assumption that $(R,\rho)$ is relatively initial.

For any objects $j' \in \cat{J}'$ and $(j, f') \in (R,\rho)/j'$, the object $e_j \in \cat{E}$ along with the morphism $D'Rj \xrightarrow{D' f'} D'j'$ in $\cat{C}$ has a unique lift through $\pi$ to a morphism $e_j \xrightarrow{\overline f'} \bullet$ in $\cat{E}$. Note that, despite the notation, the lift $\overline f'$ depends not only on $f': Rj \to j'$ but also on $j$. However, if $f'$ is in the image of $R$, say $f' = Rf$ for $f: j \to k$ in $\cat{J}$, then the codomain of $\overline f'$ is $e_k$. To see this, note that the (possibly ill-formed) square
\begin{equation*}
    \begin{tikzcd}
    	{\overline{D} j} & {\overline{D} k} \\
    	{e_j} & {\bullet \overset{?}{=} e_k}
    	\arrow["{\overline\rho_j}"', from=1-1, to=2-1]
    	\arrow["{\overline{D} f}", from=1-1, to=1-2]
    	\arrow["{\overline\rho_k}", shift left=3, from=1-2, to=2-2]
    	\arrow["{\overline f'}", from=2-1, to=2-2]
    \end{tikzcd}
\end{equation*}
projects under $\pi: \cat{E} \to \cat{C}$ to the square
\begin{equation*}
    \begin{tikzcd}
    	Dj & Dk \\
    	{D'Rj} & {D'Rk}
    	\arrow["Df", from=1-1, to=1-2]
    	\arrow["{\rho_j}"', from=1-1, to=2-1]
    	\arrow["{\rho_k}", from=1-2, to=2-2]
    	\arrow["DRf", from=2-1, to=2-2]
    \end{tikzcd}.
\end{equation*}
But the latter square commutes, by the naturality of $\rho$, so the former square must also commute, by the uniqueness of lifts through $\pi$. In particular, the codomain of $\overline f'$ is $e_k$.

With these preliminaries, we can define $\overline D'$ on objects. For each $j' \in \cat{J}'$, the category $(R,\rho)/j'$ is nonempty, so choose an object $(j,f')$. We wish to define $\overline{D}' j'$ to be the codomain of $\overline f'$, but first we must check that the codomain depends only on $j'$, not on the choice of $(j, f') \in (R,\rho)/j'$. If $h: j \to k$ is a morphism in $(R,\rho)/j'$ from $(j,f')$ to $(k,g')$, then letting $h' = Rh$, the diagram
\begin{equation*}
    \begin{tikzcd}[row sep=scriptsize, column sep=small]
    	{\overline{D} j} && {e_j} \\
    	{e_j} && {e_k} \\
    	& {\bullet \overset{?}{=} \bullet}
    	\arrow["{\overline\rho_j}", from=1-1, to=1-3]
    	\arrow["{\overline\rho_j}"', from=1-1, to=2-1]
    	\arrow["{\overline h'}", from=1-3, to=2-3]
    	\arrow["{\overline f'}"', from=2-1, to=3-2]
    	\arrow["{\overline g'}", from=2-3, to=3-2]
    \end{tikzcd}
\end{equation*}
projects under $\pi: \cat{E} \to \cat{C}$ to
\begin{equation*}
    \begin{tikzcd}[row sep=scriptsize, column sep=small]
    	Dj && {D'Rj} \\
    	{D'Rj} && {D'Rk} \\
    	& {D'j'}
    	\arrow["{\rho_j}", from=1-1, to=1-3]
    	\arrow["{\rho_j}"', from=1-1, to=2-1]
    	\arrow["{D'Rh}", from=1-3, to=2-3]
    	\arrow["{D'f'}"', from=2-1, to=3-2]
    	\arrow["{D'g'}", from=2-3, to=3-2]
    \end{tikzcd}
    \qquad\leftrightsquigarrow\qquad
    \begin{tikzcd}[row sep=scriptsize, column sep=small]
    	Dj && Dk \\
    	{D'Rj} && {D'Rk} \\
    	& {D'j'}
    	\arrow["Dh", from=1-1, to=1-3]
    	\arrow["{\rho_j}"', from=1-1, to=2-1]
    	\arrow["{\rho_k}", from=1-3, to=2-3]
    	\arrow["{D'f'}"', from=2-1, to=3-2]
    	\arrow["{D'g'}", from=2-3, to=3-2]
    \end{tikzcd},
\end{equation*}
where the squiggly arrow indicates that the latter two diagrams are equivalent. Now the last diagram commutes, by definition of morphisms in $(R,\rho)/j'$, so the first diagram also commutes, by uniqueness of lifts. This implies that $\cod \overline f'$ is constant on connected components of $(R,\rho)/j'$. But $(R,\rho)/j'$ is connected by assumption, so $\cod \overline f'$ is constant on all of $(R,\rho)/j'$. Thus, we may define $\overline{D}' j' := \cod \overline f'$.

Having defined it on objects, it remains to define the diagram $\overline{D}'$ on morphisms. Given a morphism $h': j' \to k'$ in $\cat{J}'$, let $\overline h'$ be the unique lift of $D'h'$ through $\pi$ with domain $\overline{D}' j'$ and then define $\overline{D}' h' := \overline h'$. We must check that the codomain of $\overline h'$ is $\overline{D}' k'$. Taking any object $(j,f')$ in $(R,\rho)/j'$ and letting $g' = f' \cdot h'$, the pair $(j,g')$ is an object of $(R,\rho)/k'$, so comparing the lifts $\overline{D}' Rj \xrightarrow{\overline f'} \overline{D}' j' \xrightarrow{\overline h'} \bullet$ and $\overline{D}' Rj \xrightarrow{\overline g'} \overline{D}' k'$ proves that $\cod \overline h' = \overline{D}' k'$, as required.

We have shown that the data of the functor $\overline{D}': \cat{J'} \to \cat{E}$ and the transformation $\overline\rho: \overline{D} \To \overline{D}' \circ R$ exist and are unique. The naturality of $\overline\rho$ follows from an earlier argument and the functorality of $\overline{D}'$ follows, as usual, from uniqueness of lifts.
\end{proof}

Street and Walters established a similar result many years ago \cite{street1973}. As part of proving that initial functors and discrete opfibrations form the left and right parts of an orthogonal factorization system~\cite{riehl2008}, called the \emph{comprehensive factorization}, they show that an extension-lifting problem
\begin{equation*}
    \begin{tikzcd}
    	{\sf{J}} & {\sf{E}} \\
    	{\sf{J}'} & {\sf{C}}
    	\arrow["R"', from=1-1, to=2-1]
    	\arrow["{\overline D}", from=1-1, to=1-2]
    	\arrow["\pi", from=1-2, to=2-2]
    	\arrow["{D'}"', from=2-1, to=2-2]
    	\arrow["{\overline D'}", dashed, from=2-1, to=1-2]
    \end{tikzcd}
\end{equation*}
in $\Cat$, where $R$ is an initial functor and $\pi$ is a discrete opfibration, has a unique solution, indicated by the dashed arrow \cite[Theorem 4]{street1973}. This is the special case of \cref{thm:diagram-initial-opfibration} where the transformation $\rho$ is an identity and the functor $R$ is initial.

To close the section, we show that relative initiality has the properties expected of a generalization of initiality. The analogue of \cref{prop:initial-functor-limits}, that initial functors preserve limits, is:

\begin{proposition}[Relatively initial maps preserve limits]
Suppose $\cat{C}$ is a category and $(R,\rho): D \to D'$ is a strong morphism in $\DiagOp(\cat{C})$ such that $(R,\rho^{-1}): D' \to D$ in $\Diag(\cat{C})$ is relatively initial. Then the pushforward by $(R,\rho)$ of cones over $D$ to cones of $D'$, defined by \cref{eq:diagram-hom-cones}, sends limits of $D$ to limits of $D'$.
\end{proposition}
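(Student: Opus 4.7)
Let $\lambda \colon \Delta_S \To D$ be a limit cone over $D$ with apex $S$, and let $\lambda'$ denote its pushforward, whose components are $\lambda'_{j'} = \lambda_{Rj'} \cdot \rho_{j'}$ for $j' \in \cat{J}'$. To show that $\lambda'$ is a limit of $D'$, I would take an arbitrary cone $\mu' \colon \Delta_T \To D'$ with apex $T \in \cat{C}$ and construct a unique morphism $u \colon T \to S$ satisfying $u \cdot \lambda'_{j'} = \mu'_{j'}$ for every $j' \in \cat{J}'$. The strategy is to use relative initiality to transport $\mu'$ backward to a cone $\mu$ over $D$, then apply the universal property of $\lambda$ to obtain $u$, and finally verify the factorization and uniqueness conditions.

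For each $j \in \cat{J}$, I would select some $(j', f) \in (R, \rho^{-1})/j$, which exists by nonemptiness, and define $\mu_j := \mu'_{j'} \cdot \rho^{-1}_{j'} \cdot Df$. The crucial check is well-definedness, which rests on connectedness of $(R, \rho^{-1})/j$: given a morphism $h \colon (j', f) \to (k', g)$ in this comma category, the pentagon condition reads
\begin{equation*}
    \rho^{-1}_{j'} \cdot Df = D'h \cdot \rho^{-1}_{k'} \cdot Dg,
\end{equation*}
so precomposing with $\mu'_{j'}$ and using $\mu'_{j'} \cdot D'h = \mu'_{k'}$ yields that the two candidate definitions of $\mu_j$ agree; an arbitrary zig-zag is handled by induction. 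Naturality of $\mu$ is immediate: for $\alpha \colon j \to k$ in $\cat{J}$, if $(j', f) \in (R, \rho^{-1})/j$ then $(j', f \cdot \alpha) \in (R, \rho^{-1})/k$, giving $\mu_k = \mu'_{j'} \cdot \rho^{-1}_{j'} \cdot D(f \cdot \alpha) = \mu_j \cdot D\alpha$.

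The universal property of the limit $\lambda$ then yields a unique $u \colon T \to S$ with $u \cdot \lambda_j = \mu_j$ for all $j \in \cat{J}$. Taking $(j', \id_{Rj'}) \in (R, \rho^{-1})/Rj'$ gives $\mu_{Rj'} = \mu'_{j'} \cdot \rho^{-1}_{j'}$, so $u \cdot \lambda'_{j'} = u \cdot \lambda_{Rj'} \cdot \rho_{j'} = \mu'_{j'} \cdot \rho^{-1}_{j'} \cdot \rho_{j'} = \mu'_{j'}$. For uniqueness, any $v \colon T \to S$ satisfying $v \cdot \lambda'_{j'} = \mu'_{j'}$ for all $j'$ must satisfy $v \cdot \lambda_{Rj'} = \mu_{Rj'}$ upon postcomposing with $\rho^{-1}_{j'}$; then for arbitrary $j$ with a chosen $(j', f) \in (R, \rho^{-1})/j$, naturality of $\lambda$ gives $\lambda_j = \lambda_{Rj'} \cdot Df$, whence $v \cdot \lambda_j = \mu_{Rj'} \cdot Df = \mu_j$, and $v = u$ by the universal property.

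The main obstacle I anticipate is the well-definedness step: it is the unique place where \emph{connectedness} (rather than mere nonemptiness) of the comma categories is needed, and it is also where the strong hypothesis on $(R, \rho)$ enters essentially, since $\rho^{-1}$ must exist and play a symmetric role to $\rho$ in the definition of $\mu_j$. The remainder is a routine chase through the cone axioms and naturality of $\rho$, closely paralleling the structure of the proof of \cref{thm:diagram-initial-opfibration}.
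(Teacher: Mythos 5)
Your proof is correct and follows essentially the same route as the paper's: both hinge on transporting a test cone over $D'$ backward to a cone over $D$ via $\mu_j := \mu'_{j'} \cdot \rho^{-1}_{j'} \cdot Df$ for a chosen $(j',f) \in (R,\rho^{-1})/j$, with connectedness of the comma categories guaranteeing well-definedness. The only difference is presentational: the paper packages the forward and backward transports as an isomorphism between the categories of cones over $D$ and $D'$ and then invokes preservation of terminal objects, whereas you verify the universal property of the pushed-forward limit cone directly.
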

\begin{proof}
First, we show that a relatively initial morphism $(R,\rho): (\cat{J}, D) \to (\cat{J}',D')$ in $\Diag(\cat{C})$ pushes forward cones over $D$ to cones over $D'$. Since cones over a diagram are the same as lifts of the diagram through a generalized element projection, this is a special case of \cref{thm:diagram-initial-opfibration}, but we give a self-contained proof to make the construction explicit.

Suppose $\lambda$ is a cone over $D$ with apex $x \in \cat{C}$. Define a cone $\lambda'$ over $D'$ with apex $x$ by choosing, for each $j' \in \cat{J}'$, an object $(j,f') \in (R,\rho)/j'$ and setting
\begin{equation*}
    \lambda_{j'}' := (x \xrightarrow{\lambda_j} Dj \xrightarrow{\rho_j} D'Rj \xrightarrow{D'f'} D'j').
\end{equation*}
To prove that $\lambda'$ is a cone, consider a morphism $h': j' \to k'$ in $\cat{J}'$ and let $(j,f')$ and $(k,g')$ be the objects of $(R,\rho)/j'$ and $(R,\rho)/k'$ defining the legs $\lambda_{j'}'$ and $\lambda_{k'}'$ of $\lambda'$. By assumption, the category $(R,\rho)/k'$ is connected, so we can find a zig-zag of morphisms in $(R,\rho)/k'$ connecting $(j, f' \cdot h')$ and $(k,g')$, which yields a commutative diagram:
\begin{equation*}
    \begin{tikzcd}
    	&& x \\
    	Dj & {D j_1} & \cdots & {D j_{n-1}} & Dk \\
    	{D'Rj} & {D'R j_1} & \cdots & {D'R j_{n-1}} & {D'R k} \\
    	& {D'j'} & {D'k'}
    	\arrow["{\lambda_j}"', from=1-3, to=2-1]
    	\arrow["{\lambda_{j_1}}"{description}, from=1-3, to=2-2]
    	\arrow["{D h_1}"', tail reversed, from=2-1, to=2-2]
    	\arrow["{\rho_j}"', from=2-1, to=3-1]
    	\arrow["{D'f'}"', from=3-1, to=4-2]
    	\arrow["{D'h'}"', from=4-2, to=4-3]
    	\arrow["{\rho_{j_1}}", from=2-2, to=3-2]
    	\arrow["{D' g_1'}"{description}, from=3-2, to=4-3]
    	\arrow["{\lambda_{j_{n-1}}}"{description}, from=1-3, to=2-4]
    	\arrow["{\rho_{j_{n-1}}}"', from=2-4, to=3-4]
    	\arrow["{\lambda_k}", from=1-3, to=2-5]
    	\arrow["{\rho_k}", from=2-5, to=3-5]
    	\arrow["{D' g_{n-1}'}"{description}, from=3-4, to=4-3]
    	\arrow["{D'g'}", from=3-5, to=4-3]
    	\arrow["{D h_n}"', tail reversed, from=2-4, to=2-5]
    	\arrow[tail reversed, from=2-2, to=2-3]
    	\arrow[tail reversed, from=2-3, to=2-4]
    \end{tikzcd}
\end{equation*}
Thus, using the definition of $\lambda'$, we obtain the required equation $\lambda_{j'}' \cdot D'h' = \lambda_{k'}'$.

As for the proposition itself, now suppose $(R,\rho): (\cat{J},D) \to (\cat{J}',D')$ is a strong morphism in $\DiagOp(\cat{C})$ such that $(R,\rho^{-1})$ in $\Diag(\cat{C})$ is relatively initial. In one direction, given a cone $\lambda$ over $D$ with apex $x$, a cone $\lambda'$ over $D'$ with the same apex $x$ is defined by
\[\lambda'_{j'} := (x \xrightarrow{\lambda_{Rj'}} DRj' \xrightarrow{\rho_{j'}} D'j').\]
In the other direction, given a cone $\lambda'$ over $D'$ with apex $x$, the above construction yields a cone $\lambda$ over $D$ with legs
\[\lambda_j := (x \xrightarrow{\lambda'_{j'}} D'j' \xrightarrow{\rho_{j'}^{-1}} DRj' \xrightarrow{Df} Dj)\]
for any choice of $(j',f) \in (R,\rho^{-1})/j$. Moreover, when $j$ is in the image of $R$, we can take $j'$ with $Rj' = j$ and $f = \id_j$, so that
\[\lambda_j = (x \xrightarrow{\lambda'_{j'}} D'j' \xrightarrow{\rho_{j'}^{-1}} DRj' = Dj).\]
It is then straightforward to show that these mappings define a bijection between cones over $D$ and $D'$ and in fact an isomorphism between the categories of cones over $D$ and $D'$. Since limits are terminal objects in the category of cones, they are preserved by the mappings.
\end{proof}

Finally, for completeness, we record the following closure properties.

\begin{lemma}[Closure of relatively initial maps] \label{lem:closure-relatively-initial-map}
\leavevmode
\begin{enumerate}[(i),nosep]
\item Composites of relatively initial morphisms of diagrams are relatively initial.
\item Isomorphisms of diagrams are relatively initial.
\end{enumerate}
In particular, for any category $\cat{C}$, the relatively initial morphisms of diagrams in $\cat{C}$ form a wide subcategory of $\Diag(\cat{C})$.
\end{lemma}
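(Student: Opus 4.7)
For part~(ii), the plan is to reduce to \cref{lem:closure-initial-functor}(ii) via the intermediate observation that, for any morphism $(R,\rho)\colon (\cat{J},D) \to (\cat{J}',D')$ in $\Diag(\cat{C})$ and every $j' \in \cat{J}'$, naturality of $\rho$ forces each morphism of the comma category $R/j'$ to be a morphism of $(R,\rho)/j'$, so $R/j'$ embeds into $(R,\rho)/j'$ as an identity-on-objects subcategory. Consequently, whenever $R$ is initial, $(R,\rho)$ is automatically relatively initial. Since any isomorphism of diagrams has $R$ an isomorphism of categories, hence an equivalence, hence initial by \cref{lem:closure-initial-functor}(ii), part~(ii) follows.

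For part~(i), fix relatively initial morphisms $(R,\rho)\colon (\cat{J},D) \to (\cat{J}',D')$ and $(S,\sigma)\colon (\cat{J}',D') \to (\cat{J}'',D'')$ with composite $(T,\tau) := (SR,\, (\sigma * R)\cdot \rho)$, and fix $j'' \in \cat{J}''$. Nonemptiness of $(T,\tau)/j''$ is a straightforward two-step chase: pick $(i',g'') \in (S,\sigma)/j''$ by relative initiality of $(S,\sigma)$, then $(i,h') \in (R,\rho)/i'$ by relative initiality of $(R,\rho)$, and observe that $(i,\, Sh'\cdot g'')$ lies in $(T,\tau)/j''$.

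The core work is connectedness, which I plan to establish by combining two elementary moves. The first move: for each $i' \in \cat{J}'$ and each $h\colon Si' \to j''$, the assignment $(j,r) \mapsto (j,\, Sr\cdot h)$, sending morphisms to themselves, will define a functor $(R,\rho)/i' \to (T,\tau)/j''$; checking the pentagon amounts to whiskering the source pentagon with $\sigma_{i'}$ and applying naturality of $\sigma$ at $r$, so the image of $(R,\rho)/i'$ in $(T,\tau)/j''$ is connected. The second move: given a morphism $\phi\colon (i_1',h_1) \to (i_2',h_2)$ in $(S,\sigma)/j''$ and any $(j,r) \in (R,\rho)/i_1'$, the identity $\id_j \in \cat{J}$ will realize a morphism $(j,\, Sr\cdot h_1) \to (j,\, S(r\cdot \phi)\cdot h_2)$ in $(T,\tau)/j''$; verification combines the pentagon for $\phi$ in $(S,\sigma)/j''$ with naturality of $\sigma$ at both $r$ and $\phi$, and a symmetric argument handles $\phi$ pointing the other way.

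To assemble the zig-zag, given $(j,f''), (k,g'') \in (T,\tau)/j''$, I will view $(Rj,f'')$ and $(Rk,g'')$ as objects of $(S,\sigma)/j''$ and invoke relative initiality of $(S,\sigma)$ to obtain a zig-zag between them through intermediate objects $(i_l',h_l)$. Applying the second move at each step of that zig-zag and the first move within each intermediate fiber $(R,\rho)/i_l'$ produces a zig-zag in $(T,\tau)/j''$ starting at $(j,\id_{Rj}) \in (R,\rho)/Rj$ (whose image is $(j,f'')$) and ending at $(k,\id_{Rk}) \in (R,\rho)/Rk$ (whose image is $(k,g'')$). The principal obstacle will be the bookkeeping in the second move: one must handle both possible directions of $\phi$ and keep track of how $r$ composes with $\phi$ (pre- or post-composition) in each case. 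Once those cases are laid out explicitly, the pentagons all reduce to routine chases of naturality squares for $\rho$ and $\sigma$ pasted with the pentagon for $\phi$.
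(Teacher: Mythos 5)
Your proof is correct and follows essentially the same route as the paper's: part (ii) is reduced to the initiality of equivalences via the observation that $R/j'$ embeds into $(R,\rho)/j'$, and for part (i) nonemptiness is the same two-step chase while connectedness is obtained by refining a zig-zag in $(S,\sigma)/j''$ through the connected, nonempty fibers $(R,\rho)/i'$, with naturality of $\sigma$ supplying the pentagon verifications. Your explicit ``two moves'' organization merely spells out the direction-of-$\phi$ and intermediate-fiber bookkeeping that the paper compresses into its ``without loss of generality'' reduction to a single morphism $g\colon Ri \to Ri'$.
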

\begin{proof}
Part (ii) follows from \cref{lem:closure-initial-functor} (ii), so we need only prove part (i).

Suppose $(\cat{I}, D) \xrightarrow{(R,\rho)} (\cat{J}, E) \xrightarrow{(S,\sigma)} (\cat{K}, F)$ is a composable sequence of relatively initial morphisms in $\Diag(\cat{C})$. We must show that for every $k \in \cat{K}$, the category $(R,\rho) \cdot (S,\sigma) / k$ is nonempty and connected. It is clearly nonempty, since we can choose an object $(j, Sj \xrightarrow{h} k)$ in $(S,\sigma)/k$ and then choose an object $(i, Ri \xrightarrow{g} j)$ in $(R,\rho)/j$ and so obtain an object
\begin{equation*}
    (i,\ SRi \xrightarrow{Sg} Sj \xrightarrow{h} k)
    \quad\text{in}\quad
    (R,\rho) \cdot (S,\sigma) / k.
\end{equation*}
As for connectnedness, consider two objects $(i, SRi \xrightarrow{h} k)$ and $(i', SRi' \xrightarrow{h'} k)$ in the category $(R,\rho) \cdot (S,\sigma) / k$. Since $(S,\sigma)/k$ is connected, we can find a zig-zag of morphisms connecting $(Ri, h)$ and $(Ri', h')$ in $(S,\sigma)/k$. Without loss of generality, assume that this zig-zag consists of a single morphism $g: Ri \to Ri'$ in $\cat{J}$. Now, since $(R,\rho) / Ri'$ is connected, we can find a zig-zag of morphisms connecting $(i, g)$ and $(i', \id_{Ri'})$ in $(R,\rho)/Ri'$, yielding a commutative diagram:
\begin{equation*}
    \begin{tikzcd}
    	Di & {D i_1} & \cdots & {D_{i_{n-1}}} & {Di'} \\
    	ERi & {ER i_1} && {ER i_{n-1}} & {ER_i'} \\
    	FSRi && {ERi'} \\
    	Fk && {FSRi'}
    	\arrow["Eg"', from=2-1, to=3-3]
    	\arrow["{Df_1}", tail reversed, from=1-1, to=1-2]
    	\arrow["{Df_n}", tail reversed, from=1-4, to=1-5]
    	\arrow["{\rho_{i_1}}", from=1-2, to=2-2]
    	\arrow["{Eg_1}", from=2-2, to=3-3]
    	\arrow["{E g_{n-1}}"', from=2-4, to=3-3]
    	\arrow[equal, from=2-5, to=3-3]
    	\arrow["{\rho_{i'}}", from=1-5, to=2-5]
    	\arrow["{\sigma_{Ri}}"', from=2-1, to=3-1]
    	\arrow["{\rho_{i_{n-1}}}"', from=1-4, to=2-4]
    	\arrow["{\rho_i}"', from=1-1, to=2-1]
    	\arrow["{\sigma_{Ri'}}"', from=3-3, to=4-3]
    	\arrow["Fh"', from=3-1, to=4-1]
    	\arrow["{Fh'}"', from=4-3, to=4-1]
    	\arrow[tail reversed, from=1-2, to=1-3]
    	\arrow[tail reversed, from=1-4, to=1-3]
    \end{tikzcd}.
\end{equation*}
Applying the naturality of the transformation $\sigma$ to the morphisms $g =: g_0, g_1, \dots, g_{n-1}$, an equivalent commutative diagram is:
\begin{equation*}
    \begin{tikzcd}
    	Di && {D i_1} & \cdots & {D_{i_{n-1}}} & {Di'} \\
    	ERi && {ER i_1} & \cdots & {ER i_{n-1}} & {ER_i'} \\
    	& FSRi & {FSR i_1} && {FSR i_{n-1}} \\
    	FSRi &&& Fk && {FSRi'}
    	\arrow["{D f_1}", tail reversed, from=1-1, to=1-3]
    	\arrow["{D f_n}", tail reversed, from=1-5, to=1-6]
    	\arrow["{\rho_{i_1}}", from=1-3, to=2-3]
    	\arrow["{\rho_{i'}}", from=1-6, to=2-6]
    	\arrow["{\rho_{i_{n-1}}}"', from=1-5, to=2-5]
    	\arrow["{\rho_i}"', from=1-1, to=2-1]
    	\arrow["Fh"', from=4-1, to=4-4]
    	\arrow["{Fh'}", from=4-6, to=4-4]
    	\arrow[tail reversed, from=1-3, to=1-4]
    	\arrow[tail reversed, from=1-5, to=1-4]
    	\arrow["{\sigma_{R i_1}}", from=2-3, to=3-3]
    	\arrow["{FSg_1}"{description}, from=3-3, to=4-4]
    	\arrow["{\sigma_{Ri'}}", from=2-6, to=4-6]
    	\arrow["{\sigma_{R i_{n-1}}}"', from=2-5, to=3-5]
    	\arrow["{FS g_{n-1}}"{description}, from=3-5, to=4-4]
    	\arrow["{\sigma_{Ri}}", from=2-1, to=3-2]
    	\arrow["FSg"{description}, from=3-2, to=4-4]
    	\arrow["{\sigma_{Ri}}"', from=2-1, to=4-1]
    \end{tikzcd}.
\end{equation*}
Recognizing the components of the transformation $\rho \circ (\sigma * R)$, we obtain a zig-zag connecting $(i,h)$ and $(i',h')$ in $(R,\rho) \cdot (S,\sigma) / k$.
\end{proof}

\section{Conclusions and future work} \label{sec:conclusion}

In this paper, we have undertaken to systematize, formalize, and generalize the diagrams employed by Tonti and others to present differential equations in physics. The mathematical foundation is provided by category-theoretic diagrams and morphisms between them, as well as extensions to diagrams involving cartesian products or tensor products. A wide variety of physical systems, from electromagnetism to transport phenomena to fluid mechanics, have been shown to fit into this framework. Morphisms of diagrams, not considered in the physics literature, have been used to formalize relationships between different systems, as well as to modularly construct complex, possibly multiphysical, systems from basic physical principles and experimental laws. On the theoretical side, we have explained how solving systems of equations or boundary value problems amounts to solving lifting or extension-lifting problems of diagrams, and we have studied in some detail how morphisms from both categories of diagrams interact with lifts of diagrams (\cref{thm:diagram-opfibration,thm:diagram-initial-opfibration}).

It should be noted that Tonti's attempted classification of physical theories, besides its distinctive use of diagrams, involves a rich ontology of concepts in physics. For example, Tonti classifies physical variables into source, configuration, and energy variables \cite[Chapter 5]{tonti2013} and physical equations into defining, topological, behavior, and phenomenological equations, each with further subclassifications~\cite[Chapter 6]{tonti2013}. No attempt has been made here to formalize these distinctions, and it is an interesting question whether or not this is possible.

Our proposed notion of a weak equivalence of two diagrams, which establishes a one-to-one correspondence between lifts of the diagrams, merits further study. We introduced a generalization of an initial functor, called a relatively initial morphism of diagrams, and showed that it leads to a sufficient condition for weak equivalence, but we have left open whether this is also a necessary condition, or whether our notion of weak equivalence even admits any characterization in terms of concrete, easily verified conditions. Moreover, the term ``weak equivalence'' is suggestive of methods from homotopy theory such as localization and calculus of fractions \cite{gabriel1967}. Preliminary investigations suggest that homotopical methods may be applicable to initial or relatively initial morphisms of diagrams. This would aid in understanding quotient categories of diagrams where weak equivalences are formally inverted, a construction not pursued here.

To sidestep the complexities of weak equivalence, one might seek ways of formalizing physical theories or systems of equations that are less sensitive than diagrams to differences of presentation. The category-theoretic approach to logic, pioneered by Lawvere \cite{lawvere1963}, reconstructs logical theories as algebraic objects, making them presentation invariant. It would be worthwhile to understand how physical theories formalized using either category-theoretic diagrams or categorical logic are related.

Finally, although this work has its origins in the partial differential equations of continuum physics, with most examples drawn from that subject, we emphasize that its foundation in category-theoretic diagrams is far more general. In \cref{ex:discrete-heat-equation,ex:discrete-dirichlet-problem}, difference equations were presented using diagrams in $\Vect_\R$, requiring only linear algebra, rather than differential geometry. In future work, we intend to explore how probabilistic and statistical models, such as structural equation models, can be presented by diagrams in suitable categories. We hope and expect that many further applications will be found.

\section*{Acknowledgments}

The authors thank David Jaz Myers for first noticing the important role of initial functors in this project, David Spivak for further discussion on that point, John Baez for helpful conversations about the differential-geometric setting and the connection between lifts and limits, Owen Lynch for helpful discussions about category actions defining lifting problems, and Jesus Arias, Maia Gatlin, and Clayton Kerce for working with us to apply these ideas to computational physics and multiphysics simulation. Authors Baas, Fairbanks, and Patterson acknowledge support from the DARPA Computable Models (CompMods) program through Award HR00112090067; Fairbanks also acknowledges support from the DARPA Young Faculty Award Program through Award W911NF2110323.

\section*{Conflict of interest}

The authors declare no conflict of interest.

\newpage
\bibliographystyle{aims}
\bibliography{refs}

\end{document}